\newcommand{\USW}{\texttt{USW}\xspace}
\newcommand{\NSW}{\texttt{NSW}\xspace}
\newcommand{\R}{\mathbb{R}}
\newcommand{\Z}{\mathbb{Z}}
\renewcommand{\cal}[1]{\mathcal{#1}}
\setlist{topsep=0.5ex,itemsep=0.1ex}
\newtheorem{theorem}{Theorem}[section]
\newtheorem{lemma}[theorem]{Lemma}
\newtheorem{claim}[theorem]{Claim}
\newtheorem*{theorem*}{Theorem}
\theoremstyle{definition}
\newtheorem{definition}[theorem]{Definition}
\newtheorem{obs}[theorem]{Observation}
\newtheorem{example}[theorem]{Example}
\title{Stable Matching under Matroid Rank Valuations}
\author[1]{Alon Eden}
\author[2]{Vignesh Viswanathan}
\author[2]{Yair Zick}
\affil[1]{Hebrew University of Jerusalem, Israel}
\affil[2]{University of Massachusetts, Amherst, USA}
\date{}
\begin{document}

\maketitle

\begin{abstract}
We study a two-sided matching model where one side of the market (hospitals) has combinatorial preferences over the other side (doctors). Specifically, we consider the setting where hospitals have matroid rank valuations over the doctors, and doctors have either ordinal or cardinal unit-demand valuations over the hospitals. While this setting has been extensively studied in the context of one-sided markets, it remains unexplored in the context of two-sided markets.

When doctors have ordinal preferences over hospitals, we present simple sequential allocation algorithms that guarantee stability, strategyproofness for doctors, and approximate strategyproofness for hospitals. When doctors have cardinal utilities over hospitals, we present an algorithm that finds a stable allocation maximizing doctor welfare; subject to that, we show how one can maximize either the hospital utilitarian or hospital Nash welfare. Moreover, we show that it is NP-hard to compute stable allocations that approximately maximize hospital Nash welfare.  
\end{abstract}

\section{Introduction}

Stable matching is a fundamental problem in the EconCS community. 
In the classical version, we have a set of $n$ doctors and $n$ hospitals, each with ordinal preferences over the other side.
The goal is to find a stable matching of doctors to hospitals; informally, a matching is stable if no doctor-hospital pair prefers each other to their assigned match. 
This problem has been studied since the 1950s and is well understood today. 
There exists a polynomial time algorithm (the Gale-Shapley algorithm) \citep{galeshapley1962og} that outputs a stable matching. Moreover, the set of stable matchings have a lattice structure \cite{knuth1997stablemarriage}; as a consequence, there exists a stable matching that provides all doctors their best possible outcome among all the stable matchings. 

Moving beyond the classical problem, several natural generalizations have been studied in the literature. 
One line of generalizations has been the many-to-one matching problem where a hospital can be matched with (or allocated to) multiple doctors; this is the model used by National Residency Matching Program \citep{roth1999residentmatching}. 
\citet{galeshapley1962og} propose a model motivated by college admissions where each hospital has a cardinality constraint upper bounding the number of doctors they can be matched with. However, if we assume that each hospital recruits several doctors, it is reasonable to assume that hospitals would want residents trained in a diverse set of specialties; therefore, typically, hospitals have preferences over \emph{sets} of doctors. Indeed, recent work builds on this model by adding structure 
to the acceptable sets of doctors for each hospital. In the CS community, \citet{fleiner2001matroidmatching} studies a model where each hospital has a matroid constraint dictating the set of doctors they can be allocated. In the economics community, \citet{hatfield2005matching} present a similar constraint based on the assumption that doctors are substitutes. 
For all these models, Gale-Shapley-style algorithms have been shown to output stable allocations. Moreover, these problems admit a similar lattice structure over the set of stable allocations, under the assumption that the preferences of each side over the other forms a total order; that is, there are no ties in the preferences. 

A second line of generalization considers ordinal preferences with ties and incompleteness. In most cases, a stable matching can be trivially computed by breaking ties arbitrarily. The goal of this line of research is instead to compute a stable matching that maximizes the matching size \citep{mcdermid2009stable,csaji2023matroidmatching,paluch2014approxmatching}. This problem is surprisingly NP-hard even in the classical one-to-one matching setting \citep{halldorson2007approxmatching,manlove2002approxmatching,iwama1999smti}. More generally, when preferences have ties and can be incomplete, stable matchings are no longer guaranteed to have a lattice structure. 

In this paper, we present and study a natural problem at the intersection of these two lines of research. Following the model of \citet{fleiner2001matroidmatching}, we assume there is a matroid constraint for each hospital that their allocated set of doctors must satisfy. 
However, we assume that there is no preference order that hospitals have over the doctors, and that the goal of each hospital is simply to maximize the number of doctors they are allocated subject to the matroid constraint. 
In more technical terms, we assume hospitals have {\em matroid rank valuations}. 
This class of preferences has recently gained popularity in the one-sided markets literature due to its expressivity and the fact that the matroid structure can be leveraged to design fair, efficient and strategyproof mechanisms \cite{babaioff2021EF,barman2021matroid,benabbou2020finding,viswanathan2022generalyankee,berczi2021matroidpricing}. \textit{This paper is the first to extensively study matroid rank valuations in two-sided markets.}

Mathematically, the class of matroid rank valuations is simple enough that the existing impossibility results for hospital strategyproofness do not apply~\citep{roth1985hospitalstrategyproofness}. 
However, this class of preferences does not admit the lattice structure that previous models exploited to understand stable matchings. This makes it an interesting class of preferences to study with many open questions. 
We focus on the design of \emph{strategyproof} mechanisms that output \emph{stable matchings} with high \emph{market efficiency}. 
We study two variants of doctor utilities, resulting in two fundamental research questions.
\begin{enumerate}[(Q1)]
    \item 
    Can we design strategyproof mechanisms that output stable matchings with high market efficiency, when doctors have ordinal preferences? \label{item:q1}
    \item 
    Can we compute stable allocations that maximize various welfare objectives 
    for either side of the market (or both simultaneously), when doctors have cardinal preferences? \label{item:q2}
\end{enumerate}
Both questions have been studied in most many-to-one matching models. 
\citet{roth1985hospitalstrategyproofness} shows that it is impossible to guarantee hospital strategyproofness when hospitals have ordinal preferences over the doctors along with a cardinality constraint. 
\citet{hatfield2005matching} present a doctor-strategyproof algorithm for a general class of preferences which outputs the optimal stable allocation for all the doctors\footnote{In the many-to-one matching model, when there is a lattice structure over the set of stable matchings, the optimal stable allocation for the set of doctors trivially maximizes the size of the matching.}; given the earlier impossibility result, this is the best result possible.

When the set of stable matchings has a lattice structure, the best outcome for one side is well-defined and can be computed by a variant of the Gale-Shapley algorithm (see \citep{hatfield2005matching} and \citep{fleiner2001matroidmatching} for examples). 
Moreover, this is the worst outcome for the other side. 
However, with no clear lattice structure, as in our model, it is unclear what objectives can be maximized subject to stability. 
Addressing these questions with respect to matroid rank valuations offers useful insight into stable allocations in the absence of a clear lattice structure, as well as provides tools to design mechanisms in these settings. 

\subsection{Our Contributions and Techniques}
\paragraph{Addressing \ref{item:q1}}
We present two mechanisms. 
Our first mechanism, called the high welfare serial dictatorship (HWSD), outputs a stable allocation that maximizes hospital utilitarian welfare and is strategyproof for the doctors. Moreover, when hospitals have binary OXS valuations, a subclass of matroid rank valuations \cite{paesleme2017gross}, the mechanism is  $2$-approximately strategyproof for the hospitals  (\Cref{thm:red-jacket}). 
Note that in our setting, hospital utilitarian welfare is equal to matching size, a notion of efficiency studied in prior work \citep{mcdermid2009stable,csaji2023matroidmatching}.

HWSD works as follows: doctors take turns picking a hospital. 
Each doctor picks the highest-ranked hospital that is willing to accept them, i.e., accepting a doctor will not violate the hospital's matroid constraint; however, this match is allowed only if there exists a \emph{completion} of the resulting partial allocation that maximizes hospital utilitarian welfare.    

Stability, doctor strategyproofness and max hospital utilitarian welfare follow straightforwardly from the description of the algorithm. 
The main non-trivial result we show for HWSD is the hospital strategyproofness guarantee. 
To show approximate hospital strategyproofness, we characterize the optimal misreport that any hospital can make. We use this to show that no hospital can gain significantly by misreporting. This broad proof structure has been used to show exact strategyproofness in both fair division \citep{babaioff2021EF} and many-to-one matching \citep{hatfield2005matching}. However, to show approximate strategyproofness, we analyze allocations via {\em reversible path augmentations} (Lemmas \ref{lem:fTreport} and \ref{lem:manip-half}). 
Reversible path augmentations build on the technique of path augmentations used in prior work \citep{babaioff2021EF} to show strategyproofness with matroid rank valuations in the one-sided setting. 
The key difference in our proof is that we need to take doctor preferences into account in the two-sided setting; this makes our analysis quite intricate. 

We also show that the same analysis can be used to prove \textit{exact hospital strategyproofness} of the HWSD mechanism when hospitals have \textit{binary capped additive valuations}. 

The second mechanism we study is the popular serial dictatorship (SD) mechanism. Here, doctors may pick any hospital that is willing to admit them, regardless of whether the resulting partial assignment can be completed to a utilitarian welfare optimal assignment. 
SD is strategyproof for doctors, $2$-approximately strategyproof for hospitals \textit{for general matroid rank valuations}, and $2$-approximately maximizes hospital utilitarian welfare (\Cref{thm:serial-dictatorship}). 
Here again, the most non-trivial result is showing approximate hospital strategyproofness. 
To prove this, we use the simplicity of the serial dictatorship mechanism to upper bound the change in the doctors hospitals are assigned due to misreports. 
The analysis involves a careful accounting of the set of hospitals who were denied a match with doctors due to some hospital misreporting, and the set of doctors whose allocation changed due to misreporting.
\paragraph{Addressing \ref{item:q2}.} We focus on both the utilitarian and the Nash welfare objectives for both hospitals and doctors. 
The HWSD mechanism described above efficiently finds a stable outcome that maximizes hospital (unconstrained) utilitarian welfare. 
We complement this result by showing, rather unfortunately, that finding a stable allocation that maximizes Nash welfare for hospitals is NP-hard for any finite approximation factor (\Cref{thm:nash_hardness}). 

Moving to cardinal doctor utilities, 
we present an efficient algorithm that outputs a stable allocation maximizing doctor utilitarian welfare (\Cref{lem:max-doctor-welfare-max-hospital-USW}) or doctor Nash welfare. 
Moreover, we show that subject to maximizing doctor utilitarian welfare (or doctor Nash welfare), we can efficiently maximize hospital Nash welfare (\Cref{thm:max-nash-welfare}). 
This is the most technically involved result of the paper. 

Maximizing doctor utilitarian welfare can be reduced to the problem of finding a max weight independent set at the intersection of two matroids. 
Maximizing hospital Nash welfare subject to maximizing doctor utilitarian welfare can be done via a local search algorithm. Most algorithms for matroid intersection use path augmentations in the {\em matroid exchange graph} \citep{brezovec1986matroidintersection,Chakrabarty2019MatroidIntersection}. 
The proof for our local search algorithm's correctness also follows from an analysis of the matroid exchange graph. Specifically, we show that we can find a set of cycles in the exchange graph where at least one such cycle is an augmenting path with some desirable properties. 
We then show that repeatedly finding these desirable cycles and augmenting along them eventually increases the hospital Nash welfare without reducing doctor utilitarian welfare. 
This analysis generalizes existing results in the fair division literature\footnote{Specifically, we generalize the computational result of \citet{babaioff2021EF} who show that a max Nash welfare allocation can be computed in the one-sided setting when agents have matroid rank valuations.}, and may be of independent interest.

\vspace{0.1in}

\paragraph{Main open question.} The main open question left by our work is \textit{``Does there exist a hospital-strategyproof mechanism that outputs a stable allocation when hospitals have matroid rank valuations?''}

\subsection{Additional Related Work}
Variants of the many-to-one matching problem have been studied in the literature, both by the computer science and the economics community. 
In the economics literature, the college admissions problem \citep{galeshapley1962og} was generalized to the job matching problem \citep{knoer1981jobmatching,kelso2982gs,roth1984gametheory}, and later generalized to the stable contract matching problem \citep{hatfield2005matching}. 
Both the job matching and the contract matching problem use some natural definitions of substitutes to constrain hospital preferences. 
For all three problems, Gale-Shapley style algorithms compute a stable matching. 
The college admissions and the job matching problems have also been generalized to the many-to-many matching setting, with similar results on the existence and computability of stable allocations \citep{roth1984polarizing,roth1985jobmatching}. 
Existing results on hospital strategyproofness (or lack thereof) and manipulability of the college admissions problem \citep{roth1985hospitalstrategyproofness,sonmez2004manytoone,sonmez1997manipulation} carry over to all the models described above.

The computer science community studies similar generalizations. 
The key difference being that the class of preferences used are described by mathematical objects (like matroids) rather than economic principles (like substitutability). 
\citet{fleiner2001matroidmatching} generalize the classical stable matching problem to the \emph{matroid kernel problem}.  The matroid kernel problem is a many-to-many matching problem where each hospital (resp. doctor) has ordinal preferences and a matroid constraint over the set of doctors (resp. hospitals). 
A Gale-Shapley style algorithm outputs stable allocations for this problem as well. 
This model has been further generalized to handle lower quotas on each hospital's bundle \citep{huang2010classified,yokoi2017polymatroid,fleiner2016lowerquota}, as well as ties in the preferences \citep{csaji2023matroidmatching,kamiyama2022superstable}. 

The problem of matching with ties in preferences has also received considerable attention. \citet{irving1994indifference} introduces the problem and provides definitions for stability in the presence of ties. 
\citet{manlove2002indifferencelattice} and \citet{roth1984gametheory} present examples where one-to-one matching instances do not have a hospital-optimal (weakly) stable matching; 
this provides evidence of the loss of structure in the problem when ties are introduced. \citet{manlove2002approxmatching} show that generalizing beyond ties to incomplete preferences with ties renders many previously easy problems NP-hard, even in the one-to-one matching setting.  
A specific problem of interest is finding the maximum size stable matching, with the best known approximation ratio being $1.5$ \citep{paluch2014approxmatching,halldorson2007approxmatching,mcdermid2009stable}.

Our problem can be seen as a special case of the \emph{matroid kernel with ties problem}. 
Two papers study the matroid kernel with ties  problem. 
\citet{csaji2023matroidmatching} present a $1.5$-approximation algorithm for the maximum size stable matching problem in this model, generalizing the $1.5$-approximation of \citet{mcdermid2009stable}, and show a matching lower bound. 
\citet{kamiyama2022superstable} studies the existence and computation of super-stable allocations in this model. Allocations are super-stable if moving any doctor to some hospital does not weakly improve both agents' outcomes; that is, it strictly worsens the outcome of one of the agents involved. 
Our focus is instead on weak stability (as defined by \citet{irving1994indifference}). Since hospitals do not have a preference order over doctors in our model, we believe this to be a more meaningful stability notion.

In recent independent work, \citet{aziz2025strategyproofmaximummatchingdichotomous} consider the many-one matching setting, and show that it is possible to compute a stable matching which maximizes the matching size and is strategyproof for all agents involved {\em when hospital have binary capped additive valuations}. In Corollary \ref{corr:red-jacket-SP-capped-additive}, we present a similar result. However, we note that their result is a little more general since they show that their algorithm satisfies additional properties like non-bossiness and Pareto optimality.

\section{Preliminaries}\label{sec:preliminaries}
We use $[k]$ to denote the set $\{1, 2, \dots, k\}$. Given a set $S$ and an element $d$, we use $S+d$ and $S-d$ to denote $S \cup \{d\}$ and $S \setminus \{d\}$ respectively. 

We have a set of $n$ {\em hospitals} $H = [n]$, and a set of $m$ {\em doctors} $D = \{d_1, \dots, d_m\}$. Each hospital $h\in H$ has a {\em matroid rank valuation} function $v_h: 2^D \rightarrow \Z$ over the set of doctors; for any set of doctors $S \subseteq D$, $v_h(S)$ denotes the value of the set of doctors $S\subseteq D$ according to the hospital $h$. We use $\Delta_h(S, d) = v_h(S+d) - v_h(S)$ to denote the \emph{marginal gain} of adding the doctor $d$ to the set $S$ according to  hospital $h$. A valuation function $v_h$ is a matroid rank function (MRF) if it satisfies the following three properties:
\begin{inparaenum}[(a)]
    \item $v_h(\emptyset) = 0$, 
    \item for each $S \subseteq D$, $d \in D \setminus S$, we have $\Delta_h(S, d) \in \{0, 1\}$, and 
    \item for each $S \subseteq T \subseteq D$, $d \in D \setminus T$, $\Delta_h(S, d) \ge \Delta_h(T, d)$.
\end{inparaenum}
MRFs have the following useful property. 

\begin{obs}[Matroid augmentation property~\cite{oxley2011matroid}] \label{obs:augmentation}
Let $v_h$ be a matroid rank function over a set $D$ of elements. For every $S,T\subseteq D$ such that $v_h(S)<v_h(T)$, there exists an element $d \in T \setminus S$ such that $\Delta_h(S, d) = 1$.
\end{obs}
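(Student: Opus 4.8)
The plan is to prove this by contradiction: assume that no element of $T \setminus S$ has marginal gain $1$ with respect to $S$, and derive $v_h(S) \ge v_h(T)$, contradicting the hypothesis. First I would record the monotonicity of $v_h$, which is immediate from property (b): since every marginal gain lies in $\{0,1\}$ it is nonnegative, so $v_h(A) \le v_h(B)$ whenever $A \subseteq B$. In particular $v_h(T) \le v_h(S \cup T)$, and this inequality is what will eventually collide with the assumed $v_h(S) < v_h(T)$.

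The core of the argument is to show that, under the contradiction hypothesis, adjoining all of $T \setminus S$ to $S$ leaves the value unchanged, i.e. $v_h(S \cup T) = v_h(S)$. I would enumerate $T \setminus S = \{e_1, \dots, e_k\}$ and build the chain $S = S_0 \subseteq S_1 \subseteq \cdots \subseteq S_k = S \cup T$ with $S_i = S_{i-1} + e_i$. By the diminishing-returns property (c), each marginal taken along this chain is bounded above by the corresponding marginal taken at $S$: $\Delta_h(S_{i-1}, e_i) \le \Delta_h(S, e_i)$. The hypothesis gives $\Delta_h(S, e_i) = 0$ for every $i$ (it is at most $0$ by assumption and at least $0$ by property (b)), so each chain marginal is squeezed to $0$. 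Telescoping then gives $v_h(S \cup T) - v_h(S) = \sum_{i=1}^{k} \Delta_h(S_{i-1}, e_i) = 0$.

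Combining the two steps yields $v_h(T) \le v_h(S \cup T) = v_h(S)$, contradicting $v_h(S) < v_h(T)$; hence some $d \in T \setminus S$ must satisfy $\Delta_h(S, d) = 1$, which is the only value other than $0$ permitted by property (b). I expect the one delicate point to be the telescoping step: one must invoke submodularity along the entire chain rather than only at $S$, because the hypothesis directly controls marginals taken at $S$, not at the intermediate sets $S_{i-1}$. Property (c) is precisely the bridge that transfers the vanishing of the marginals from $S$ to each $S_{i-1}$, so once the chain is set up the remaining computation is routine.
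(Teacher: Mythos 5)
Your proof is correct. Note that the paper itself gives no proof of this observation---it is cited as a standard fact from the matroid theory literature (\cite{oxley2011matroid})---so there is no in-paper argument to compare against. Your derivation is the standard one and uses exactly the three defining properties as stated in the preliminaries: monotonicity from property (b) gives $v_h(T) \le v_h(S \cup T)$, the diminishing-returns property (c) transfers the vanishing marginals at $S$ to every intermediate set $S_{i-1}$ in the chain, and telescoping forces $v_h(S \cup T) = v_h(S)$, contradicting $v_h(S) < v_h(T)$. The one point you flag as delicate (invoking (c) along the whole chain rather than only at $S$) is handled correctly, since each $e_i \notin S_{i-1}$ and $S \subseteq S_{i-1}$, so property (c) applies as stated.
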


We assume every doctor $d\in D$ has a strict and complete preference order $\succ_d$ over the set of hospitals $H$; we write $h_1 \succ_d h_2$ if the doctor $d$ prefers the hospital $h_1$ to $h_2$. We define cardinal preferences for doctors in Section \ref{sec:cardinal-utilities}. 

An {\em allocation}  $X = (X_0; X_1, \dots, X_n)$ is an $(n+1)$-partition of the set of doctors $D$. Each hospital $h$ is allocated $X_h$ and $X_0$ denotes the set of unallocated doctors. We use $X(d)$ to denote the hospital that doctor $d$ was allocated to in the allocation $X$. We use the term allocation instead of matching to differentiate our setting from the one-to-one matching setting. 
Given an allocation $X$, the {\em utility} of hospital $h$ is the value $v_h(X_h)$.

When analyzing the time complexity of our mechanisms, we assume we can efficiently make {\em value queries} to the hospital valuation functions. 
That is, we assume we can compute the value $v_h(S)$ for any hospital $h$ and set of doctors $S$ in polynomial time. 

As described in the introduction, we assume each hospital's matroid rank valuation function enforces a constraint over the set of doctors the hospital can be allocated. Specifically, we assume each hospital $h$ can only be allocated sets of doctors which correspond to independent sets in the matroid defined by the rank function $v_h$. This is equivalent to the notion of {\em non-redundancy} used in fair division \citep{benabbou2020finding}. 
\begin{definition}[Non-redundancy]
    An allocation $X$ is said to be {\em non-redundant} if for all $h \in H$, $v_h(X_h) = |X_h|$. A bundle $S \subseteq D$ is said to be non-redundant with respect to $v_h$ (for some hospital $h$) if $v_h(S) = |S|$.
\end{definition}

The main desideratum we require of our allocations is \emph{stability}. 
To define stability, we first define a \emph{blocking pair}. 
We use the same definition of blocking pair as \citet{hatfield2005matching} and \citet{csaji2023matroidmatching}.
\begin{definition}[Blocking Pair]\label{def:blocking-pair}
    Given an allocation $X$, a pair consisting of a set of doctors $S$ and hospital $h$ (denoted $(h, S)$) form a {\em blocking pair} if:
    \begin{enumerate}[(a)]
        \item $S$ is non-redundant with respect of $v_h$ and $v_h(S) > v_h(X_h)$, and 
        \item for all $d \in S$, $h \succeq_{d} X(d)$ with equality holding only if $h = X(d)$.
    \end{enumerate}
\end{definition}
In other words, the hospital $h$ \textit{strictly} prefers the set of doctors $S$ to their assigned bundle $X_h$, and the doctors in $S$ all \textit{weakly} prefer $h$ to their assigned hospital under $X$. 
This definition naturally extends to preference orders with ties as well. When doctors' preference orders have ties in them, the above definition requires the set of doctors in $S$ not already allocated to $h$ to strictly prefer $h$ to their current allocation $X(h)$.
\begin{definition}[Stability]\label{def:stability}
    An allocation $X$ is {\em stable} if it is non-redundant and has no blocking pair. 
\end{definition}
When agents have matroid rank functions, this condition can be simplified. 
We repeatedly use the following observation when proving stability guarantees (proof in Appendix \ref{apdx:prelims}).
\begin{restatable}{obs}{obsblocking} \label{obs:blocking}
Given a non-redundant allocation $X$, a blocking pair exists if and only if there exists some doctor $d$ and hospital $h$ such that $\Delta_h(X_h, d) = 1$ and $h \succ_d X(d)$.
\end{restatable}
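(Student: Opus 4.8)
The plan is to prove both directions of the equivalence, using the matroid augmentation property (\Cref{obs:augmentation}) for the forward direction and a direct construction for the reverse; in both cases the crucial step is to observe that the relevant doctor lies outside $X_h$.

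For the reverse direction ($\Leftarrow$), suppose there exist $d$ and $h$ with $\Delta_h(X_h, d) = 1$ and $h \succ_d X(d)$. I would propose the witness set $S = X_h + d$. First note that $h \succ_d X(d)$ forces $X(d) \neq h$, so $d \notin X_h$ and $|S| = |X_h| + 1$. Since $X$ is non-redundant, $v_h(X_h) = |X_h|$, and since $\Delta_h(X_h, d) = 1$ we get $v_h(S) = |X_h| + 1 = |S|$; hence $S$ is non-redundant with respect to $v_h$ and $v_h(S) > v_h(X_h)$, establishing condition (a) of \Cref{def:blocking-pair}. For condition (b), every doctor in $X_h$ is matched to $h$ (so the equality case holds legitimately with $h = X(d')$), and $d$ itself satisfies the strict preference $h \succ_d X(d)$. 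Thus $(h, S)$ is a blocking pair.

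For the forward direction ($\Rightarrow$), suppose $(h, S)$ is a blocking pair. Condition (a) gives $v_h(X_h) < v_h(S)$, so the matroid augmentation property (\Cref{obs:augmentation}) yields a doctor $d \in S \setminus X_h$ with $\Delta_h(X_h, d) = 1$. It then remains to verify $h \succ_d X(d)$. Since $d \notin X_h$, we have $X(d) \neq h$; combined with condition (b), which gives $h \succeq_d X(d)$ with equality permitted only when $h = X(d)$, the strictness and completeness of $\succ_d$ rule out the equality case and force $h \succ_d X(d)$. This produces exactly the pair required by the statement.

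I do not expect a genuine obstacle here, as the claim follows once the augmentation property is invoked; the only point demanding care is the bookkeeping around the clause ``equality holds only if $h = X(d)$.'' In both directions the decisive move is noting that the extracted doctor lies in $S \setminus X_h$ and so cannot already be matched to $h$, which is precisely what upgrades the weak preference $\succeq_d$ guaranteed by \Cref{def:blocking-pair} to the strict preference $\succ_d$ appearing in the observation.
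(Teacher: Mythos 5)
Your proposal is correct and follows essentially the same route as the paper: the matroid augmentation property extracts the witness doctor for the forward direction, and $S = X_h + d$ serves as the blocking set for the reverse direction. You are slightly more explicit than the paper in checking non-redundancy of $X_h+d$ and in noting that the extracted doctor lies in $S \setminus X_h$ (which is what upgrades $\succeq_d$ to $\succ_d$), but these are details the paper's proof implicitly relies on as well.
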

We are interested in stable allocations with welfare guarantees. 
An allocation $X$ maximizes hospital \emph{utilitarian welfare} (hospital-\USW for short) if it maximizes $\USW(X) = \sum_{h \in H} v_h(X_h)$. 
An allocation $X$ maximizes hospital \emph{Nash welfare} (hospital-\NSW for short) if it maximizes $\NSW(X)=\prod_{h \in H} v_h(X_h)$. 
An allocation $X$ is an $\alpha$ approximation to hospital utilitarian welfare for $\alpha\ge 1$ if  $\USW(X)\ge \max_{X'}\USW(X')/\alpha$.
We can define these welfare functions for doctors similarly, assuming doctors have cardinal utility functions.
When hospitals have matroid rank valuations, every stable allocation approximately maximizes hospital-\USW (proof in Appendix~\ref{apdx:prelims}).
\begin{restatable}{lemma}{lemgeneralwelfare} \label{lem:stable_welfare}
    When hospitals have matroid rank valuations, every stable allocation is a 2-approximation of the maximum hospital-\USW.
\end{restatable}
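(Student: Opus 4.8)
The plan is to fix a stable allocation $X$ and a hospital-\USW-maximizing allocation $X^\star$, and to show $\USW(X^\star) \le 2\,\USW(X)$. Since $X$ is stable it is non-redundant, so $v_h(X_h)=|X_h|$ for every hospital $h$ and $\USW(X)=\sum_{h\in H}|X_h|$ is exactly the number of doctors matched under $X$; this quantity is what I will ultimately charge against.

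The first step is a purely submodular estimate applied hospital by hospital. Every matroid rank function is monotone and submodular with $v_h(\emptyset)=0$, so adding the doctors of $X^\star_h \setminus X_h$ to $X_h$ one at a time and invoking diminishing marginal returns gives
\[
v_h(X^\star_h) \;\le\; v_h(X_h \cup X^\star_h) \;\le\; v_h(X_h) + \sum_{d \in X^\star_h \setminus X_h} \Delta_h(X_h, d).
\]
Because marginals lie in $\{0,1\}$, the right-hand sum counts exactly the set $A_h := \{d \in X^\star_h \setminus X_h : \Delta_h(X_h,d)=1\}$, so $v_h(X^\star_h) \le v_h(X_h) + |A_h|$. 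Summing over all hospitals yields $\USW(X^\star) \le \USW(X) + \sum_{h\in H}|A_h|$.

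The crux is to bound $\sum_{h\in H}|A_h|$ by $\USW(X)$, and this is precisely where stability enters. Fix a hospital $h$ and a doctor $d \in A_h$, so that $\Delta_h(X_h,d)=1$. By \Cref{obs:blocking}, stability rules out $h \succ_d X(d)$, forcing $X(d) \succeq_d h$; in particular $d$ must be matched under $X$, since an unmatched doctor would (by the convention that doctors rank every hospital above remaining unassigned) satisfy $h \succ_d X(d)$ and, together with $\Delta_h(X_h,d)=1$, would exhibit a blocking pair. Hence every doctor appearing in some $A_h$ is matched under $X$. Since the bundles $X^\star_h$ partition (a subset of) $D$, the sets $A_h$ are pairwise disjoint, giving $\sum_{h\in H}|A_h| = \big|\bigcup_{h\in H} A_h\big| \le \USW(X)$.

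Combining the two estimates gives $\USW(X^\star) \le \USW(X) + \USW(X) = 2\,\USW(X)$, so $X$ is a $2$-approximation of the maximum hospital-\USW. I expect the main obstacle to be the middle step, namely arguing that each improving doctor $d \in A_h$ is genuinely matched under $X$: this is the only place stability is used essentially, and it relies both on the modelling convention that doctors prefer any hospital to being unassigned and on the clean reformulation of stability provided by \Cref{obs:blocking}.
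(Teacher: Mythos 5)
Your proof is correct and follows essentially the same route as the paper: stability (plus the completeness of doctors' preference orders) forces the allocation to be maximal, and maximality under submodular valuations yields the factor of $2$. The only difference is that the paper delegates the second step to a known result of Lehmann et al., whereas you prove it directly via the marginal-gain decomposition and the disjointness of the sets $A_h$ — a valid, self-contained unpacking of that citation.
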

\emph{Strategyproofness} is a key desideratum in mechanism design.
We assume that agent valuations are private information, and they report them to the mechanism. 
We say that agent $i$ has an \emph{incentive to misreport} their preferences if there exists a valuation $\hat v$ such that if agent $i$ reports $\hat v$ instead of their true valuation $v_i$ the mechanism outputs an outcome that agent $i$ strictly prefers.
A mechanism is strategyproof if no agent has an incentive to misreport their preferences. 

We distinguish between hospital and doctor strategyproofness. 
A mechanism is hospital (resp. doctor) strategyproof (in short, hospital-SP/doctor-SP) if no hospital (resp. doctor) has an incentive to misreport their preferences.

While we present several doctor-SP mechanisms that yield a stable allocation, identifying mechanisms that are hospital-SP and output stable allocations turns out to be a more demanding task. 
We devise mechanisms to compute stable allocations that are only approximately hospital-SP.
More formally, a mechanism is $\alpha$-approximate hospital-SP for $\alpha\ge 1$, if for any hospital $h$, misreporting preferences results in a utility of at most $\alpha$ times the utility received under a truthful report.
This notion is sometimes called approximate incentive compatability \citep{troebst2024hz}.
\subsection{Preference Classes}
We also present some results for subclasses of matroid rank valuations. 
A valuation function $v_h$ is said to be {\em binary OXS} if there is an unweighted bipartite graph $G = (D \cup R, E)$ for some set $R$ such that for every $S \subseteq D$, $v_h(S)$ is equal to the size of the maximum size matching in the subgraph of $G$ induced by the set of vertices $S \cup R$. The set $R$ is sometimes called the set of {\em slots} for valuation function $v_h$. 

A valuation function $v_h$ is said to be {\em binary capped additive} if for some positive integer $b_i$ and all $S \subseteq D$, $v_h(S) = \min \{b_i, \sum_{d \in S} v_h(\{d\})\}$. 
The following hierarchy relates the three valuation classes studied in this paper:
\begin{align*}
    \text{binary capped additive} \subset \text{binary OXS} \subset \text{matroid rank}.
\end{align*}

\subsection{Impossibility for Hospital-\USW Maximizing Hospital-SP mechanisms}\label{sec:no-sp-usw}
The first result we present is a simple impossibility result. We show that no hospital-SP mechanism that outputs a stable allocation can be hospital-\USW maximizing. This example also highlights the fact that there may be no hospital-optimal (or doctor-optimal) stable allocation in our setting. 

\begin{example}\label{ex:impossibility}
The instance consists of two hospitals $\{h_1, h_2\}$ and three doctors $\{d_1, d_2, d_3\}$. The matroid rank valuation functions for the hospitals are given as follows:
\begin{align*}
    & v_{h_1}(S) = |S \cap \{d_2\}| + \min\{|S \cap \{d_1, d_3\}|, 1\}\\
    & v_{h_2}(S) = |S \cap \{d_3\}| + \min\{|S \cap \{d_1, d_2\}|, 1\}.
\end{align*}
The preferences of the doctors are given as follows: $d_1$ and $d_3$ prefer $h_1$ to $h_2$, and $d_2$ prefers $h_2$ to $h_1$.
There are only two hospital-\USW maximizing 
stable allocations in this instance denoted by $X$ and $Y$: 
(see \Cref{fig:matching-before-deviation}):
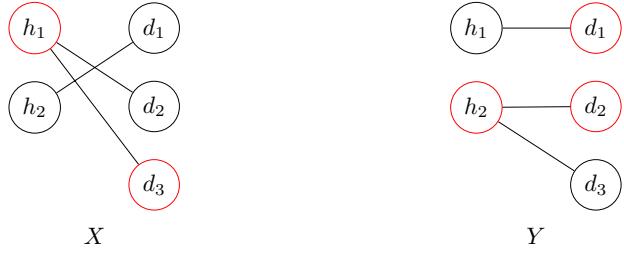
\begin{figure}
    \centering
    \resizebox{0.5\textwidth}{!}{%
    \begin{tikzpicture}[mycirc2/.style={circle,fill=white,minimum size=0.75cm,inner sep = 3pt},
    mycirc3/.style={circle,fill=white,draw=red,minimum size=0.55cm,inner sep = 3pt},
    mycirc/.style={circle,fill=white,draw=black,minimum size=0.55cm,inner sep = 3pt},
    BC/.style = {decorate,  
                     decoration={brace, amplitude=4mm, mirror},
                     thick, pen colour={black}
                    },
    ]
    \node[mycirc3] (l11){$h_1$};
    \node[mycirc, right=1cm of l11] (r11){$d_1$};
    \node[mycirc, below=0.4cm of l11] (l12){$h_2$};
    \node[mycirc, below=0.4cm of r11] (r12){$d_2$};
    \node[mycirc2, below=0.4cm of l12] (l13){};
    \node[mycirc3, below=0.4cm of r12] (r13){$d_3$};
    \path (l13)--(r13) node[midway, below=0.5cm] {$X$};
    \draw (l11)--(r12) (l11)--(r13) (l12)--(r11);
    \node[mycirc, right=4cm of r11] (l21){$h_1$};
    \node[mycirc3, right=1cm of l21] (r21){$d_1$};
    \node[mycirc3, below=0.4cm of l21] (l22){$h_2$};
    \node[mycirc3, below=0.4cm of r21] (r22){$d_2$};
    \node[mycirc2, below=0.4cm of l22] (l23){};
    \node[mycirc, below=0.4cm of r22] (r23){$d_3$};
    \path (l23)--(r23) node[midway, below=0.5cm] {$Y$};
    \draw (l21)--(r21) (l22)--(r22) (l22)--(r23); 
    \end{tikzpicture}
    }
    \caption{The only two stable hospital-welfare maximizing allocations. Red nodes are nodes who receive their best possible outcome, and have no incentive to deviate.}
    \label{fig:matching-before-deviation}
\end{figure}
\begin{align*}
    X_{h_1} &= \{d_2, d_3\} && X_{h_2} = \{d_1\} \\
    Y_{h_1} &= \{d_1\} && Y_{h_2} = \{d_2, d_3\}
\end{align*}
Note that if $h_2$ were to report the valuation $v'_{h_2}(S) = |S \cap \{d_2, d_3\}|$, then the only hospital-\USW maximizing stable allocation is $Y$. 
This is because
\begin{inparaenum}[(a)]
\item any stable allocation must allocate $d_2$ to $h_2$, and
\item $h_1$ accepts $d_1$ or $d_3$ but not both.
\end{inparaenum}
Using a similar argument, we can show that if $h_1$ were to report the valuation function $v'_{h_1}(S) = |S \cap \{d_2, d_3\}|$, then the only hospital-\USW maximizing stable allocation is $X$. 

Let $\cal A$ be a mechanism that always outputs a hospital-\USW maximizing 
stable allocation. On the instance described above, it must either output $X$ or $Y$. 
If it outputs $X$, then $h_2$ can deviate and become strictly better off. If it outputs $Y$, then $h_1$ can deviate and become strictly better off. Therefore, $\cal A$ cannot be strategyproof.
\end{example}

While this example does not completely rule out the possibility of hospital-SP mechanisms, it highlights the difficulty in the problem. The only other stable allocation for the instance above which is not $X$ or $Y$ is $Z$ where $Z_{h_1} = \{d_3\}$ and $Z_{h_2} = \{d_2\}$. In order to guarantee hospital strategyproofness, a mechanism must deliberately output the low hospital-\USW allocation $Z$ either when $h_1$ reports truthfully or when $h_1$ misreports. 
\section{Doctors with Ordinal Preferences}\label{sec:ordinal-utilities}

Our objective is to identify doctor/hospital strategyproof mechanisms that output stable allocations, with additional hospital-\USW guarantees. However, the counterexample in \Cref{sec:no-sp-usw} shows that the ideal version of this result cannot be achieved. 
Instead, we seek mechanisms that achieve these guarantees {\em approximately}. We present two such mechanisms, each providing a different trade-off between hospital-\USW and hospital strategyproofness.

\subsection{High Welfare Serial Dictatorship (HWSD)}

We present a mechanism that outputs a stable allocation, is doctor-SP and  obtains optimal hospital-\USW. The HWSD mechanism operates in a sequential fashion, where at each iteration, we finalize the allocation of one doctor. More specifically, at each iteration, we pick a doctor $d$ and allocate it to its highest ranked hospital $h$ that satisfies the following two conditions:
\begin{inparaenum}[(a)]
    \item the utility of hospital $h$ increases after it receives doctor $d$, and 
    \item the current allocation can be augmented to a hospital-\USW maximizing allocation.
\end{inparaenum}
The mechanism is presented in \Cref{algo:red-jacket}. Missing proofs of this section appear in \cref{apdx:hwsd}.

\begin{algorithm}
    \DontPrintSemicolon
    \caption{High Welfare Serial Dictatorship (HWSD) Mechanism}
    \label{algo:red-jacket}
    $X = (X_0, X_1, \dots, X_n) \gets (D, \emptyset, \dots, \emptyset)$\;
    \For{$i$ in $1$ to $m$}{
        \For{$j$ in $1$ to $n$}{
            Let $h$ be the $j$-th highest ranked hospital for doctor $d_i$\;
            \If{$\Delta_{h}(X_{h}, d_i) = 1$}{
                \If{there exists a Max hospital-\USW allocation $Y$ such that $X_{h'} \subseteq Y_{h'}$ for all $h' \ne h$ and $X_{h} + d \subseteq Y_h$}{
                    $X_{h} \gets X_{h} + d$\;
                    $X_0 \gets X_0 \setminus d$\;
                    Break\;
                }        
            }
        }
    }
    \Return $X$\;
\end{algorithm}

\begin{theorem}\label{thm:red-jacket}
The high welfare Serial Dictatorship (HWSD) mechanism has the following properties:
\begin{enumerate}[(i)]
    \item The mechanism is doctor-SP.
    \item The algorithm runs in polynomial time.
    \item The output allocation maximizes hospital-\USW.
    \item The output allocation is stable.
    \item The mechanism is $2$-approximately hospital-SP when hospitals have binary OXS valuations.
\end{enumerate}
\end{theorem}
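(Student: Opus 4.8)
The plan is to dispatch the four structural properties (i)--(iv) quickly and concentrate on the approximate hospital strategyproofness (v). For doctor strategyproofness, observe that doctors are processed in the fixed order $d_1,\dots,d_m$, so when $d_i$ is reached the partial allocation is independent of $d_i$'s report; $d_i$ is then matched to the highest-ranked hospital (according to its reported order) that is feasible, where ``feasible'' means conditions (a)--(b) hold and is a property of the current state alone. Reporting truthfully therefore secures $d_i$'s most preferred feasible hospital, and any misreport can only yield a feasible, hence weakly worse, hospital. Polynomial running time follows because condition (b) --- whether a partial allocation extends to a hospital-\USW maximizer --- is a matroid-union feasibility check: fix the committed assignments and verify that the maximum \USW of the residual instance equals the global optimum.

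For (iii) I maintain the invariant that after each iteration the current allocation $X$ extends to some max-\USW allocation; this holds initially and is preserved by condition (b). At termination $X$ extends to a max-\USW allocation $Y$ with $X_{h}\subseteq Y_{h}$ for all $h$. I claim $X=Y$: if some doctor $d$ is unmatched in $X$ but assigned to $h$ in $Y$, then at the iteration that processed $d$ the state $X^{\mathrm{prev}}$ satisfied $X^{\mathrm{prev}}\subseteq Y$, so $X^{\mathrm{prev}}_h+d\subseteq Y_h$ is independent (condition (a) holds) and $Y$ itself witnesses condition (b); hence $d$ would have been matched, a contradiction. For stability I invoke \Cref{obs:blocking}: suppose $\Delta_h(X_h,d)=1$ and $h\succ_d X(d)$. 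Since bundles only grow and valuations are submodular, $\Delta_h(X^{\mathrm{prev}}_h,d)=1$ already held when $d$ was processed, so $h$ was rejected only because condition (b) failed at that point. But in the final allocation, which is max-\USW by (iii), moving $d$ from $X(d)\ne h$ to $h$ preserves \USW and yields a max-\USW completion of $X^{\mathrm{prev}}_h+d$ (here one uses that $d\notin X^{\mathrm{prev}}_{X(d)}$ since $d$ was still unprocessed), contradicting the failure of (b).

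The substance is (v). I would first prove a canonical-deviation lemma: among all misreports available to hospital $h$, one maximizing $h$'s true utility can be taken to be an additive report $f_T(S)=|S\cap T|$ that ``demands'' a fixed target set $T\subseteq D$, each doctor at unit marginal value. The argument takes an arbitrary profitable misreport producing allocation $\hat X$, sets $T$ to be the doctors of $\hat X_h$ carrying true marginal value, and shows that re-running the mechanism under $f_T$ matches $h$ to a set at least as valuable --- intuitively, $f_T$ makes exactly the doctors $h$ cares about feasible while imposing the weakest completion constraints. With the deviation in canonical form, I then bound the gain. Writing $X$ for the truthful allocation and $\hat X$ for the $f_T$ allocation, I take a non-redundant witness $B\subseteq\hat X_h$ with $v_h(B)=|B|=v_h(\hat X_h)$ and construct, for each $d\in B\setminus X_h$, a \emph{reversible} path augmentation in the OXS exchange graph of $v_h$: a path realigning $d$ onto a slot freed by a doctor of $X_h$, whose induced reassignment is monotone with respect to every involved doctor's preference order, so that it is consistent with a truthful run of the serial dictatorship. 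The key structural claim is that these paths induce an injection from $B\setminus X_h$ into $X_h$, whence $|B|=|B\cap X_h|+|B\setminus X_h|\le 2|X_h|=2\,v_h(X_h)$.

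The main obstacle is precisely the reversibility step. In the one-sided matroid-rank setting an ordinary augmenting path yields \emph{exact} strategyproofness, since a freed slot can be rerouted without external constraints. In our two-sided model a doctor occupies only a hospital she weakly prefers, so an augmenting path is valid only when it is monotone with respect to the doctors' preference orders; such a monotone path need not exist, and when it does it may collide with another augmentation. Quantifying this collision --- showing the obstructions cost a factor of two rather than something unbounded --- is the crux, and is where the canonical-report characterization and the charging/reversibility argument do the real work. I expect the factor of two to be tight, matching the manipulation intuition underlying \Cref{ex:impossibility}.
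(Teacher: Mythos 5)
Your arguments for (i)--(iv) are correct and essentially identical to the paper's (your stability argument in particular --- moving $d$ to $h$ in the final max-\USW allocation to produce a witness for condition (b) at the iteration where $d$ was processed, using that $d\notin X^{\mathrm{prev}}_{X(d)}$ --- is exactly the proof in \Cref{apdx:hwsd}). For (v) you have also found the paper's two-step strategy: first reduce an arbitrary profitable misreport to a canonical additive report $f_T$ (\Cref{lem:fTreport}), then show that reporting $v_h$ truthfully loses at most a factor of $2$ against reporting $f_T$ (\Cref{lem:manip-half}), via disjoint alternating paths in the OXS slot graphs and a charging argument; combining the two gives \Cref{thm:red-jacket-strategyproofness}.

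However, for (v) the proposal names the two key claims without supplying the mechanism that proves them, and you yourself flag the reversibility/collision analysis as the unresolved crux. The missing ingredient is the comparison against the \emph{HWSD ordering}: the output of \Cref{algo:red-jacket} is the maximum, over all non-redundant allocations, of the lexicographic order ``higher hospital-\USW first, then earlier doctors get more preferred hospitals'' (\Cref{obs:hwsd-ordering}). Both lemmas are proved by transferring doctors along an alternating path of (hospital, slot) pairs so as to turn one run's output into a non-redundant allocation for the \emph{other} run's preference profile, observing that this transferred allocation must be strictly worse than that run's output in the HWSD ordering, and then \emph{reversing} the same path to derive the opposite comparison on the doctors off the path --- a contradiction. (Reversibility is guaranteed by representing $f_T$ with a subgraph of $G_h$ and choosing nested slot matchings.) In \Cref{lem:manip-half} this argument forces every one of the $k$ disjoint paths to \emph{fail} to produce a non-redundant allocation, which can only happen by routing a doctor of $Y_h\setminus T$ into $h$'s bundle, since $h$ is the only hospital whose reported valuation differs between the two runs; disjointness then yields the injection and hence $|Y_h|\ge |X_h|/2$. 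Without this ordering-based reverse-path argument, neither the claim that reporting $f_T$ yields exactly $T$ nor the claim that each path is charged to a distinct doctor of the truthful bundle is established, so as written the substance of (v) remains a gap even though the plan is the right one.
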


Due to space constraints, the proofs for (i)--(iv) are relegated to Appendix \ref{apdx:hwsd}. We only show the hospital-SP guarantees here. Our hospital-SP guarantees only apply when hospital valuations are defined by binary OXS functions. 
This means each hospital's valuations $v_h$ is defined by a bipartite graph $G_h$ with doctors on one side and an arbitrary number of nodes (called {\em slots}) on the other.    
The value of any set of doctors $S \subseteq D$ is the maximum cardinality matching in the subgraph of $G_h$ induced by the doctors in $S$ and all the slots sharing edges with $S$. In other words, the value $v_h(S)$ is the number of slots that can be filled by the set of doctors $S$, where each doctor $d$ can only take up neighboring slots in the graph $G_h$.


\Cref{algo:red-jacket} is $2$-approximately hospital-SP when hospitals have binary OXS valuations. That is, for any hospital, misreporting their valuation can increase a hospital's bundle size by no more than a factor of $2$. 

To prove this property, we begin with a few definitions.
Given a subset of doctors $T \subseteq D$, we define the valuation $f_T$ as $f_T(S) = |S \cap T|$; that is, each doctor in $T$ adds a value of $1$. Given an allocation $X$ and a set of doctors $D'$, we define the allocation $X \setminus D'$ as the allocation where each hospital $h$ receives the set $X_h \setminus D'$.
Similarly, given an allocation $X$ and a set of doctors $D'$, we define $X \cap D'$ as the allocation where each hospital $h$ receives the set $X_h \cap D'$.

We define the following ordering over allocations: given two non-redundant allocations $X$ and $Y$, we say that $X \succ Y$ if 
\begin{inparaenum}[(a)]
\item $\USW(X)> \USW(Y)$, or 
\item $\USW(X) = \USW(Y)$, and there is some doctor $d_j$ such that $X(d_j) \succ_{d_j} Y(d_j)$ and for all doctors $d_k$ where $k < j$ we have $X(d_k) = Y(d_k)$.
\end{inparaenum}
We call this ordering the {\em HWSD ordering}. Let $X$ be the output of Algorithm \ref{algo:red-jacket}; then for any other non-redundant allocation $Y$, we must have $X \succ Y$. Since this observation is used several times in our proofs, we state it formally. 

\begin{obs}\label{obs:hwsd-ordering}
Let $X$ be the output of Algorithm \ref{algo:red-jacket}, and let $X'$ be any other non-redundant allocation. Then $X \succ X'$ according to the HWSD ordering.
\end{obs}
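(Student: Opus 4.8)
The plan is to read the HWSD ordering as a two-level lexicographic comparison: the primary key is $\USW$ (to be maximized) and, among allocations of equal $\USW$, the secondary key ranks a doctor receiving a more preferred hospital as better, scanning doctors in index order $d_1, d_2, \dots$. Since property (iii) of \Cref{thm:red-jacket} guarantees that the output $X$ maximizes $\USW$, the primary key is already settled: for any non-redundant $X'$ we have $\USW(X) \ge \USW(X')$, and whenever the inequality is strict we are in case (a) of the ordering and $X \succ X'$ immediately. The entire content of the observation is therefore the tie-breaking case $\USW(X) = \USW(X')$ with $X \ne X'$.

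In that case I would let $d_j$ be the \emph{first} doctor (smallest index) at which $X$ and $X'$ disagree, so that $X(d_k) = X'(d_k)$ for every $k < j$; the goal becomes showing $X(d_j) \succ_{d_j} X'(d_j)$, which is exactly what case (b) of the ordering needs. The key structural observation driving this is that when \Cref{algo:red-jacket} reaches the iteration for $d_j$, the partial allocation it has built so far assigns precisely the already-processed doctors $d_1, \dots, d_{j-1}$ to their final $X$-hospitals (allocations are never revised once made). Because $X$ and $X'$ agree on those doctors, every current bundle is contained in the corresponding bundle of $X'$. Consequently $X'$ itself serves as a valid witness for the inner completion test: if $d_j$ is allocated to some hospital $h'$ in $X'$, then taking $Y = X'$ satisfies the containment requirement of the completion test, and the marginal-gain test $\Delta_{h'}(\cdot, d_j) = 1$ holds because the current bundle of $h'$ together with $d_j$ lies inside the non-redundant set $X'_{h'}$ and hence has marginal gain $1$.

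With this witness in hand I would finish by a short case analysis on the disagreement at $d_j$. If $d_j$ is allocated in $X$ but unallocated in $X'$, then $X(d_j)$ is a genuine hospital while $X'(d_j)$ is the (worst) unallocated outcome, so $X(d_j) \succ_{d_j} X'(d_j)$ trivially. If $d_j$ is unallocated in $X$ but allocated to $h'$ in $X'$, the witness argument shows $h'$ passes both inner tests at $d_j$'s iteration, so the algorithm would have matched $d_j$, contradicting $d_j \in X_0$; this case cannot occur. Finally, if $d_j$ goes to distinct hospitals $h = X(d_j)$ and $h' = X'(d_j)$, I would assume for contradiction that $h' \succ_{d_j} h$: then $h'$ passes both inner tests (again by the witness), and since the algorithm scans hospitals in decreasing preference order and $h'$ outranks $h$, it would have stopped at $h'$ (or something even better) rather than $h$, a contradiction. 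Hence $h \succ_{d_j} h'$, giving case (b) and $X \succ X'$.

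The main obstacle, and the step I would spend the most care on, is the witness verification: showing that $X'$ legitimately certifies the completion test for placing $d_j$ at its $X'$-hospital at the exact moment the algorithm tests it. This rests on two points that must be stated cleanly, namely that the partial allocation at step $j$ consists of exactly $d_1, \dots, d_{j-1}$ in their final positions, and that agreement of $X$ and $X'$ up to $d_{j-1}$ forces the bundle containments the completion test demands. Once the witness is nailed down, the scan-in-preference-order behavior of the algorithm converts it directly into the desired strict preference, and the only remaining subtlety is the bookkeeping for unallocated doctors, which I would handle with the convention that being matched to any hospital is preferred to being unmatched (legitimate since each $\succ_d$ is complete over $H$).
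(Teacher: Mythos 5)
Your proof is correct, and it fills in an argument the paper itself omits: the paper simply asserts this observation as an immediate consequence of the algorithm's greedy structure, and your formalization (reduce to the tie case via USW-maximality, take the first index of disagreement, and use $X'$ itself — which is USW-maximal in the tie case — as the witness for the completion test at $d_j$'s iteration) is exactly the intended reasoning. The one point worth stating explicitly in a final write-up is the convention that any hospital is preferred to being unmatched, which you already flag.
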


To prove our result, we first show that it suffices to consider simple misreports, of the form $f_T$, when considering hospital deviations. Then we show that the outcome from reporting $f_T$ and the outcome from reporting $v_h$ cannot differ significantly in size. While the proof follows the broad structure of \citet{babaioff2021EF}, our proofs of Lemma \ref{lem:fTreport} and \ref{lem:manip-half} are more technically involved due to the two-sided nature of our problem.

\begin{lemma}\label{lem:fTreport}
Let $X$ be the output of Algorithm \ref{algo:red-jacket}.
If hospital $h$ reports $f_T$ instead of $v_h$ where $T \subseteq X_h$ and all the other hospitals report the same valuations, then Algorithm \ref{algo:red-jacket} outputs an allocation $Y$ where $Y_h = T$. 
\end{lemma}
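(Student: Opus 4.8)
The goal is to show that when hospital $h$ reports the additive-on-$T$ valuation $f_T$ with $T \subseteq X_h$, the HWSD mechanism gives $h$ exactly the bundle $T$. The plan is to exploit the HWSD ordering (\Cref{obs:hwsd-ordering}) together with the fact that, under the report $f_T$, hospital $h$'s value for a bundle $S$ equals $|S \cap T|$, so $h$ ``wants'' precisely the doctors in $T$ and is indifferent to (indeed gains nothing from) any doctor outside $T$. I first want to pin down what a max-hospital-\USW allocation looks like in the modified instance where $h$ reports $f_T$: since $h$'s contribution to welfare is now capped at $|T|$ and is achieved exactly by receiving all of $T$, and since the doctors of $T$ were already assigned to $h$ in the truthful optimum $X$, I expect that the allocation $Y$ with $Y_h = T$, and the remaining doctors distributed among the other hospitals exactly as they can best be, is welfare-optimal in the new instance. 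The first step is therefore to argue that the maximum hospital-\USW in the $f_T$-instance equals $|T| + \USW(X \setminus X_h$ restricted to the other hospitals$)$, and that this is attainable.

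Next I would trace the execution of \Cref{algo:red-jacket} on the modified instance. The key structural point is that doctors are processed in the fixed order $d_1, \dots, d_m$, and each doctor is placed at its highest-ranked hospital that (a) gains value from it and (b) admits a completion to a max-\USW allocation. For a doctor $d \in T$, hospital $h$ gains value $1$ (since $d \in T$), and I must verify that assigning $d$ to $h$ is completable to the new optimum; conversely, for a doctor $d \notin T$, hospital $h$ gains nothing ($\Delta_h(S,d)=0$ under $f_T$ for any $S$), so the algorithm will never place such a doctor at $h$. The doctors of $T$ thus form the only candidates for $h$'s bundle. The remaining task is to confirm that every doctor of $T$ actually ends up at $h$ rather than at some more-preferred hospital: this is where I must use that $T \subseteq X_h$ in the \emph{truthful} run, meaning that in the truthful optimum these doctors were assigned to $h$ and not to any hospital they prefer.

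The main obstacle I anticipate is exactly this last point — ruling out the possibility that some $d \in T$ is diverted to a hospital it ranks above $h$. In the truthful run, the reason $d$ landed at $h$ (rather than at a more-preferred $h'$) was either that $h'$ did not gain from $d$, or that assigning $d$ to $h'$ broke completability to a max-\USW allocation. Under the report $f_T$, the other hospitals' valuations are unchanged and the total welfare budget for the other hospitals is also unchanged (it is still the maximum number of their slots fillable by $D \setminus T$, since $h$ now only ever demands $T$); so the completability obstruction that kept $d$ away from $h'$ in the truthful run should persist. Making this rigorous requires comparing max-\USW completions across the two instances and showing the set of ``completable'' moves for doctors in $T$ into higher-ranked hospitals does not expand when $h$ switches to $f_T$. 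I would formalize this by relating max-\USW allocations of the two instances through a matroid-augmentation argument (\Cref{obs:augmentation}), showing that any completion available in the $f_T$-instance that diverts a $d \in T$ away from $h$ would have yielded, in the truthful instance, an allocation that beats $X$ in the HWSD ordering, contradicting \Cref{obs:hwsd-ordering}. Concluding, once no $d \in T$ escapes to a preferred hospital and no $d \notin T$ is ever admitted by $h$, the bundle $h$ receives is precisely $T$, giving $Y_h = T$.
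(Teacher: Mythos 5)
There is a genuine gap: your argument correctly reduces the lemma to showing that no doctor $d \in T$ gets diverted to a hospital it prefers to $h$ once $h$ reports $f_T$, but the step you offer for this --- that ``the completability obstruction that kept $d$ away from $h'$ in the truthful run should persist'' --- is precisely the content of the lemma and is never established. The difficulty is that the maximum hospital-\USW is \emph{not} the same quantity in the two instances (it can drop by up to $|X_h|-|T|$ when $h$'s demand is capped at $T$, or drop by less if other hospitals absorb the released doctors), so a completion in the $f_T$-instance that routes $d$ to a preferred hospital $h'$ does not translate into a max-\USW allocation of the truthful instance that beats $X$: to make it welfare-optimal in the truthful instance you must augment it (give $h$ back doctors it no longer claims under $f_T$), and that augmentation can rearrange other doctors' assignments, destroying the HWSD-ordering comparison you want to make against $X$. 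Your appeal to \Cref{obs:augmentation} is a gesture at this repair, but nothing in the proposal controls \emph{which} doctors move during the repair, and without that control the contradiction with \Cref{obs:hwsd-ordering} does not follow.

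The paper's proof avoids tracing the algorithm's execution altogether. It first reduces to $|T|=|X_h|-1$ by induction, then compares the two final outputs $X$ and $Y$ globally: using the binary OXS slot representation (which your proposal never invokes, though the lemma lives in that regime), it builds a single alternating path $P$ of (doctor, hospital-slot) moves that carries the one doctor $d\in X_h\setminus T$ from $X$ toward its position in $Y$. Applying $P$ to $X$ yields a non-redundant allocation under the $f_T$-report, forcing $X\setminus D_P \prec Y\setminus D_P$ in the HWSD ordering; \emph{reversing} the same path and applying it to $Y$ yields a non-redundant allocation under the truthful report, forcing $Y\setminus D_P \prec X\setminus D_P$. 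These two conclusions contradict each other. The reversibility of the path is what pins down exactly which doctors change assignment in both directions, and that is the ingredient your plan is missing. To complete your approach you would need an analogous device; as written, the proposal is a plausible outline with the central step unproved.
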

\begin{proof}
When representing the valuation function $f_T$, we assume the graph $G_{f_T}$ used is a subgraph of $G_h$ (used for representing $v_h$). Indeed, it is always possible to represent $f_T$ using an appropriate subgraph of $G_h$ by keeping the edges of the bipartite graph representing $G_h$ that are in a maximum cardinality matching of doctors in $T$ to slots. For both allocations $X$ and $Y$, fix a matching of the allocated bundle for each hospital to the slots in their valuation graph; specifically for the hospital $h$, ensure the matching  under $f_T$ is a subset of the matching under $v_h$. 
Essentially, this means each doctor is not only associated with a hospital but also with a specific slot in the hospital's preference graph $G_h$. 
To make notation easier, we call the preference profile where every hospital $h$ reports $v_h$ as the {\em old} preferences, and the the profile where $h$ reports $f_T$ the \emph{new} preferences.

The proof is trivial for the case when $T = X_h$, since at each iteration, the algorithm makes the exact same decision under both profiles. 
It suffices to consider the case where $|T| = |X_h| - 1$, since we can then inductively assume $v_h=f_T$ and apply the same argument to $|T'|= |T|-1$ to obtain an arbitrary subset of $X_h$.

Let $d$ be the doctor in $X_h \setminus T$. 
Assume for contradiction that $|Y_h| < |T|$. 
We create an allocation $X'$ from $X$ by moving $d$ from hospital $h$ to the hospital it is allocated to in $Y$ --- hospital $j$ where $d$ is matched to some slot $q$. If slot $q$ is empty in the matching of allocation $X_j$ in the graph $G_j$, we add $d$ to slot $q$ of hospital $j$ and we stop. 
Otherwise, we swap $d$ with $d'$ in slot $q$ of $X_j$. We then repeat this process with $d'$. This transfer creates an alternating path of doctors and (hospital, slot) pairs; we call this path $P$ (see \Cref{fig:lem:fTreport} for an illustration). 
This path is finite because after every move, we strictly increase the number of (hospital, slot) pairs which have the same doctor allocated in both $X'$ and $Y$. 
We denote this path as $((h_1, q_1), d_1, (h_2, q_2), d_2, \dots, d_k, (h_{k+1}, d_{k+1}))$; the interpretation being that the doctor $d_i$ was moved from the hospital slot pair $(h_i, q_i)$ to $(h_{i+1}, q_{i+1})$ to create $X'$. 
\begin{figure}
    \centering
    \includegraphics[width=0.8\textwidth]{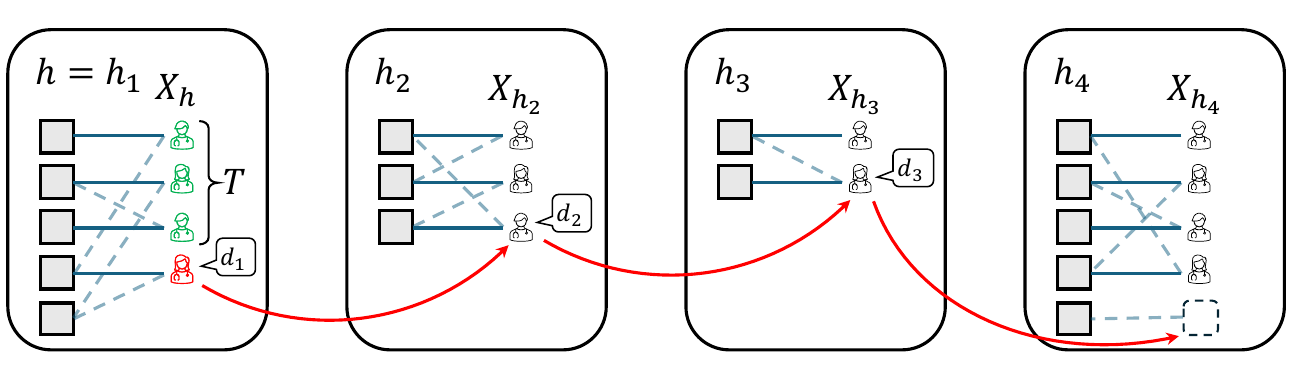}
    \caption{Illustration of the path transfer executed in the proof of \Cref{lem:fTreport}. Each hospital has availability slots (gray squares) that can be matched to different doctors (denoted by blue edges). Full blue edges denote the actual matching of doctors to slots, and dashed edges denote possible alternative slots. 
    The doctor $d_1$ is assigned to $h$ under $X_h$, but not under $Y$ (where $h$ receives $T$). We transfer $d_1$ to the slot+hospital it is assigned to under $Y$. That slot is occupied by the doctor $d_2$ who is moved to the spot of $d_3$. Finally, $d_3$ is assigned to their slot with $h_4$ under $Y$, which is unoccupied under $X$. This same path is reversed later in the proof to create the allocation $Y'$ from the allocation $Y$.}\label{fig:lem:fTreport}
\end{figure}
The resulting assignment $X'$ is non-redundant with respect to the new preferences. Moreover, $X' \ne Y$ since $X' = Y$ implies that $|Y_h| = |X'_h| = |T|$ which implies the lemma is satisfied. 
Therefore the allocation $X'$ must be strictly worse according to the HWSD ordering than $Y$ with respect to the new preferences (\Cref{obs:hwsd-ordering}). 
Let the doctors involved in the path $P$ be denoted by $D_P$. 
The set of doctors $D_P$ have the same allocation in both $X'$ and $Y$. 
Therefore, if $X'$ is worse than $Y$, it must be because of the doctors not present in the path. 
That is, the allocation $X' \setminus D_P$ is worse according to the HWSD ordering than $Y \setminus D_P$. 
Since this set of doctors has the same allocation in both $X$ and $X'$, we have that $X \setminus D_P$ is worse than $Y \setminus D_P$ according to the HWSD ordering.

Now, we reverse the path and apply it to $Y$ to get allocation $Y'$ which is non-redundant with respect to the old preferences. By reversing the path, we mean the following: if the path is $((h_1, q_1), d_1, (h_2, q_2), d_2, \dots, d_k, (h_{k+1}, d_{k+1}))$, we move $d_j$ from the hospital slot pair $(h_{j+1}, d_{j+1})$ to $(h_{j}, d_{j})$ to create $Y'$; note that this operation is well-defined. 
Reversing the path $P$ results in a non-redundant allocation (with respect to the old preferences) if no two doctors are matched to the same slot for some hospital. 
This happens because at each step of this reversed path we either 
\begin{enumerate}[(a)]
    \item allocate $d$ to some slot in $Y'_h$,
    \item swap a doctor allocated to some slot in a hospital with another doctor, or
    \item remove a doctor from a hospital.
\end{enumerate}
The second and third operations clearly preserve non-redundancy with respect to both the old and new preferences. 
The first operation preserves non-redundancy since we assume the matching of $Y_{h}$ in $G_h$ is a subset of the matching of $X_h$ in $G_h$. 
So, if the doctor $d$ was allocated to slot $q$ in $X_h$, then this slot must remain unoccupied in $Y_h$.

We observe that $Y' \ne X$ since $|Y'_h| \le |T| < |X_h|$. Therefore, $Y'$ must be strictly worse than $X$ according to the red jacket objective which implies that doctors not present in the path are worse off in $Y$ than in $X$ according to the red jacket objective; that is, the allocation $Y \setminus D_P$ is worse than the allocation $X \setminus D_P$ according to the HWSD ordering (\Cref{obs:hwsd-ordering}). This gives us a contradiction.
\end{proof}

Our next lemma shows that being honest does not significantly reduce the size of your bundle. 

\begin{lemma}\label{lem:manip-half}
Consider some hospital $h$. Fixing the reports of all other hospitals, let $X$ and $Y$ be the resulting allocation when $h$ reports $f_T$ (for some $T$) and $v_h$, respectively. If $v_h(T)=|T|$ and $X_h=T$, then $|Y_h| \ge \frac{|T|}2$.
\end{lemma}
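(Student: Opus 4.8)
The plan is to classify every doctor of the misreport bundle $T=X_h$ by where it lands in the truthful allocation $Y$, and to bound the resulting pieces separately. Since $v_h(T)=|T|$, the set $T$ is independent in the matroid of $v_h$, and every doctor of $T$ is matched to $h$ in $X$. Partition $T$ into $A=\{d\in T: Y(d)=h\}$, $B=\{d\in T: Y(d)\succ_d h\}$, and $C=\{d\in T: h\succ_d Y(d)\text{ or }d\text{ is unallocated in }Y\}$, so that $|T|=|A|+|B|+|C|$ and $A\subseteq Y_h$. Thus it suffices to prove the two bounds (I) $|A|+|C|\le |Y_h|$ and (II) $|B|\le |Y_h|$, since then $|T|\le 2|Y_h|$.

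Bound (I) is the clean part and follows from stability and the matroid structure. For each $d\in C$ the doctor strictly prefers $h$ to its assignment in $Y$ (treating being unmatched as worst, given the complete preference lists). Because $Y$ is stable (\Cref{thm:red-jacket}), \Cref{obs:blocking} forces $\Delta_h(Y_h,d)=0$; that is, every doctor of $C$ lies in the span of $Y_h$ in the matroid defined by $v_h$. Consequently $v_h(Y_h\cup C)=v_h(Y_h)=|Y_h|$. Since $A\subseteq Y_h$ gives $A\cup C\subseteq Y_h\cup C$, monotonicity yields $v_h(A\cup C)\le |Y_h|$; and since $A\cup C\subseteq T$ with $T$ independent, the set $A\cup C$ is itself independent, so $v_h(A\cup C)=|A|+|C|$. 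Combining gives $|A|+|C|\le |Y_h|$.

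Bound (II) is the main obstacle, and I would establish it by a \emph{reversible path-augmentation} argument in the spirit of the proof of \Cref{lem:fTreport}, comparing the two runs that differ only in $h$'s report. Moving from $f_T$ (where $h$ accepts only doctors of $T$) to $v_h$ (where $h$ accepts every augmenting doctor), hospital $h$ becomes more accepting; each additional doctor it captures frees a slot in some other hospital, and it is precisely these freed slots that let a doctor of $B$ relocate to a strictly more preferred hospital. Fixing slot matchings in each $G_{h'}$ for both $X$ and $Y$, one views each allocation as an assignment of doctors to (hospital, slot) pairs and takes the symmetric difference of the two global assignments, which decomposes into alternating paths and cycles. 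The goal is to charge each doctor of $B$ (which leaves its slot at $h$ under $X$ for a strictly better hospital under $Y$), along the alternating path it lies on, to a distinct extra slot that $h$ fills under the truthful report, i.e.\ to a distinct element of $Y_h$; injectivity of this charging gives $|B|\le |Y_h|$.

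The difficulty — exactly the two-sided interaction flagged for \Cref{lem:fTreport} — is making the charging precise: a doctor in $B$ escapes upward not because $h$ released it but through a cascade triggered by $h$ capturing other doctors, so tracing which freed slot enabled which escape requires following augmenting paths that cross several hospitals while simultaneously respecting doctor preferences. The reversibility of these paths is what closes the argument: reversing an augmenting path produces an allocation that is non-redundant under the \emph{other} report and hence comparable through the HWSD ordering (\Cref{obs:hwsd-ordering}), so "too many escapes" can be converted into an allocation strictly dominating an HWSD output, a contradiction. I expect this bookkeeping, rather than bound (I), to be the technical heart. Granting (I) and (II), we conclude $|T|=|A|+|B|+|C|\le 2|Y_h|$, i.e.\ $|Y_h|\ge \tfrac{|T|}{2}$.
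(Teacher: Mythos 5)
Your bound (I) is correct and is a genuinely nice piece of reasoning: using stability of $Y$ (via \Cref{obs:blocking}) to conclude that every $d\in C$ satisfies $\Delta_h(Y_h,d)=0$, hence $C$ lies in the span of $Y_h$, and then combining submodularity, monotonicity, and the independence of $A\cup C\subseteq T$ to get $|A|+|C|=v_h(A\cup C)\le v_h(Y_h\cup C)=|Y_h|$. The paper does not argue this way at all --- it never partitions $T$ by doctor preferences; instead it counts the slots of $X_h$ left unfilled in $Y_h$ and shows each spawns a disjoint alternating path containing a distinct doctor of $Y_h\setminus X_h$, yielding $|Y_h\setminus X_h|\ge |X_h|-|Y_h|$ directly. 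So (I) is a legitimate alternative ingredient, but it only disposes of the easy part of the count.

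The problem is that bound (II), $|B|\le|Y_h|$, is not proved: you describe the intended charging (``charge each doctor of $B$ along the alternating path it lies on to a distinct element of $Y_h$'') and explicitly defer the bookkeeping, but that bookkeeping \emph{is} the lemma. Concretely, what is missing is the construction that makes the charge injective and well-defined: fixing slot matchings so that the paths starting at distinct doctors are pairwise disjoint; showing each such path must leave hospital $h$ (in the paper this uses the assumption that the two matchings of $h$'s bundle maximally intersect); and, most importantly, the dichotomy that drives the whole argument --- either transferring along a path yields an allocation that is non-redundant under the misreported profile, in which case reversing the path and applying it to $Y$ produces an allocation beating an HWSD output (contradicting \Cref{obs:hwsd-ordering}), or the transfer breaks non-redundancy, which can only happen because the path drags a doctor of $Y_h\setminus T$ into $h$'s bundle. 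Without that dichotomy there is no identified element of $Y_h$ (or of $Y_h\setminus T$) to charge each escaping doctor to, and the claim $|B|\le|Y_h|$ is unsupported. As written, the proposal proves half the inequality and correctly locates, but does not close, the gap.
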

\begin{proof}
We call the preference profile where $h$ reports $f_T$ and receives $X_h = T$ the \emph{misreported} preferences. 
For both allocations, fix a matching of the allocated bundle for each hospital to the slots in their valuation graph. 
Like the proof of \Cref{lem:fTreport}, when representing the valuation function $f_T$, we assume the graph $G_{f_T}$ used is a subgraph of $G_h$ (used for representing $v_h$). 

We also assume without loss of generality that the matching for the bundles $X_{h}$ and $Y_{h}$ is chosen in a manner that maximizes the number of slots that receive the same doctor under both matchings.  


Suppose that there are $k$ slots filled in $X_h$ which are not filled in $Y_h$, with doctors $\{d_1, \dots, d_k\}$ assigned to them. 
For each of these $k$ doctors we create a path as we did in \Cref{lem:fTreport}: we move the doctor $d_j$ to the hospital-slot pair it is assigned to in $Y$; if that slot is empty then we stop, otherwise we transfer the doctor in that slot to the slot they were assigned to under $Y$. Doing this, we create $k$ paths $P^{(1)}, \dots, P^{(k)}$. 
These paths are disjoint, since we fixed the matching of doctors to slots for each hospital in both $X$ and $Y$. 

Consider a path $P^{(j)}$ starting with a doctor $d_j$. If the path $P^{(j)}$ only involves slots of the hospital $h$, then transferring doctors along this path (as we did in Lemma \ref{lem:fTreport}) increases the number of slots in $G_h$ which receive the same allocation in both $X_h$ and $Y_h$, without changing the hospital $h$'s utility. This contradicts the assumption that the matchings we chose for $h$ under $X_h$ and $Y_h$ maximally intersect. 
Therefore, all paths must contain one hospital-slot pair that is not the hospital $h$.

Assume that transferring doctors along the path $P^{(j)}$ creates a non-redundant allocation $X'$ with respect to the misreported preferences (when hospital $h$ reported $f_T$).   
Using \Cref{obs:hwsd-ordering}, it must be the case that $X'$ is worse than $X$ according to the HWSD ordering. 
Since only doctors on the path $P^{(j)}$ had their assignment changed, this means $X' \cap D_{P^{(j)}}$ is worse than $X \cap D_{P^{(j)}}$ where $D_{P^{(j)}}$ is the set of doctors in the path $P^{(j)}$. This implies $Y \cap D_{P^{(j)}}$ is worse than $X \cap D_{P^{(j)}}$. If this is true, we can reverse the path and apply it to $Y$ to improve the allocation according to the HWSD ordering, which is a contradiction. This follows from arguments similar to the previous proof.

Therefore, for all of the paths, transferring along these paths creates an allocation $X'$ that is \textit{not non-redundant} with respect to the misreported preferences. The only way this can happen is if we move a doctor $d \in Y_h \setminus T$ to $X'_h$; this follows from the fact that the only hospital with a different set of preferences in both allocations is hospital $h$. Therefore, each of these paths has a doctor from $Y_h \setminus X_h$ and since the paths are disjoint, no two paths share a doctor.

Therefore, we have $|Y_h \setminus X_h| \ge k \ge |X_h| - |Y_h|$ or that $|Y_h| \ge |X_h|-|Y_h \setminus X_h| \ge |X_h| - |Y_h|$, which implies that $|Y_h| \ge \frac{|X_h|}{2}$. 
\end{proof}
Combining \Cref{lem:fTreport,lem:manip-half} proves our claim. 

\begin{restatable}{theorem}{thmredjacketsp}\label{thm:red-jacket-strategyproofness}
    \Cref{algo:red-jacket} is $2$-approximately hospital-strategyproof when hospitals have binary OXS valuations.
\end{restatable}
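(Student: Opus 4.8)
The plan is to chain \Cref{lem:fTreport} and \Cref{lem:manip-half} through a reduction that replaces an arbitrary hospital deviation by a simple indicator misreport $f_T$. Fix a hospital $h$ with true valuation $v_h$ and let $X$ be the allocation HWSD returns when $h$ is truthful; since $X$ is stable it is non-redundant, so $v_h(X_h)=|X_h|$. Let $\hat v_h$ be any misreport and $\hat X$ the resulting allocation, so that $h$'s true utility from deviating is $v_h(\hat X_h)$. The definition of $2$-approximate hospital-SP asks precisely that $v_h(\hat X_h)\le 2\,v_h(X_h)$ for every $\hat v_h$, so that is the single inequality I would establish.

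The first step is to reduce to indicator misreports. I would fix a matching of $\hat X_h$ into the slots of the reported graph and take $T\subseteq \hat X_h$ to be a maximum $v_h$-independent subset of the deviating bundle, so that $|T|=v_h(T)=v_h(\hat X_h)$; choosing $G_{f_T}$ to be the slot-edges used by $T$ makes it a subgraph of the deviating graph. Since \Cref{lem:fTreport} concerns only the algorithm's outputs on two reports with the others held fixed, and never uses that its baseline report is the hospital's true one, I would apply it with $\hat v_h$ in the baseline role: reporting $f_T$ in place of $\hat v_h$ returns to $h$ exactly the bundle $T$. Thus the value $v_h(\hat X_h)=|T|$ that $h$ secures through the arbitrary deviation is already secured by the indicator report $f_T$, and it suffices to bound what $f_T$ can extract against an honest $h$.

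The second step applies \Cref{lem:manip-half}. Let $W$ be the allocation when $h$ reports $f_T$; the previous step gives $W_h=T$, and $v_h(T)=|T|$ since $T$ is $v_h$-independent, which are exactly the hypotheses of \Cref{lem:manip-half} with $W$ and $X$ in the roles of its $f_T$- and $v_h$-allocations. The lemma then yields $|X_h|\ge \frac{|T|}{2}$. Combining the steps, $v_h(X_h)=|X_h|\ge \frac{|T|}{2}=\frac12\,v_h(\hat X_h)$, i.e.\ $v_h(\hat X_h)\le 2\,v_h(X_h)$. As this holds for every misreport $\hat v_h$, the mechanism is $2$-approximately hospital-SP.

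The conceptual difficulty lives entirely inside the two invoked lemmas, and I expect it to be the main obstacle. For \Cref{lem:fTreport}, the path-transfer argument realizing ``$f_T$ returns exactly $T$'' must never route a doctor to a hospital it does not weakly prefer; this is why the transfers are analyzed through the \emph{HWSD ordering} (\Cref{obs:hwsd-ordering}) rather than through hospital welfare alone, and is the crux of the two-sided analysis. For \Cref{lem:manip-half}, the factor of two arises by charging each slot that the indicator bundle $T$ fills but the honest bundle leaves empty to a distinct doctor, via disjoint reversible augmenting paths; the delicate points are that these paths are disjoint and that reversing any of them on the honest side would contradict the HWSD-optimality of $X$. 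Given both lemmas, the theorem itself is just the short chaining above, with all the real work pushed into showing that the indicator report simultaneously captures the deviation's value and loses at most a factor of two against honesty.
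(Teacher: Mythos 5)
Your proposal is correct and follows essentially the same route as the paper's proof: choose $T\subseteq \hat X_h$ with $|T|=v_h(T)=v_h(\hat X_h)$, invoke \Cref{lem:fTreport} with the misreport as the baseline to show $f_T$ secures exactly $T$, then invoke \Cref{lem:manip-half} to get $|X_h|\ge |T|/2$. You also correctly flag the one subtle point the chaining relies on, namely that \Cref{lem:fTreport} applies to any baseline report, not just the truthful one.
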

\begin{proof}
    Suppose that when hospital $h$ reports their valuation $v_h$ truthfully, the resulting output of \Cref{algo:red-jacket} is $X$, and hospital $h$ receives the bundle $X_h$. 
    Suppose that hospital $h$ reports the valuation $v'_h$ instead of $v_h$, and \Cref{algo:red-jacket} outputs the allocation $Y$. 
    Let $T \subseteq Y_h$ be a subset of doctors such that $|T| = v_h(T) = v_h(Y_h)$; intuitively, $T$ is the set of doctors in $Y_h$ that hospital $h$ actually wants to receive. 
    Suppose that $h$ reports $f_T$ instead of $v'_h$, and let the resulting allocation be $Z$. From \Cref{lem:fTreport} we know that $v_h(Z_h) = |T| = v_h(Y_h)$. 
    If $h$ reports $v_h$ and receives the bundle $X_h$, then from \Cref{lem:manip-half} we get that 
    \begin{align*}
        v_h(X_h) \ge \frac12 v_h(Z_h) \ge \frac12 v_h(Y_h).
    \end{align*}
    Therefore, \Cref{algo:red-jacket} is $2$-approximately hospital strategyproof.
\end{proof}


Note that this analysis is tight since Example \ref{ex:impossibility} describes an instance where all hospitals have binary OXS valuations; therefore, any mechanism that outputs hospital-\USW maximizing allocations can only be at most $2$-approximately hospital-SP. Restricting hospital preferences further to binary capped additive valuations, we show that the HWSD mechanism is exact strategyproof for hospitals. 

\begin{restatable}{lemma}{corredjacketadditive}\label{corr:red-jacket-SP-capped-additive}
When hospitals have binary capped additive valuations, the HWSD mechanism is hospital-SP.
\end{restatable}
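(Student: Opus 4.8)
The plan is to re-run the two-lemma argument behind \Cref{thm:red-jacket-strategyproofness}, but to sharpen the size bound of \Cref{lem:manip-half} from $\frac12$ to $1$ using the extra structure of binary capped additive valuations. Recall that capped additivity means each hospital $h$ has an acceptance set $A_h \subseteq D$ and a cap $b_h$, and in the binary OXS picture used in the proofs $v_h$ is represented by the complete bipartite graph between $A_h$ and $b_h$ fully interchangeable slots. \Cref{lem:fTreport} is already stated for binary OXS and hence applies verbatim, so, exactly as in the proof of \Cref{thm:red-jacket-strategyproofness}, it suffices to establish the following strengthening of \Cref{lem:manip-half}: if reporting $f_T$ yields $h$ the bundle $X_h = T$ (with $v_h(T)=|T|$) and reporting $v_h$ yields $Y_h$, then $|Y_h|\ge |T|$. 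Substituting this improved bound in place of $|Y_h|\ge\frac12|T|$ in the proof of \Cref{thm:red-jacket-strategyproofness} removes the factor of two and yields exact hospital-SP.

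To prove $|Y_h|\ge|T|$, first dispose of the easy case: since $Y$ is non-redundant, $|Y_h| = v_h(Y_h) \le b_h$, so if the cap binds ($|Y_h| = b_h$) we are done because $|T| = v_h(T) \le b_h$. Assume henceforth that the cap is slack, $|Y_h| < b_h$. I would then reuse the disjoint path-augmentation construction of \Cref{lem:manip-half}, but exploit the interchangeability of $h$'s slots to build one vertex-disjoint alternating path per doctor of $X_h \setminus Y_h$, rather than only per ``uncovered slot''. Aligning the common doctors $X_h \cap Y_h$ on shared slots, each $d \in X_h\setminus Y_h$ sits at $h$ under $X$ but not under $Y$, and transferring it toward its $Y$-position starts a path $P$ that, as in \Cref{lem:fTreport,lem:manip-half}, is finite and leaves $h$. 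The goal is to show that each such path deposits a \emph{distinct} doctor of $Y_h \setminus X_h$, which gives $|Y_h\setminus X_h| \ge |X_h\setminus Y_h|$ and hence $|Y_h| \ge |X_h| = |T|$.

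Each path $P$ can terminate in only two ways: either (A) it absorbs a doctor of $Y_h\setminus X_h$ into $h$, breaking non-redundancy under the misreported $f_T$ exactly as in \Cref{lem:manip-half}; or (B) it terminates by depositing a doctor into an \emph{empty} slot of $h$, a possibility that arises precisely because the cap is slack. The main obstacle, and the only place where capped additivity is essential, is ruling out termination (B); in general binary OXS this case is exactly what forces the factor of two. To rule it out I would argue that a type-(B) path, traced on $Y$ and reversed, is an augmenting structure producing a non-redundant allocation $Y''$ that is strictly larger at $h$ (the empty slot gets filled) while moving every intermediate doctor only to a hospital it weakly prefers; by \Cref{obs:hwsd-ordering} this contradicts the HWSD-optimality of $Y$ for the truthful profile. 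The engine behind the contradiction is \Cref{obs:blocking}: with the cap slack we have $\Delta_h(Y_h,d)=1$ for every acceptable $d \notin Y_h$, so any such $d$ that weakly prefers $h$ to its current match would already block $Y$, and this is exactly the configuration a type-(B) termination would witness. Hence every path ends in case (A), the endpoints are distinct across the disjoint paths, $|Y_h|\ge|T|$ follows, and feeding this into the template of \Cref{thm:red-jacket-strategyproofness} completes the exact hospital-SP proof.
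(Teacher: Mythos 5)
Your reduction is the right one and matches the paper's: keep \Cref{lem:fTreport} as is (capped additive $\subset$ binary OXS) and strengthen \Cref{lem:manip-half} to $|Y_h|\ge|T|$, which is exactly the lemma the paper proves in Appendix~\ref{apdx:hwsd}. However, your proof of that strengthened lemma has a genuine gap. The plan of ``one vertex-disjoint alternating path per doctor of $X_h\setminus Y_h$, each depositing a distinct doctor of $Y_h\setminus X_h$'' cannot be realized in general. If $|X_h\cup Y_h|>b_h$ (which is consistent with $|X_h|,|Y_h|\le b_h$ and with the already-known bound $|Y_h|\ge|X_h|/2$, e.g.\ $b_h=|X_h|=4$, $|Y_h|=3$, $|Y_h\setminus X_h|=1$), some slot of $h$ must carry a doctor of $X_h\setminus Y_h$ under $X$ and a doctor of $Y_h\setminus X_h$ under $Y$; such a slot is an \emph{interior} node of an alternating path, so the path ``starting'' at its $X$-occupant is really a suffix of a longer path, and the number of distinct disjoint paths collapses back to $k$, the number of $X$-only slots of $h$ --- which is precisely what yields only the factor-$2$ bound. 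Indeed, a simple pigeonhole shows your claim cannot be proved by counting alone: if $|Y_h|<|X_h|$ there are only $|Y_h\setminus X_h|<|X_h\setminus Y_h|$ doctors available to be ``deposited,'' so the contradiction must come from the HWSD-optimality of $Y$, not from distinct endpoints. Your (A)/(B) dichotomy is also off: it omits the generic termination (a path ending at an empty slot of a hospital \emph{other} than $h$, which is what \Cref{lem:manip-half} actually rules out via the reversal argument), and the blocking-pair argument you use to exclude (B) does not apply --- the doctors entering $h$ along a path are already in $Y_h$, and \Cref{def:blocking-pair} requires \emph{strict} preference, so \Cref{obs:blocking} gives you nothing here.

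The paper's proof takes a different route that you may want to compare against. Assume for contradiction $|Y_h|<|T|$; since $X$ (with $X_h=T$) is non-redundant for the true profile, \Cref{obs:hwsd-ordering} gives $\USW(Y)\ge\USW(X)$, so some hospital $h'$ has $|Y_{h'}|>|X_{h'}|$. Starting from a doctor in $Y_{h'}\setminus X_{h'}$, one builds a \emph{single} path that moves doctors from their $Y$-positions toward their $X$-positions, displacing a doctor only when a capacity is exceeded. The capped-additive structure enters exactly once: because $|Y_h|<|T|\le b_h$ and every doctor of $T$ is acceptable to $h$, adding one doctor of $T$ to $Y_h$ never exceeds $h$'s cap, so $h$ can appear only as the terminal node of this path. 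Consequently the reversed path applied to $X$ never inserts a doctor outside $T$ into $h$'s bundle, the resulting $X'$ is non-redundant for the misreported profile, and the usual sandwich ($X\succ X'$ restricted to path doctors, hence $Y'\succ Y$) contradicts the optimality of $Y$ under the true profile. I would recommend reworking your argument along these lines rather than trying to repair the multi-path counting.
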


\subsection{Serial Dictatorship}
Let us next explore the Serial Dictatorship mechanism (\Cref{algo:doctor-round-robin-revisited}). 
This is another simple mechanism which 
offers (weaker) approximate welfare and (stronger) strategyproofness guarantees.

We start with all hospitals having an empty bundle, i.e., $X_h = \emptyset$ for all $h \in H$. 
We order the doctors from $d_1$ to $d_m$. 
Each doctor $d_i$ goes down their ranking list, and {\em proposes} to hospitals in decreasing order of preferences. 
When $d_i$ proposes to the hospital $h$, if $\Delta_h(X_h,d_i) = 1$, then the hospital {\em accepts}. 
Else the hospital {\em rejects} and the doctor moves on to the next hospital. 

\begin{algorithm}[ht]
    \DontPrintSemicolon
    \caption{Doctor Serial Dictatorship}
    \label{algo:doctor-round-robin-revisited}
    \DontPrintSemicolon
    $X = (X_0, X_1, \dots, X_n) \gets (D, \emptyset, \dots, \emptyset)$\;
    
    \For{$i$ in $1$ to $m$}{
        \For{$j$ in $1$ to $n$}{
            Let $h$ be the $j$-th highest ranked hospital for doctor $d_i$\;
            \tcp{Doctor $d_i$ proposes to hospital $h$}
            \uIf{$\Delta_{h}(X_{h}, d_i) = 1$}{ 
                \tcp{Hospital $h$ accepts}
                $X_{h} \gets X_{h} + d_i$\;
                $X_0 \gets X_0 \setminus d_i$\;
                Break\;
            }
            \If{$\Delta_{h}(X_{h}, d_i) = 0$}{ 
                \tcp{Hospital $h$ rejects}
                Continue\;
            }
        }
    }
    \Return $X$\;

\end{algorithm}

\begin{theorem}\label{thm:serial-dictatorship}
The doctor serial dictatorship mechanism has the following properties:
\begin{enumerate}[(i)]
    \item The mechanism is doctor strategyproof
    \item The output allocation is stable
    \item The algorithm runs in polynomial time
    \item The output allocation is $2$-approximately max hospital-\USW.
    \item The mechanism is $2$-approximately hospital strategyproof.
\end{enumerate}
\end{theorem}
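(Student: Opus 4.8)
The plan is to dispose of (i)--(iv) quickly using the sequential structure of \Cref{algo:doctor-round-robin-revisited}, and to concentrate the real work on (v). For (i), I would note that when doctor $d_i$ is processed, the partial allocation $X^{(i-1)}$ assembled from $d_1,\dots,d_{i-1}$ is independent of $d_i$'s reported preferences, hence so is the set $A_i=\{h : \Delta_h(X^{(i-1)}_h, d_i)=1\}$ of hospitals that will accept $d_i$. A truthful $d_i$ receives its genuinely most-preferred hospital in $A_i$, whereas any misreport yields some (weakly worse) element of $A_i$, so truthful reporting is dominant. For (ii), non-redundancy is immediate from the acceptance rule, and I would rule out blocking pairs via \Cref{obs:blocking}: if $\Delta_h(X_h,d)=1$ and $h\succ_d X(d)$ at termination, then $d$ proposed to $h$ before reaching $X(d)$ and was rejected, so $\Delta_h(\cdot,d)=0$ on a subset of the (only growing) bundle $X_h$; submodularity of $v_h$ then forces $\Delta_h(X_h,d)=0$, a contradiction. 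Part (iii) is $O(mn)$ marginal-value queries, and (iv) follows immediately from (ii) together with \Cref{lem:stable_welfare}.

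For (v), fix a hospital $h$ and compare run A (truthful report $v_h$, output $X$) with run B (misreport, output $Y$). The first key observation is that $v_h(X_h)=v_h(D_h^A)$, where $D_h^A$ is the set of doctors that ever propose to $h$ in run A: greedy acceptance of proposers is exactly the matroid greedy algorithm, so $X_h$ is a maximal independent, hence rank-achieving, subset of $D_h^A$. Next I pick a witness $T\subseteq Y_h$ with $|T|=v_h(T)=v_h(Y_h)$. Since every $d\in T$ is accepted by $h$ in run B, every hospital $h'\succ_d h$ rejects $d$ in run B. I then split $T=T_1\sqcup T_2$, where $T_1$ consists of the doctors that reach $h$ in run A and $T_2$ of the doctors matched in run A to a hospital strictly better than $h$; these are the only two possibilities, because a doctor that never reaches $h$ in run A must have been accepted earlier by a more-preferred hospital. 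Because $T_1\subseteq D_h^A$ and $T_1$ is independent (a subset of the independent set $T$), we get $|T_1|=v_h(T_1)\le v_h(D_h^A)=v_h(X_h)$.

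The crux is to bound $|T_2|\le v_h(X_h)$, after which $v_h(Y_h)=|T_1|+|T_2|\le 2\,v_h(X_h)$ and we are done. Each $d\in T_2$ is a \emph{faller}: it is matched to $g_d:=X(d)\succ_d h$ in run A but is rejected by $g_d$ in run B. Since $g_d\ne h$ reports the same valuation in both runs, this rejection must be caused by $g_d$ holding a differently (more) saturated bundle when $d$ proposes in run B --- the ripple effect of $h$'s misreport. I would account for these ripples by tracking the set of hospitals that accept some doctor in run A but reject it in run B together with the set of doctors whose assignment changed, and decomposing these changes into vertex-disjoint alternating displacement paths, all ultimately rooted at the single hospital $h$, which is the only source of divergence between the two runs. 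A matroid-exchange argument at each displacing hospital then charges each faller in $T_2$ to a distinct doctor that $h$ retains in run A, i.e.\ to a distinct element of the base $X_h$, yielding $|T_2|\le |X_h|=v_h(X_h)$.

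The main obstacle is precisely this last charging: the displacement paths can cascade through many hospitals and interleave in processing order, so the real content is proving that the paths are genuinely disjoint and that each terminates at a distinct retained doctor of $X_h$, so that no element of $X_h$ is charged twice. This is the ``careful accounting of the set of hospitals denied a match and the set of doctors whose allocation changed'' that the overall approach relies on. Since the instance in Example~\ref{ex:impossibility} is already binary OXS and exhibits a factor-$2$ gap, I would expect the bound of $2$ to be tight, but establishing the upper bound above is all that the statement requires.
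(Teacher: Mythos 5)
Parts (i)--(iv) are fine and essentially match the paper's appendix arguments. Your setup for part (v) is also sound and closely parallels the paper's: the identity $v_h(X_h)=v_h(D_h^A)$ (greedy acceptance is the matroid greedy algorithm), the dichotomy $T=T_1\sqcup T_2$, and the bound $|T_1|\le v_h(D_h^A)=v_h(X_h)$ are correct, and this half is the same matroid-augmentation argument as \Cref{lemma:deny-accept-upper-bound} (your $T_1$ plays the role of $(X_h\cap Y_h)\cup D_h^a$ and your $T_2$ the role of $R_h^a$).

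The gap is the bound $|T_2|\le v_h(X_h)$, which is the technical heart of the theorem and which you do not actually prove. You describe ``vertex-disjoint alternating displacement paths rooted at $h$'' and a charging of each faller to a distinct element of $X_h$, but then you explicitly flag the disjointness of the paths and the injectivity of the charging as ``the main obstacle'' and ``the real content'' --- that is exactly the step that cannot be waved through. The difficulty is that a misreport by $h$ triggers two interleaved kinds of cascades: doctors \emph{newly rejected} by a hospital whose bundle is more saturated in the misreport run, and doctors \emph{denied} to a hospital because they were absorbed earlier by a more-preferred hospital whose bundle is less saturated. These propagate through hospitals in processing-order-dependent ways, so a naive path decomposition does not obviously yield disjoint paths each terminating at a distinct doctor of $X_h$. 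The paper resolves this with a different device: for each $h'\ne h$ it incrementally builds an auxiliary set $W_{h'}^t$ via Rules (R1)--(R3), proves $W_{h'}^m$ is non-redundant using a matroid exchange lemma (\Cref{lem:useful-matroid-property}), deduces $|W_{h'}^m|\le|Y_{h'}|$ and hence the per-hospital inequality $|R_{h'}|\le|R_{h'}^a|$ (\Cref{lem:rh-upperbound}), and only then closes the argument with a global counting step (\Cref{lem:rah-upperbound}) yielding $|R_h^a|\le|X_h\setminus Y_h|$. Until you supply an argument of comparable substance for your charging claim, part (v) is unproven.
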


Again, we only show that \Cref{algo:doctor-round-robin-revisited} is 2-approximate hospital strategyproof, relegating all other proofs to Appendix~\ref{apdx:sd}. 

We fix a hospital $h$. In this proof, we will also assume for ease of analysis that the unallocated doctors go to hospital $h_0$ who has the valuation function $v_{0}(S) = |S|$.

Assume the allocation when all hospitals report truthfully is $X$. Assume that the hospital $h$ now misreports and let the resulting allocation be $Y$. 
In order to prove our claim, we only need to show that $v_h(Y_h) \le 2|X_h| = 2v_h(X_h)$. 

We distinguish between two types of changes to the allocation between $X$ and $Y$. Some changes are due to doctors not proposing to hospitals that they originally proposed to, and some changes are due to doctors being rejected by hospitals that previously accepted them. 

We say that a hospital $h'\in H$ was {\em denied} a doctor $d\in D$ if $d$ was allocated to $h'$ in $X$, but the doctor $d$ was allocated to a hospital in $Y$ that $d$ prefers to $h'$. 
The set of doctors that $h'$ was denied is denoted $D_{h'}$, and the set of doctors accepted by $h'$ in the generation of $Y$ that were denied from some other hospital is $D^a_{h'}$. Intuitively, $D_{h'}$ is the set of doctors that would have proposed to $h'$ if $h$ had truthfully reported its preferences but now did not propose to $h'$, and $D_{h'}^a$ is the set of doctors that $h'$ was able to `steal' from other hospitals due to the misreport by $h$.
Since doctors go down their preference lists when proposing to hospitals, every doctor $d \in D_{h'}^a$ prefers $h'$ to the hospital they were assigned to under $X$.

We also say that a doctor $d$ was {\em newly rejected} by hospital $h'$ if $d$ was allocated to $h'$ in $X$ but $d$ was rejected by $h'$ in the generation of $Y$. 
Let $R_{h'}$ be the set of doctors newly rejected by the hospital $h'$, and $R^a_{h'}$ be the set of doctors accepted by $h'$ under $Y$ that were newly rejected by some other hospital.

In what follows, we upper bound $|D_h^a|$ and $|R_h^a|$. This will almost immediately imply our desired upper bound. 
We first upper bound $|D_h^a|$.
\begin{lemma}\label{lemma:deny-accept-upper-bound}
$v_h(D^a_h \cup (X_h \cap Y_h)) \le |X_h|$.
\end{lemma}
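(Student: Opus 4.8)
The plan is to reduce the whole statement to one structural fact: every doctor in $D^a_h$ contributes nothing to the final truthful bundle $X_h$, i.e. $\Delta_h(X_h, d) = 0$ for all $d \in D^a_h$. Once this is in hand, the lemma follows almost immediately from monotonicity of $v_h$ together with the matroid augmentation property (\Cref{obs:augmentation}). So the bulk of the work is establishing the zero-marginal claim, and the rest is routine matroid algebra.

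To prove the key fact, I would first unpack the definition of $D^a_h$. A doctor $d \in D^a_h$ is accepted by $h$ in the run producing $Y$ and was denied from some other hospital $h' \ne h$. By the definition of "denied", $d$ is allocated to $h'$ in $X$ but to a strictly more preferred hospital in $Y$; since $d$ is accepted by $h$ under $Y$, that hospital is $h$, so $h \succ_d h'$. Now consider the truthful run producing $X$, where every hospital (including $h$) reports truthfully: doctor $d$ proposes in decreasing order of preference, so it proposes to $h$ strictly before $h'$, and since $d$ ends up at $h'$ it must have been rejected by $h$ at that point. Rejection means $\Delta_h(B, d) = 0$, where $B$ is $h$'s bundle at that moment. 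Because a hospital's bundle only grows during serial dictatorship, $B \subseteq X_h$, so by submodularity (property (c) of MRFs) $\Delta_h(X_h, d) \le \Delta_h(B, d) = 0$, giving $\Delta_h(X_h, d) = 0$ for every $d \in D^a_h$.

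Finally, I would combine these facts. Since each $d \in D^a_h$ has zero marginal to $X_h$, \Cref{obs:augmentation} rules out $v_h(X_h \cup D^a_h) > v_h(X_h)$ — any such increase would exhibit an element of $D^a_h$ with marginal $1$ — so $v_h(X_h \cup D^a_h) = v_h(X_h) = |X_h|$, the last equality by non-redundancy of the truthful output $X$. As $X_h \cap Y_h \subseteq X_h$, we have $D^a_h \cup (X_h \cap Y_h) \subseteq X_h \cup D^a_h$, and monotonicity of $v_h$ yields $v_h\big(D^a_h \cup (X_h \cap Y_h)\big) \le v_h(X_h \cup D^a_h) = |X_h|$, as claimed.

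The main obstacle is the bookkeeping in the middle step: carefully translating the set-theoretic definition of $D^a_h$ into a statement about the dynamics of the proposal process, namely that being denied at a less-preferred hospital $h'$ forces $d$ to have proposed to and been rejected by $h$ earlier in the truthful run, on a sub-bundle of $X_h$. The subtlety is that this reasoning is entirely about the truthful run $X$ (not $Y$), even though membership in $D^a_h$ is defined through the misreport run — so I would be careful to invoke only that $d$ prefers $h$ to $h'$ and was matched to $h'$ under truthful reports. Once that link is made precise, the concluding matroid argument via the augmentation property and monotonicity is immediate.
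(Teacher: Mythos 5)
Your proof is correct and follows essentially the same route as the paper: the key fact in both is that every doctor in $D^a_h$ proposed to $h$ and was rejected during the truthful run (since they prefer $h$ to their assignment under $X$), so by submodularity they have zero marginal value with respect to $X_h$, and the matroid augmentation property then forces the bound. The only cosmetic difference is that you detour through $v_h(X_h \cup D^a_h)$ and monotonicity, whereas the paper applies the augmentation property directly to the pair $\bigl(X_h,\, D^a_h \cup (X_h \cap Y_h)\bigr)$.
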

\begin{proof}
By definition, all doctors in $D^a_h$ propose to $h$ in the generation of $X$, and were rejected by $h$; this is again because doctors propose to hospitals in decreasing order of preference. 
Recall that hospitals have matroid rank valuations, therefore, by the matroid augmentation property (\Cref{obs:augmentation}), if $v_h(S) < v_h(T)$ there is some doctor $d \in T \setminus S$ such that $\Delta_h(S,d) >0$. 
Therefore, if $v_h(D^a_h \cup (X_h \cap Y_h)) > |X_h| = v_h(X_h)$, there is some doctor $d$ in $D^a_h$ such that $\Delta_h(X_h, d) = 1$. This contradicts the definition of $D^a_h$ since the doctor $d$ would have been accepted by the hospital $h$ during the construction of the allocation $X$.
Therefore, it must be the case that $v_h(D^a_h \cup (Y_h \cap X_h)) \le |X_h|$.
\end{proof}

Next, we show that $|R_h^a| \le |X_h\setminus Y_h|$.
This proof is somewhat more involved, and requires the construction of an auxiliary set of doctors. 

Fix some $h' \in H \setminus \{h\}$.
We define $X^{t}_{h'}$ and $Y^{t}_{h'}$ as the bundles $X_{h'}$ and $Y_{h'}$ immediately after the iteration where doctor $d_t$ was allocated in \Cref{algo:doctor-round-robin-revisited}. 
We construct a new sequence of bundles $W_{h'}^t$ for $t \in [m] \cup \{0\}$. We initialize $W_{h'}^0=\emptyset$; then, we iterate through $t$ from $1$ to $m$. 
For each $t$, we set $W_{h'}^{t}\gets W_{h'}^{t-1} + d_t$ if:
\begin{enumerate}[(R1)]
    \item\label{rule:XandY} $d_t \in X_{h'} \cap Y_{h'}$. 
    \item\label{rule:XnotY} $d_t \in X_{h'} \setminus Y_{h'}$, and $Y^{t-1}_{h'} + d_t$ is redundant for ${h'}$, i.e., $\Delta_{h'}(Y^{t-1}_{h'},d_t) = 0$.
    \item\label{rule:D} $d_t \in D^a_{h'}$.
\end{enumerate}
The final bundle $W_{h'}^m$ contains $X_{h'}\cap Y_{h'}$ and $D_{h'}^a$, as well as any doctors in $X_{h'}\setminus Y_{h'}$ who would not have been selected by $h'$ during the run of \Cref{algo:doctor-round-robin-revisited} when $h$ misreports its preferences. 

Our basic proof strategy is to show that $W_{h'}^t$ is non-redundant and all doctors in $X_{h'} \setminus W_{h'}^m$ are in $D_{h'}$, i.e., they were denied from $h'$ due to the misreport of hospital $h$. 
We use this to obtain the required upper bound on $|R_{h'}|$. This proof will require the following useful matroid property.
\begin{restatable}{lemma}{lemmatroidproperty}\label{lem:useful-matroid-property}
Let $A$ and $B$ be two sets of doctors which are non-redundant with respect to some $v_h$. Let $d \notin A \cup B$ be some doctor such that $\Delta_h(A, d) = 1$ and $\Delta_h(B, d) = 0$. Then there is some doctor $d' \in B \setminus A$ such that $\Delta_h(A, d') = 1$; that is $A + d'$ is non-redundant with respect to $v_h$.
\end{restatable}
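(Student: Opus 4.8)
The plan is to argue by contradiction, relying only on the submodularity (property (c)) and monotonicity of matroid rank functions. Suppose no $d' \in B \setminus A$ satisfies $\Delta_h(A, d') = 1$. Since the marginals of an MRF lie in $\{0,1\}$, this means $\Delta_h(A, d') = 0$ for every $d' \in B \setminus A$. Intuitively this says that $B$ is entirely spanned by $A$, so anything spanned by $B$ --- in particular $d$, since $\Delta_h(B,d)=0$ --- ought to be spanned by $A$ as well, which would contradict $\Delta_h(A,d)=1$. The proof is just the careful execution of this idea.

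First I would show that adjoining all of $B$ to $A$ leaves the rank unchanged, i.e. $v_h(A \cup B) = v_h(A)$. To see this, enumerate the doctors of $B \setminus A$ as $d'_1, \dots, d'_r$ and add them to $A$ one at a time. By property (c) of MRFs, $\Delta_h\big(A \cup \{d'_1, \dots, d'_{i-1}\}, d'_i\big) \le \Delta_h(A, d'_i) = 0$, and since marginals are nonnegative this intermediate marginal is exactly $0$. Telescoping these marginals yields $v_h(A \cup B) = v_h(A)$.

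Next I would bring $d$ into play. Since $B \subseteq A \cup B$, property (c) again gives $\Delta_h(A \cup B, d) \le \Delta_h(B, d) = 0$, so $v_h(A \cup B + d) = v_h(A \cup B) = v_h(A)$. By monotonicity (which follows from (b) and (c)), $v_h(A + d) \le v_h(A \cup B + d) = v_h(A)$, whence $\Delta_h(A, d) = v_h(A + d) - v_h(A) = 0$, contradicting $\Delta_h(A, d) = 1$. Therefore some $d' \in B \setminus A$ with $\Delta_h(A, d') = 1$ exists; non-redundancy of $A + d'$ is then immediate, since $v_h(A + d') = v_h(A) + 1 = |A| + 1 = |A + d'|$.

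The only step needing care is the first: deriving the global identity $v_h(A \cup B) = v_h(A)$ from the pointwise vanishing of the marginals on $B \setminus A$. This is exactly where submodularity is essential, and one must add the elements of $B\setminus A$ one at a time so that the decreasing-marginals inequality applies at each intermediate set; everything else is routine. An essentially equivalent route would invoke the augmentation property (\Cref{obs:augmentation}) by comparing the independent sets $A + d$ and $B$ to force an exchange element, but the closure-style contradiction above is more self-contained and avoids case analysis on the sizes of $A$ and $B$.
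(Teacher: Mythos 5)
Your proof is correct, but it takes a genuinely different route from the paper's. The paper proves the lemma via the exchange/augmentation property (\Cref{obs:augmentation}): it splits into cases on whether $|B|>|A|$, and in the harder case repeatedly augments $B$ with elements of $(A+d)\setminus B$ (arguing each time that the augmenting element cannot be $d$ itself, since $\Delta_h(B',d)\le\Delta_h(B,d)=0$) until it reaches a set $B'\subseteq A\cup B$ of size $|A|+1$, from which a final augmentation into $A$ produces the desired $d'\in B\setminus A$. You instead run a closure-style contradiction: assuming every $d'\in B\setminus A$ has zero marginal over $A$, you telescope submodular marginals to get $v_h(A\cup B)=v_h(A)$, then use $\Delta_h(A\cup B,d)\le\Delta_h(B,d)=0$ and monotonicity to force $\Delta_h(A,d)=0$, contradicting the hypothesis. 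Each step checks out (in particular you correctly use $d\notin A\cup B$ to apply the decreasing-marginals property, and you add the elements of $B\setminus A$ one at a time so that submodularity applies at each intermediate set). What your version buys: no case analysis on sizes, no iterated exchange argument, and it reveals that non-redundancy of $A$ and $B$ is not actually needed for the existence of $d'$ --- only for the final remark that $A+d'$ is non-redundant, where $v_h(A)=|A|$ is used. What the paper's version buys is that it stays entirely within the augmentation-property toolkit that the surrounding proofs (e.g., \Cref{lem:w-non-redundant}) already rely on, so it reads as part of a uniform style of argument. One cosmetic quibble: monotonicity already follows from property (b) alone (marginals are nonnegative), so citing (c) there is unnecessary.
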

\begin{proof}
If $|B| > |A|$, then this lemma trivially holds via the matroid augmentation property. 

If $|B| \le |A|$, there must be some doctor $d^* \in (A + d) \setminus B$ such that $\Delta_h(B, d^*) = 1$, by the matroid augmentation property applied to the sets $B$ and $A+d$. Note that $d^*\ne d$ by our assumption in the lemma.
Set $B' = B + d^*$ and repeat this process with $B'$ until $|B'| = |A| + 1$. 

There must be some element $d' \in B' \setminus A$ such that $\Delta_h(A, d') = 1$, again by invoking the matroid augmentation property with the sets $B'$ and $A$. 
This element $d'$ must also be in $B \setminus A$ by the construction of $B'$.
\end{proof}

We start by showing that $W_{h'}^t$ is non-redundant with respect to the valuation of hospital $h'$.

\begin{lemma}\label{lem:w-non-redundant}
At any $h' \in H$ and any $t \in [m] \cup \{0\}$, $W_{h'}^{t}$ is non-redundant with respect to $v_{h'}$.
\end{lemma}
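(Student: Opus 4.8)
The plan is to prove the statement by induction on $t$, maintaining the invariant that $W_{h'}^{t}$ is non-redundant, i.e. $v_{h'}(W_{h'}^{t}) = |W_{h'}^{t}|$. The base case $t=0$ is immediate, since $W_{h'}^{0} = \emptyset$. For the inductive step I assume $W_{h'}^{t-1}$ is non-redundant. If $d_t$ is not added (none of R\ref{rule:XandY}--R\ref{rule:D} fires), then $W_{h'}^{t} = W_{h'}^{t-1}$ and we are done. Otherwise, the three rules are mutually exclusive (R\ref{rule:XandY} places $d_t$ in $X_{h'}\cap Y_{h'}$, R\ref{rule:XnotY} in $X_{h'}\setminus Y_{h'}$, and R\ref{rule:D} in $D^a_{h'}\subseteq Y_{h'}\setminus X_{h'}$), so exactly one applies, and it suffices to show $\Delta_{h'}(W_{h'}^{t-1}, d_t)=1$. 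In each case I assume for contradiction that $\Delta_{h'}(W_{h'}^{t-1}, d_t)=0$ and invoke \Cref{lem:useful-matroid-property} to extract a witness doctor that contradicts the behaviour of the serial dictatorship.

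For rules R\ref{rule:XandY} and R\ref{rule:D}, the doctor $d_t$ lies in $Y_{h'}$ and was accepted by $h'$ in the run producing $Y$, so $\Delta_{h'}(Y_{h'}^{t-1}, d_t)=1$. Applying \Cref{lem:useful-matroid-property} with $A=Y_{h'}^{t-1}$ and $B=W_{h'}^{t-1}$ (both non-redundant, the latter by hypothesis, and $d_t\notin A\cup B$ since it is added at step $t$) yields a doctor $d'\in W_{h'}^{t-1}\setminus Y_{h'}^{t-1}$ with $\Delta_{h'}(Y_{h'}^{t-1}, d')=1$. Every doctor added to $W_{h'}$ under R\ref{rule:XandY} or R\ref{rule:D} belongs to $Y_{h'}$, so the only members of $W_{h'}^{t-1}$ outside $Y_{h'}^{t-1}$ are those added under R\ref{rule:XnotY}; hence $d'=d_s$ for some $s<t$ was added by R\ref{rule:XnotY}, whose condition gives $\Delta_{h'}(Y_{h'}^{s-1}, d')=0$. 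Since $Y_{h'}^{s-1}\subseteq Y_{h'}^{t-1}$, the diminishing-returns property (property (c) of matroid rank functions) forces $\Delta_{h'}(Y_{h'}^{s-1}, d')\ge \Delta_{h'}(Y_{h'}^{t-1}, d')=1$, a contradiction.

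For rule R\ref{rule:XnotY}, the doctor $d_t$ lies in $X_{h'}$ and was accepted by $h'$ in the truthful run, so $\Delta_{h'}(X_{h'}^{t-1}, d_t)=1$. Applying \Cref{lem:useful-matroid-property} with $A=X_{h'}^{t-1}$ and $B=W_{h'}^{t-1}$ produces $d'\in W_{h'}^{t-1}\setminus X_{h'}^{t-1}$ with $\Delta_{h'}(X_{h'}^{t-1}, d')=1$. The only doctors of $W_{h'}^{t-1}$ not already in $X_{h'}^{t-1}$ are those added under R\ref{rule:D}, i.e. members of $D^a_{h'}$; by the remark preceding the lemma such a $d'$ strictly prefers $h'$ to the hospital it received under $X$. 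Therefore, when $d'=d_s$ (for some $s<t$) proposed in the truthful run it reached $h'$ before its actual match and was rejected, giving $\Delta_{h'}(X_{h'}^{s-1}, d')=0$; since $X_{h'}^{s-1}\subseteq X_{h'}^{t-1}$, the diminishing-returns property again forces $\Delta_{h'}(X_{h'}^{s-1}, d')\ge \Delta_{h'}(X_{h'}^{t-1}, d')=1$, a contradiction. This completes the induction.

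The main obstacle is that $W_{h'}$ \emph{mixes} doctors drawn from both $X_{h'}$ (rule R\ref{rule:XnotY}) and $Y_{h'}$ (rules R\ref{rule:XandY} and R\ref{rule:D}), so neither $W_{h'}^{t-1}\subseteq X_{h'}^{t-1}$ nor $W_{h'}^{t-1}\subseteq Y_{h'}^{t-1}$ holds, and non-redundancy cannot be read off from a single containment together with monotonicity of marginals. The resolution is to assume the added doctor becomes redundant, use \Cref{lem:useful-matroid-property} to pin down a single blocking doctor $d'$ coming from the \emph{complementary} run, and then contradict a rejection that the serial dictatorship actually made when $d'$ was processed, using that marginal gains only decrease as bundles grow. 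The delicate bookkeeping step is verifying that $d'$ necessarily originates from the opposite rule, so that it carries exactly the rejection fact needed for the contradiction.
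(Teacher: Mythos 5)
Your proof is correct and follows essentially the same strategy as the paper's: induction on $t$, using \Cref{lem:useful-matroid-property} to extract a witness doctor $d'$ from the "opposite" run and then contradicting an acceptance/rejection decision of the serial dictatorship via submodularity. The only (immaterial) differences are that you group Rule \ref{rule:XandY} with Rule \ref{rule:D} (by membership in $Y_{h'}$) where the paper groups it with Rule \ref{rule:XnotY} (by membership in $X_{h'}$), and you instantiate the lemma with the interim bundle $X_{h'}^{t-1}$ rather than $X_{h'}-d_t$.
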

\begin{proof}
We prove this by induction. At $t=0$, $W_{h'}^0 = \emptyset$ and the lemma trivially holds. 
Assume that $W_{h'}^{t'}$ is non-redundant for all $t' < t$. We show that it holds for $t$.

We first show that if Rule \ref{rule:D} is applied, i.e., $d_t \in D_{h'}^a$, then $W_{h'}^{t}$ is non-redundant. 
Assume for contradiction that $W_{h'}^{t-1} + d_t$ is redundant. 
Since $d_t \in D_{h'}^a \subseteq Y_{h'}$ and $Y_{h'}^{t-1} \subseteq Y_{h'}$ we have that $Y^{t-1}_{h'} + d_t$ is non-redundant. Therefore, there must be some doctor $d_j \in W_{h'}^{t-1} \setminus Y^{t-1}_{h'}$ such that $\Delta_{h'}(Y^{t-1}_{h'}, d_j) = 1$ (by \Cref{lem:useful-matroid-property}). 

Since $d_j$ is not in $Y_{h'}^{t-1}$, and $Y_{h'}^j \subseteq Y_{h'}^{t-1}$, we must have that $d_j$ was not added to $Y_{h'}$ at the $j$-th iteration and $d_j \notin Y_{h'}$. 
In particular, since Rules \ref{rule:XandY} and \ref{rule:D} only add doctors that are in $Y_{h'}$, they do not apply to $d_j$. The only way that $d_j$ could have been added to $W_{h'}^{j-1}$ is if Rule \ref{rule:XnotY} applies.
Rule \ref{rule:XnotY} requires that $d_j$ is redundant for $Y_{h'}^{j-1}$. 
Since $v_{h'}$ is submodular and $Y_{h'}^{j-1} \subseteq Y_{h'}^{t-1}$, $\Delta_{h'}(Y_{h'}^{t-1},d_j) = 0$ as well. Thus, $d_j$ cannot have been added to $W_{h'}^{j-1}$ and is not in $W_{h'}^{t-1}$, a contradiction. Thus, $W_{h'}^{t-1}+d_t$ is non-redundant if $d_t$ is added via Rule \ref{rule:D}. 

Next, suppose that we add some doctor $d_t$ such that $d_t \in X_{h'}$. The only way that this can occur is if $d_t$ is added by applying Rules \ref{rule:XandY} and \ref{rule:XnotY}.
Assume for contradiction that $W_{h'}^{t-1} + d_t$ is redundant. 
By \Cref{lem:useful-matroid-property}, there must be some doctor $d_j \in W_{h'}^{t-1} \setminus X_{h'}$ such that $j < t$ and $X_{h'} - d_t + d_j$ is non-redundant. 
The doctor $d_j$ is not in $X_{h'}$ and thus it was not added due to Rules \ref{rule:XandY} or \ref{rule:XnotY}. 
Thus, $d_j$ was added due to Rule \ref{rule:D} and is in $D^a_{h'}$. Therefore, $d_j$ must have proposed to $h'$ during the generation of allocation $X$, and was rejected. 
Recall that $X_{h'}^{j-1} \subseteq X_{h'}$; thus, if $X_{h'} - d_t + d_j$ is non-redundant, then $X^{j-1}_{h'} + d_j$ is non-redundant by submodularity of $v_{h'}$, and $h'$ should have accepted $d_j$ in the generation of $X$. This is a contradiction and therefore, $W_{h'}^t$ must be non-redundant. 
\end{proof}
\Cref{lem:w-non-redundant} immediately implies an upper bound on the size of $W_{h'}^m$.
\begin{lemma}\label{lem:W-lessthan-Y}
    $|W_{h'}^m|\le |Y_{h'}|$
\end{lemma}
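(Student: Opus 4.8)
The plan is to deduce the size bound from two facts already in hand: $W_{h'}^m$ is non-redundant (\Cref{lem:w-non-redundant}), and $Y_{h'}$ is non-redundant because \Cref{algo:doctor-round-robin-revisited} only ever adds a doctor $d$ to a hospital when $\Delta_{h'}(\cdot,d)=1$. Since both sets are non-redundant, $|W_{h'}^m| = v_{h'}(W_{h'}^m)$ and $|Y_{h'}| = v_{h'}(Y_{h'})$, so it suffices to compare their $v_{h'}$-values. I would argue by contradiction and invoke the matroid augmentation property (\Cref{obs:augmentation}): if $|W_{h'}^m| > |Y_{h'}|$, then $v_{h'}(W_{h'}^m) > v_{h'}(Y_{h'})$, so there is a doctor $d \in W_{h'}^m \setminus Y_{h'}$ with $\Delta_{h'}(Y_{h'}, d) = 1$.

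The crux is then a case analysis on how this particular doctor $d$ entered $W_{h'}^m$. The three rules partition cleanly with respect to membership in $Y_{h'}$: Rule \ref{rule:XandY} only adds doctors in $X_{h'} \cap Y_{h'} \subseteq Y_{h'}$, and Rule \ref{rule:D} only adds doctors in $D^a_{h'} \subseteq Y_{h'}$. Hence any doctor in $W_{h'}^m \setminus Y_{h'}$ — and in particular our $d$ — must have been inserted via Rule \ref{rule:XnotY}. By the definition of that rule, $d \in X_{h'}\setminus Y_{h'}$ and, writing $t$ for the iteration at which $d$ was processed, $\Delta_{h'}(Y^{t-1}_{h'}, d) = 0$.

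The finishing step is a submodularity (diminishing-returns) argument: since $Y^{t-1}_{h'} \subseteq Y_{h'}$, property (c) of matroid rank functions gives $\Delta_{h'}(Y_{h'}, d) \le \Delta_{h'}(Y^{t-1}_{h'}, d) = 0$, contradicting $\Delta_{h'}(Y_{h'}, d) = 1$. This contradiction rules out $|W_{h'}^m| > |Y_{h'}|$ and yields the lemma. I do not expect a genuine obstacle here; the only thing to be careful about is the bookkeeping claim that the Rule~\ref{rule:XnotY} test is evaluated against the \emph{partial} bundle $Y^{t-1}_{h'}$ (not the final $Y_{h'}$), which is exactly what lets submodularity carry the nonredundancy from the partial to the final bundle in the right direction.

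\begin{proof}
By \Cref{lem:w-non-redundant}, $W_{h'}^m$ is non-redundant, so $|W_{h'}^m| = v_{h'}(W_{h'}^m)$; likewise $Y_{h'}$ is non-redundant by construction of \Cref{algo:doctor-round-robin-revisited}, so $|Y_{h'}| = v_{h'}(Y_{h'})$. Suppose for contradiction that $|W_{h'}^m| > |Y_{h'}|$. Then $v_{h'}(W_{h'}^m) > v_{h'}(Y_{h'})$, and by the matroid augmentation property (\Cref{obs:augmentation}) there is a doctor $d \in W_{h'}^m \setminus Y_{h'}$ with $\Delta_{h'}(Y_{h'}, d) = 1$. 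Rules \ref{rule:XandY} and \ref{rule:D} only add doctors lying in $Y_{h'}$ (namely in $X_{h'}\cap Y_{h'}$ and $D^a_{h'} \subseteq Y_{h'}$ respectively), so $d$ must have been added via Rule \ref{rule:XnotY}. Let $t$ be the iteration at which $d = d_t$ was processed; then $\Delta_{h'}(Y^{t-1}_{h'}, d_t) = 0$. Since $Y^{t-1}_{h'} \subseteq Y_{h'}$, the diminishing-returns property of $v_{h'}$ gives $\Delta_{h'}(Y_{h'}, d_t) \le \Delta_{h'}(Y^{t-1}_{h'}, d_t) = 0$, contradicting $\Delta_{h'}(Y_{h'}, d) = 1$. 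Hence $|W_{h'}^m| \le |Y_{h'}|$.
\end{proof}
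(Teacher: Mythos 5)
Your proof is correct and follows essentially the same route as the paper's: use non-redundancy of $W_{h'}^m$ plus the matroid augmentation property to extract a doctor $d \in W_{h'}^m \setminus Y_{h'}$ with positive marginal for $Y_{h'}$, observe that such a doctor could only have entered via Rule~\ref{rule:XnotY}, and derive a contradiction. Your explicit submodularity step lifting $\Delta_{h'}(Y^{t-1}_{h'}, d_t)=0$ to $\Delta_{h'}(Y_{h'}, d_t)=0$ is a small point the paper leaves implicit, but it is the same argument.
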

\begin{proof}
According to \Cref{lem:w-non-redundant}, $W_{h'}^m$ is non-redundant with respect to $v_{h'}$. Therefore, if $|W_{h'}^m| > |Y_{h'}|$, there must be at least one doctor $d \in W_{h'}^m \setminus Y_{h'}$ such that $\Delta_{h'}(Y_{h'}, d) = 1$. 
This doctor $d$ has to satisfy $d \notin Y_{h'}$.
The only way that a doctor not in $Y_{h'}$ is added to $W_{h'}^m$ is via the application of Rule \ref{rule:XnotY}. 
However, Rule \ref{rule:XnotY} requires that $\Delta_{h'}(Y_{h'},d) = 0$; thus, $d$ should have never been added to $W_{h'}^m$ in the first place.
\end{proof}

\begin{lemma}\label{lem:rh-upperbound}
For all $h' \ne h$, $|R_{h'}| \le |R^a_{h'}|$.
\end{lemma}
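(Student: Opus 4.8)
The plan is to derive the bound purely from the surrogate bundle $W_{h'}^m$ and the size inequality $|W_{h'}^m| \le |Y_{h'}|$ of \Cref{lem:W-lessthan-Y}, by splitting both $W_{h'}^m$ and $Y_{h'}$ into disjoint pieces and cancelling their common part $X_{h'} \cap Y_{h'}$. Throughout I keep the dummy-hospital convention, so every doctor has a well-defined match $X(d)$ and the preference comparisons below are always against a genuine hospital; I also use that $h' \ne h$, so $h'$ reports truthfully in both runs and its accept/reject decisions are governed by $v_{h'}$.

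First I would record the exact composition of $W_{h'}^m$. By construction it is the disjoint union of the doctors added by Rule \ref{rule:XandY} (which is $X_{h'} \cap Y_{h'}$), the doctors added by Rule \ref{rule:XnotY} (call this set $E \subseteq X_{h'} \setminus Y_{h'}$), and the doctors added by Rule \ref{rule:D} (which is $D_{h'}^a$). These three sets are pairwise disjoint: $E$ avoids $Y_{h'}$ while the other two lie in $Y_{h'}$, and $D_{h'}^a \cap X_{h'} = \emptyset$ since a doctor denied from another hospital was matched elsewhere under $X$. Hence $|W_{h'}^m| = |X_{h'}\cap Y_{h'}| + |E| + |D_{h'}^a|$. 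I would then observe that $R_{h'} \subseteq E$: a newly rejected doctor $d \in R_{h'}$ satisfies $d \in X_{h'} \setminus Y_{h'}$, and at the iteration $t$ at which $d = d_t$ proposes to $h'$ during the construction of $Y$, the bundle held by $h'$ is exactly $Y_{h'}^{t-1}$; since $h'$ rejects $d$ we get $\Delta_{h'}(Y_{h'}^{t-1}, d) = 0$, which is precisely the trigger of Rule \ref{rule:XnotY}.

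The crux, and the step I expect to be the main obstacle, is the partition $Y_{h'} \setminus X_{h'} = D_{h'}^a \sqcup R_{h'}^a$. Take any $d \in Y_{h'} \setminus X_{h'}$; it is accepted by $h'$ under $Y$ yet matched to some $X(d) \ne h'$ under $X$, and by strictness of preferences exactly one of $X(d) \succ_d h'$ or $h' \succ_d X(d)$ holds. If $X(d) \succ_d h'$, then while building $Y$ the doctor proposes down its list and passes $X(d)$ before reaching $h'$, so it was rejected by $X(d)$; as $d \in X_{X(d)}$ this means $d \in R_{X(d)}$, hence $d \in R_{h'}^a$. If instead $h' \succ_d X(d)$, then $d$ moved up from $X(d)$ to the more-preferred $h'$, so $X(d)$ was denied $d$ and $d \in D_{h'}^a$. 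The reverse inclusions and disjointness follow from the same direction-of-movement dichotomy, giving $|Y_{h'} \setminus X_{h'}| = |D_{h'}^a| + |R_{h'}^a|$; this is exactly where the serial-dictatorship proposal order, strictness of preferences, and the dummy-hospital convention (to handle doctors unmatched under $X$) all get used.

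Finally I would combine the pieces. Writing $|Y_{h'}| = |X_{h'}\cap Y_{h'}| + |D_{h'}^a| + |R_{h'}^a|$ and invoking \Cref{lem:W-lessthan-Y}, the chain $|X_{h'}\cap Y_{h'}| + |E| + |D_{h'}^a| = |W_{h'}^m| \le |Y_{h'}| = |X_{h'}\cap Y_{h'}| + |D_{h'}^a| + |R_{h'}^a|$ cancels to $|E| \le |R_{h'}^a|$, and since $R_{h'} \subseteq E$ we conclude $|R_{h'}| \le |R_{h'}^a|$.
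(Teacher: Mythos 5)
Your proof is correct and follows essentially the same route as the paper: both arguments rest on the decomposition of $W_{h'}^m$ into $(X_{h'}\cap Y_{h'})\sqcup E\sqcup D_{h'}^a$ with $R_{h'}\subseteq E$ (equivalently, $X_{h'}\setminus W_{h'}^m\subseteq D_{h'}$), the partition $Y_{h'}=(X_{h'}\cap Y_{h'})\sqcup D_{h'}^a\sqcup R_{h'}^a$, and \Cref{lem:W-lessthan-Y}. The only difference is bookkeeping --- you cancel the common terms directly from $|W_{h'}^m|\le|Y_{h'}|$ rather than passing through the intermediate bound $|X_{h'}|-|Y_{h'}|\le|D_{h'}|-|D_{h'}^a|$, and you spell out the $Y_{h'}$ partition that the paper only asserts.
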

\begin{proof}
From the construction of $W_{h'}^m$, every doctor in $X_{h'}$ that is not in $W_{h'}^m$ must be in $D_{h'}$ according to Rule \ref{rule:XnotY}. 
Therefore, $X_{h'} \setminus W_{h'}^m \subseteq D_{h'}$; in particular, $|X_{h'}| - |X_{h'} \cap W_{h'}^m| \le |D_{h'}|$. 
Recall that every doctor in $D_{h'}^a$ is also in $W_{h'}^m$ according to Rule \ref{rule:D}. In other words, $W_{h'}^m$ contains all doctors that were denied from some other hospital when $h$ misreports, and were assigned to $h'$ as a result. 
Thus, $W_{h'}^m \setminus X_{h'} = D^a_{h'}$; in particular, $|W_{h'}^m| - |X_{h'} \cap W_{h'}^m| = |D^a_{h'}|$. 

Finally, by \Cref{lem:W-lessthan-Y}, $|W_{h'}^m| \le |Y_{h'}|$.
Combining these observations we get:
\begin{align}
    |X_{h'}| - |Y_{h'}| &\le |X_{h'}| - |W_{h'}^m| = |X_{h'}| - |X_{h'}\cap W_{h'}^m| + |X_{h'}\cap W_{h'}^m|- |W_{h'}^m| \notag\\
    &= |X_{h'}| - |X_{h'}\cap W_{h'}^m| - (|W_{h'}^m|-|X_{h'}\cap W_{h'}^m|)\le |D_{h'}| - |D^a_{h'}|.\label{eq:X-YleD-Da}
\end{align}
Thus, $|X_{h'}| - |Y_{h'}| \le |D_{h'}| - |D^a_{h'}|$.
Consider the set of doctors $X_{h'}$ assigned to $h'$ when the hospital $h$ truthfully reports its valuation.
It comprises of three disjoint sets of doctors: 
\begin{inparaenum}[(a)]
\item doctors that are assigned to $h'$ under both $X_{h'}$ and $Y_{h'}$, i.e., $X_{h'} \cap Y_{h'}$
\item doctors that were denied from $h'$, i.e. the set $D_{h'}$, and
\item doctors that were newly rejected by $h'$, i.e. the set $R_{h'}$.  
\end{inparaenum}
Thus, $X_{h'} = (X_{h'} \cap Y_{h'}) \cup D_{h'} \cup R_{h'}$. Similarly, we can show that $Y_{h'} = (X_{h'} \cap Y_{h'}) \cup D^a_{h'} \cup R^a_{h'}$. 
Therefore, 
\begin{align}
|X_{h'}| - |Y_{h'}| &= |X_{h'} \cap Y_{h'}| +|D_{h'}| +|R_{h'}|-\left(|X_{h'} \cap Y_{h'}| +|D^a_{h'}| +|R^a_{h'}|\right)\notag\\
&= |D_{h'}| - |D_{h'}^a| + |R_{h'}| - |R_{h'}^a|.\label{eq:X-Y-DDaRRa-bound}
\end{align}
Plugging in the upper bound in \Cref{eq:X-YleD-Da} into \Cref{eq:X-Y-DDaRRa-bound} we get  
\begin{align*}
    |D_{h'}| - |D_{h'}^a| + |R_{h'}| - |R_{h'}^a| \le |D_{h'}| - |D_{h'}^a| \Rightarrow |R_{h'}| \le |R_{h'}^a|
\end{align*}
which concludes the proof.
\end{proof}

We are now ready to prove the final lemma of the theorem.
\begin{lemma}\label{lem:rah-upperbound}
$|R^a_h| \le |X_h \setminus Y_h|$.
\end{lemma}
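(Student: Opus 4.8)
The plan is to avoid reasoning directly about hospital $h$'s misreported valuation $v'_h$, and instead derive the bound globally from the per-hospital guarantee already established in \Cref{lem:rh-upperbound} for every truthful hospital. The key realization is that a newly-rejected doctor must be ``absorbed'' somewhere: summed over all hospitals, the number of doctors gained through rejection cannot exceed the number lost through rejection, and since every hospital other than $h$ gains (via rejection) at least as many as it loses, hospital $h$ must itself gain no more than it loses.

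First I would record two bookkeeping observations. A doctor $d$ lies in $R_{h'}$ only if $X(d)=h'$, so each doctor is newly rejected by at most one hospital and the sets $\{R_{h'}\}_{h'\in H}$ are pairwise disjoint; note also that a newly-rejected doctor has $X(d)\in H$ since the dummy $h_0$ never rejects. Likewise $d\in R^a_{h'}$ only if $Y(d)=h'$, so the sets $\{R^a_{h'}\}_{h'\in H}$ are pairwise disjoint (acceptance in a serial dictatorship is permanent, so $Y(d)$ is well-defined). Moreover, every doctor counted in some $R^a_{h'}$ was, by definition, newly rejected by a hospital other than $h'$, namely $X(d)$, and hence belongs to $R_{X(d)}$. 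This yields the containment $\bigsqcup_{h'\in H}R^a_{h'}\subseteq\bigsqcup_{h'\in H}R_{h'}$ (the only newly-rejected doctors not recaptured on the left are those that end up unallocated in $Y$), so that
\[
\sum_{h'\in H}|R^a_{h'}|\ \le\ \sum_{h'\in H}|R_{h'}|.
\]

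Next I would split each side at $h$ and invoke \Cref{lem:rh-upperbound}. Writing the left-hand sum as $|R^a_h|+\sum_{h'\ne h}|R^a_{h'}|$ and the right-hand sum as $|R_h|+\sum_{h'\ne h}|R_{h'}|$, and using $|R_{h'}|\le|R^a_{h'}|$ for every $h'\ne h$, we obtain
\[
|R^a_h|+\sum_{h'\ne h}|R^a_{h'}|\ \le\ |R_h|+\sum_{h'\ne h}|R_{h'}|\ \le\ |R_h|+\sum_{h'\ne h}|R^a_{h'}|.
\]
Cancelling the common finite quantity $\sum_{h'\ne h}|R^a_{h'}|$ gives $|R^a_h|\le|R_h|$. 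Finally, every doctor in $R_h$ satisfies $X(d)=h$ and $Y(d)\ne h$, so $R_h\subseteq X_h\setminus Y_h$, and therefore $|R^a_h|\le|R_h|\le|X_h\setminus Y_h|$, as required.

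The main obstacle I anticipate lies entirely in the first step's bookkeeping: one must check that ``newly rejected'' is witnessed by the unique hospital $X(d)\in H$ and that ``accepted'' is permanent, which is exactly what makes both families of sets disjoint and makes the containment hold up to doctors that go unallocated. Everything after that is a short counting manipulation, and this route deliberately sidesteps any direct analysis of the misreport $v'_h$ or of the matroid exchange structure, which would be the delicate part of a more hands-on argument (e.g., a $W$-style construction for $h$ itself).
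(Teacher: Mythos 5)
Your proof is correct and follows essentially the same route as the paper's: both arguments combine the per-hospital bound $|R_{h'}|\le|R^a_{h'}|$ from \Cref{lem:rh-upperbound} with a global conservation count of newly-rejected versus re-accepted doctors. Your use of the pairwise disjointness of $\{R_{h'}\}$ and $\{R^a_{h'}\}$ to reduce everything to a clean sum-and-cancel step (yielding the slightly stronger $|R^a_h|\le|R_h|$) is a tidier bookkeeping of the same idea that the paper expresses via unions and an explicit intersection term.
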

\begin{proof}
This proof follows from Lemma \ref{lem:rh-upperbound}. 
For any $h'\ne h$, we note that each doctor in $R^a_{h'}$ is either from $R_{h''}$ for some $h''$ or $X_h \setminus Y_h$. 
Therefore, we have $\bigcup_{h' \ne h} R^a_{h'} \subseteq (\bigcup_{h' \ne h} R_{h'} ) \cup (X_h \setminus Y_h)$. 
By re-arranging terms, we conclude that 
\begin{align}
    \left(\sum_{h' \ne h} |R^a_{h'}|\right) - \left(\left|\left(\bigcup_{h' \ne h}R^a_{h'}\right) \cap \left(\bigcup_{h' \ne h} R_{h'}\right)\right|\right) \le |X_h \setminus Y_h| \label{eq:rah-upper}
\end{align}

To prove the lemma, we first use the fact that $R^a_{h} \subseteq \bigcup_{h' \ne h} R_{h'} \setminus \bigcup_{h' \ne h} R^a_{h'}$. This follows from the definition of $R^a_h$ as doctors which have been newly rejected by other hospitals but accepted by $h$. We can therefore write,

\begin{align*}
    |R^a_{h}| &\le \sum_{h' \ne h}|R_{h'}| - \left|\left(\bigcup_{h' \ne h}R^a_{h'}\right) \cap \left(\bigcup_{h' \ne h} R_{h'}\right)\right| \le \sum_{h' \ne h}|R^a_{h'}| - \left|\left(\bigcup_{h' \ne h}R^a_{h'}\right) \cap \left(\bigcup_{h' \ne h} R_{h'}\right)\right|  \le |X_h \setminus Y_h|
\end{align*}
The second inequality follows from the fact that $|R_{h'}| \le |R^a_{h'}|$ for all $h' \ne h$, as per \Cref{lem:rh-upperbound}. The third inequality follows from \eqref{eq:rah-upper}.
\end{proof}

\begin{restatable}{theorem}{thmsdsp}\label{thm:roundrobin-hospital-sp}
\Cref{algo:doctor-round-robin-revisited} is $2$-approximately strategyproof for the hospitals.
\end{restatable}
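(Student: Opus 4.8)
The plan is to bound the true value $v_h(Y_h)$ that hospital $h$ obtains under its misreport by $2v_h(X_h) = 2|X_h|$, where $X$ is the truthful allocation and $Y$ the allocation after $h$ misreports; since $v_h(X_h) = |X_h|$ by non-redundancy, this inequality is exactly the $2$-approximate hospital-SP guarantee. The starting point is the decomposition $Y_h = (X_h \cap Y_h) \cup D^a_h \cup R^a_h$ into three disjoint sets, the same partition used for the other hospitals in the proof of \Cref{lem:rh-upperbound}. The first thing I would check is that this decomposition holds for $h$ itself: every doctor in $Y_h \setminus X_h$ either moved to $h$ from a \emph{less}-preferred hospital (so it was denied elsewhere, placing it in $D^a_h$) or from a \emph{more}-preferred hospital that newly rejected it (placing it in $R^a_h$), and the $h_0$ convention for unallocated doctors absorbs the doctors previously unmatched. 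The three sets are disjoint because $D^a_h$ and $R^a_h$ contain only doctors outside $X_h$ that moved up and down respectively.

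With the decomposition in hand, the argument is a single application of subadditivity. Since $v_h$ is monotone submodular with $v_h(\emptyset)=0$, it is subadditive, so
\begin{align*}
v_h(Y_h) \le v_h\bigl((X_h \cap Y_h) \cup D^a_h\bigr) + v_h(R^a_h).
\end{align*}
The first term is at most $|X_h|$ directly by \Cref{lemma:deny-accept-upper-bound}. For the second term, monotonicity gives $v_h(R^a_h) \le |R^a_h|$, and \Cref{lem:rah-upperbound} yields $|R^a_h| \le |X_h \setminus Y_h| \le |X_h|$. Combining these bounds,
\begin{align*}
v_h(Y_h) \le |X_h| + |X_h| = 2|X_h| = 2\,v_h(X_h),
\end{align*}
which is the claim.

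The genuine difficulty is not in this final assembly but in the two lemmas it invokes, so in my plan the work is front-loaded. \Cref{lemma:deny-accept-upper-bound} follows quickly from the matroid augmentation property: were the first term to exceed $|X_h|$, some doctor in $D^a_h$ could have been accepted by $h$ during the honest run, contradicting that it was rejected. The hard part is \Cref{lem:rah-upperbound}, which rests on the auxiliary bundles $W_{h'}^t$ and the charging argument establishing $|R_{h'}| \le |R^a_{h'}|$ for every $h' \ne h$; once that accounting is in place, the newly-rejected doctors landing at $h$ can be charged against $X_h \setminus Y_h$. Given those lemmas, the theorem itself is just the short subadditivity calculation above, and the only subtlety I would be careful about is confirming the $Y_h$ decomposition for the misreporting hospital $h$ rather than merely for the hospitals $h' \ne h$ treated in the preceding lemmas.
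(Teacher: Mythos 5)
Your proposal is correct and follows essentially the same route as the paper: the identical decomposition $Y_h = (X_h \cap Y_h) \cup D^a_h \cup R^a_h$, the same subadditivity split, and the same invocations of \Cref{lemma:deny-accept-upper-bound} and \Cref{lem:rah-upperbound} to bound the two terms by $|X_h|$ each. Your care about verifying the decomposition for the misreporting hospital $h$ itself matches the brief justification the paper gives before its final calculation.
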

\begin{proof}
Assume that hospital $h\in H$ misreports its valuation, and that the resulting output of \Cref{algo:doctor-round-robin-revisited} is now $Y$ instead of $X$. 
According to \Cref{lemma:deny-accept-upper-bound}, $v_h(D_{h}^a \cup (X_h \cap Y_h))\le |X_{h}|$. According to \Cref{lem:rah-upperbound}, $|R_{h}^a| \le |X_{h}\setminus Y_{h}|$. 

If a hospital $h'$ has a doctor $d$ in $Y_{h'}\setminus X_{h'}$, then $d$ must either be denied from another hospital and assigned to $h'$, i.e., it is in $R^a_{h'}$; alternatively, $d$ is in $D^a_{h'}$, in which case $d$ prefers $h'$ to whichever hospital they were assigned to under $X$. 
We can partition $h$'s assignment when it misreports into $Y_h = (Y_h \cap X_h) \cup D^a_{h} \cup R^a_{h}$. Using this, we get
\begin{align*}
    v_h(Y_h) &= v_h((X_h \cap Y_h) \cup D_h^a \cup R_h^a) \le v_h((X_h \cap Y_h) \cup D_h^a) + |R_h^a| \le |X_h| + |X_h \setminus Y_h| \le 2|X_h|.
\end{align*}
which concludes the proof.
\end{proof}

This analysis is tight, once again due to Example \ref{ex:impossibility}. The same misreport described in the example shows that we cannot beat $2$-approximate hospital-SP with the serial dictatorship mechanism.

\section{Doctors with Cardinal Utilities}\label{sec:cardinal-utilities}
We now turn our attention to a more general problem where hospitals still have matroid rank valuations but doctors have cardinal valuations over the doctors. These valuations are denoted by the function $c_d: H \rightarrow \R$; $c_d(h)$ denotes the utility that doctor $d$ derives from being assigned to the hospital $h$. Note that cardinal preferences can encode both ties and incomplete orders in doctors preferences; we assume setting $c_d(h) < 0$ implies that the doctor $d$ prefers to be unallocated than allocated to hospital $h$. 

In this section, we continue to define stability using \Cref{def:stability}. Specifically, we note that \Cref{def:blocking-pair} continues to be well-defined and \Cref{obs:blocking} holds in the presence of ties.
\subsection{Maximizing Doctor Welfare}
We first show that the problem of maximizing doctor-\USW is an instance of the weighted matroid intersection problem which admits polynomial time algorithms. The maximum doctor-\USW allocation among all non-redundant allocations is guaranteed to be stable, so this offers us an efficient algorithm to compute a stable, doctor-\USW maximizing allocation. 
We can in fact make a slightly stronger statement.
\begin{restatable}{lemma}{lemmaxdoctorwelfare}\label{lem:max-doctor-welfare-max-hospital-USW}
    There exists a polynomial time algorithm that computes a stable, max doctor-\USW allocation. Moreover, the output allocation has the highest possible hospital-\USW among all stable, max doctor-\USW allocations.
\end{restatable}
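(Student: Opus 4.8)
The plan is to prove both assertions by reducing to weighted matroid intersection and then arguing about the structure of the optimal solution. First I would set up the matroid intersection instance. Consider the ground set of all possible doctor-hospital pairs $E = \{(d, h) : d \in D, h \in H\}$. I would define two matroids on $E$. The first matroid $\mathcal{M}_1$ enforces that each doctor is assigned to at most one hospital: a set of pairs is independent iff no two pairs share the same doctor; this is a partition matroid. The second matroid $\mathcal{M}_2$ enforces the hospital-side constraints: a set of pairs is independent iff, for each hospital $h$, the doctors paired with $h$ form an independent set in $h$'s matroid (i.e., the bundle is non-redundant with respect to $v_h$). This is a direct sum of the hospitals' matroids, restricted appropriately. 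A common independent set of $\mathcal{M}_1$ and $\mathcal{M}_2$ corresponds exactly to a non-redundant allocation. Assigning weight $c_d(h)$ to each pair $(d,h)$, a maximum-weight common independent set corresponds to a max doctor-\USW non-redundant allocation, computable in polynomial time by standard weighted matroid intersection algorithms.

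Next I would establish that a max doctor-\USW non-redundant allocation is stable. By \Cref{obs:blocking}, it suffices to rule out the existence of a doctor $d$ and hospital $h$ with $\Delta_h(X_h, d) = 1$ and $h \succ_d X(d)$ (where in the cardinal setting $h \succ_d X(d)$ means $c_d(h) > c_d(X(d))$, with the convention that unallocated doctors have value $0$ and negative utilities mean a doctor prefers being unallocated). If such a pair existed, I would construct a new allocation $X'$ by removing $d$ from its current hospital $X(d)$ and adding it to $h$. Since $\Delta_h(X_h, d) = 1$, the bundle $X_h + d$ remains non-redundant, so $X'$ is still non-redundant. The change in doctor-\USW is exactly $c_d(h) - c_d(X(d)) > 0$, contradicting the optimality of $X$. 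Care must be taken to handle doctors with negative utility: such doctors are left unallocated in any optimal solution (dropping them strictly increases welfare), and the blocking-pair definition respects this via the sign convention, so the argument goes through uniformly.

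For the stronger claim about hospital-\USW, I would argue within the set $\mathcal{S}$ of all stable, max doctor-\USW allocations. The key observation is that restricting to this set is again a matroid-intersection-type problem on a reduced ground set, or more directly, that among all max-weight common independent sets one can further optimize a secondary objective. Concretely, I would note that the collection of max doctor-\USW allocations corresponds to the maximum-weight common independent sets, and I can optimize hospital-\USW $= \sum_h v_h(X_h) = |\{(d,h) \in X\}|$ (by non-redundancy, the total number of assigned pairs) as a secondary criterion. The cleanest route is a lexicographic weighting: perturb the weight of each pair to $c_d(h) + \varepsilon$ for a sufficiently small $\varepsilon > 0$, so that among all max doctor-\USW solutions the algorithm breaks ties in favor of maximizing the number of assigned pairs, which equals hospital-\USW by non-redundancy. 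I expect the main obstacle to be verifying that this secondary optimization remains within the stable set, i.e., that maximizing the number of matched pairs subject to fixed doctor welfare does not sacrifice stability; but since \emph{every} max doctor-\USW non-redundant allocation is already stable by the second paragraph, the secondary optimization stays inside $\mathcal{S}$ automatically, and the perturbation argument (with $\varepsilon$ smaller than the minimum nonzero gap between distinct doctor-welfare values) cleanly yields an allocation maximizing hospital-\USW among all stable max doctor-\USW allocations.
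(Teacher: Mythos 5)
Your construction of the matroid intersection instance and your stability argument are exactly the paper's: the same two matroids on the ground set of doctor--hospital pairs, and the same observation (via \Cref{obs:blocking}) that a blocking pair would let you reassign a single doctor to strictly increase doctor-\USW while preserving non-redundancy. Where you diverge is the secondary objective. The paper invokes the result of \citet{brezovec1986matroidintersection} that one can compute a maximum-weight common independent set of each \emph{fixed cardinality} $k$, runs this for all $m$ values of $k$, and picks the best; since cardinality equals hospital-\USW under non-redundancy, this directly yields the max doctor-\USW allocation of largest size. You instead fold the cardinality objective into the weights via a lexicographic perturbation $c_d(h) + \varepsilon$. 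Both routes are sound, and both correctly use the fact that every non-redundant max doctor-\USW allocation is stable, so the secondary optimization cannot leave the stable set. Your approach needs one quantitative correction and one caveat: since an allocation contains up to $m$ pairs, you need $m\varepsilon$ (not just $\varepsilon$) to be smaller than the minimum nonzero gap between achievable doctor-welfare values, and such an $\varepsilon$ is only guaranteed to be polynomially representable when the $c_d(h)$ are rationals of bounded bit-length. The paper's fixed-cardinality enumeration sidesteps these numerical issues entirely at the cost of $m$ matroid-intersection calls; your perturbation is a single call but slightly more delicate to justify in full generality.
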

\begin{proof}
The ground set of both matroids is the set of all doctor hospital pairs $E = \{(d, h): d \in D, h \in H\}$. In the first matroid, a set $S$ is independent if the allocation $X$ where $X_h = \{d: (d, h) \in S\}$ is non-redundant.
The second matroid is a partition matroid where each part is of the form $\{(d, h):h \in H\}$ for each $d \in D$. This matroid ensures that each doctor is allocated at most once. Each element $(d, h)$ has weight $c_d(h)$.

It is easy to see that the max weighted independent set at the intersection of these two matroids corresponds to the non-redundant allocation with maximum doctor-\USW. For this problem there exist efficient algorithms to compute the max weighted independent set subject to the independent set having size equal to some integer $k$ \cite{brezovec1986matroidintersection}. 
The size of the independent set is equal to the hospital-\USW of the output allocation. Therefore, by trying all $m$ possible values of $k$, we can find the max doctor-\USW allocation with the highest possible hospital-\USW.
\end{proof}

By setting $c_d(h)$ values appropriately, the above result can be used to compute both the outputs of Algorithms \ref{algo:red-jacket} and \ref{algo:doctor-round-robin-revisited}.
If doctors have ordinal preferences, we use $u_d(h)$ to denote the Borda score of hospital $h$ according to doctor $d$, i.e., if $d$ ranks $h$ at the $k$-th position, then $u_d(h) = n-k$. 
If we instantiate Lemma \ref{lem:max-doctor-welfare-max-hospital-USW} with $c_d(h) = (2n + 1)^{n-d}{u_d(h)}$, we recover the output of the doctor Serial Dictatorship mechanism (\Cref{algo:doctor-round-robin-revisited}). 
If we set $c_d(h) = M + (2n + 1)^{n-d}{u_d(h)}$, where $M$ is a very large number, we recover \Cref{algo:red-jacket}. 

Assuming we receive cardinal valuations for each doctor $c_d$, we can compute a doctor-\USW maximizing non-redundant allocation. Moreover, since each doctor gets allocated to at most one hospital, by instantiating Lemma \ref{lem:max-doctor-welfare-max-hospital-USW} with valuations $c'_d(h) = \log c_d(h)$, we compute a non-redundant doctor-\NSW maximizing allocation. Both these allocations are stable because, if either allocation is not stable, there is a hospital $h$ and a doctor $d$ such that we can move the doctor $d$ to hospital $h$ to increase its utility while not affecting any other doctor's utilities (\Cref{obs:blocking}). This contradicts the fact that we picked the doctor-\USW (or doctor-\NSW) maximizing allocation.

\subsection{Doctor-Optimal and Hospital Nash Optimal Allocations}
In Lemma \ref{lem:max-doctor-welfare-max-hospital-USW}, we showed that subject to maximizing doctor-\USW, we can maximize hospital-\USW as well. In this section we show that subject to maximizing doctor-\USW, we can maximize hospital-\NSW. By appropriately setting doctor valuations, we can prove a similar result with doctor-\NSW maximizing allocations. 

We say that an allocation $X$ is $k$-maximal doctor-\USW ($k$-MDW) if it maximizes doctor-\USW, subject to hospitals having a \USW of exactly $k$. Such an allocation can be computed in polynomial time (using a variant of Lemma \ref{lem:max-doctor-welfare-max-hospital-USW}).

Our algorithm is based on \emph{utility capping} and {\em local search}. 
Given a hospital $h \in H$ with a valuation function $v_h$ and an integer $c >0$, we define the \emph{capped valuation function} of $h$ as $v^c_h(S) = \min\{v_h(S), c\}$. We note that if $v_h$ is a matroid rank function, so is $v_h^c$. 
The valuation $v^c_h$ limits hospital $h$ to a maximum of $c$ doctors. 

Our algorithm works as follows: for each value of $k$, we find a $k$-MDW allocation $X$. 
Next, we cap the utility of every hospital $h$ at $c_h = |X_h|$. 

We utilize a local search technique to find a doctor-\USW optimal/hospital-\NSW optimal allocation, subject to maintaining total hospital welfare at $k$. 
We identify two hospitals $i$ and $j$ such that $|X_i| \ge |X_j| +2$ or $|X_i| = |X_j| + 1$ and $j < i$, and there is a $k$-MDW allocation $X'$ such that 
\begin{enumerate}[(i)]
    \item $|X'_i| = |X_i| - 1$,
    \item $|X'_j| = |X_j| + 1$, and 
    \item $|X'_p| = |X_p|$ for all $p \ne i, j$.
\end{enumerate}
We can efficiently check whether such an allocation $X'$ exists by increasing the cap of hospital $j$, $c_j$, by 1 and decreasing the cap of hospital $i$, $c_i$, by 1 before computing a $k$-MDW allocation. 
We check if this cap change reduces the doctors' welfare; if so, we reject the new allocation and try a different pair of candidate hospitals $i, j\in H$. 

If there exists an allocation $X'$ where hospitals $i$ and $j$'s caps can be adjusted, we replace $X$ with $X'$ and repeat the process. 
We stop when an update is no longer possible. 
We do this for each value of $k \in [m]$ to generate $m$ allocations $X^1,\dots,X^m$; we then simply pick one that maximizes doctor-\USW, and subject to that maximizes the hospital-\NSW.  
The steps are described in \Cref{algo:max-nash-welfare}.

\begin{algorithm}[t]
    \DontPrintSemicolon
    \caption{Maximizing Hospital Nash Welfare subject to Maximizing Doctor Welfare}
    \label{algo:max-nash-welfare}
    \DontPrintSemicolon
    \For{$k$ in $1$ to $m$}{
        $X = (X_0, X_1, \dots, X_n) \gets$ a $k$-MDW allocation\;
        $c_i \gets |X_i|$ for each $i \in H$\;
        \While{1}{
        \For{$i$ in $1$ to $n$}{
            \For{$j$ in $1$ to $n$}{
                \If{$|X_i| \ge |X_j| + 2$ or $|X_i| = |X_j| + 1$ and $j < i$}{
                    $c'_p = 
                    \begin{cases}
                        c_i - 1 & p = i \\
                        c_j + 1 & p = j \\
                        c_p & p\ne i, j
                    \end{cases}$\;
                    $X' \gets$ a $k$-MDW allocation with respect to the capped valuations\;
                    \If{$X'$ is a $k$-MDW allocation with respect to the original valuations}{
                        $X \gets X'$\;
                        $\vec c \gets \vec c'$\;
                    }
                }
            }
        }
        \If{$X=X'$}{
                    Set $X^k \gets X$\;
                    Break out of the while loop\;
                }
        }
    }
Return the allocation $X^k$ with the highest doctor-\USW and subject to that, the one that maximizes $\prod_{h \in H} \left (v_h \left (X^k_h\right ) \right )$\;
\end{algorithm}

The proof of \Cref{algo:max-nash-welfare}'s correctness (\Cref{thm:max-nash-welfare}) relies on the following key lemma.
\begin{lemma}\label{lem:max-nash-welfare-main}
Let $X$ and $Y$ be two $k$-MDW allocations. Then, for any hospital $i \in H$ such that $|X_i| < |Y_i|$, there exists a hospital $j \in H$ such that $|X_j| > |Y_j|$ and a $k$-MDW allocation $X'$ such that 
\begin{enumerate}[(i)]
    \item $|X'_i| = |X_i| + 1$,
    \item $|X'_j| = |X_j| - 1$, and 
    \item $|X'_p| = |X_p|$ for all $p \ne i, j$.
\end{enumerate}
\end{lemma}

This lemma almost immediately gives us the main proof of this section (proof in Appendix \ref{apdx:cardinal-utilities}). 
\begin{restatable}{theorem}{thmmaxnashwelfare}\label{thm:max-nash-welfare}
\Cref{algo:max-nash-welfare} terminates in polynomial time and outputs an allocation $X$ that maximizes doctor welfare and subject to that maximizes hospital Nash welfare.
\end{restatable}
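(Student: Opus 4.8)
The plan is to prove Theorem~\ref{thm:max-nash-welfare} in two parts: correctness (the output maximizes doctor-\USW and, subject to that, hospital-\NSW) and polynomial runtime. The engine driving both parts is Lemma~\ref{lem:max-nash-welfare-main}, which establishes a local-to-global exchange property: any two $k$-MDW allocations can be connected by single-doctor transfers, each of which keeps the allocation $k$-MDW. I would first fix the value of $k$ and argue that the local search converges to the \emph{leximin-optimal} size profile among all $k$-MDW allocations. Concretely, I would show that the local moves permitted by the algorithm (shrinking a large $X_i$ by one and growing a small $X_j$ by one) can only \emph{decrease} the sorted-descending vector of bundle sizes in the lexicographic (equivalently, Lorenz/leximin) order, so the process is monotone with respect to a well-founded order and must terminate at a local optimum.

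The central step is to upgrade this \emph{local} optimum to a \emph{global} one. Here I would invoke Lemma~\ref{lem:max-nash-welfare-main} contrapositively: suppose $X^k$ is the terminal allocation for a fixed $k$ and suppose, for contradiction, that some other $k$-MDW allocation $Y$ has a strictly more balanced size profile (i.e.\ larger in the leximin order, hence larger hospital-\NSW). Then there must be a hospital $i$ with $|X^k_i| < |Y_i|$; Lemma~\ref{lem:max-nash-welfare-main} hands us a hospital $j$ with $|X^k_j| > |Y_j|$ and a feasible single-transfer $k$-MDW allocation $X'$ that grows $i$ and shrinks $j$. I then need to check that this exact transfer is one the algorithm would have accepted, i.e.\ that it satisfies the guard $|X^k_j| \ge |X^k_i| + 2$ or ($|X^k_j| = |X^k_i| + 1$ and $i < j$). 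Since $|X^k_i| < |Y_i| \le |Y_j| + (\text{something})$ and $|X^k_j| > |Y_j|$, a balancing transfer from $j$ to $i$ moves toward the more balanced profile $Y$, which forces the required size gap; this contradicts the assumed termination of the local search. Hence the terminal $X^k$ is leximin-optimal, which by the standard fact that the product (geometric mean) is Schur-concave and strictly increasing under Lorenz-improving transfers means $X^k$ maximizes hospital-\NSW among all $k$-MDW allocations.

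For the global objective across all $k$, I would note that every max doctor-\USW allocation has some fixed hospital-\USW value $k^\star$ (it is the size of the intersection independent set, as in Lemma~\ref{lem:max-doctor-welfare-max-hospital-USW}); the final return step scans all $m$ values of $k$, keeps only those whose doctor-\USW equals the global maximum, and among those returns the one maximizing hospital-\NSW. Since for each such $k$ the allocation $X^k$ is hospital-\NSW optimal \emph{among $k$-MDW allocations}, taking the best over $k$ yields the global doctor-\USW-then-hospital-\NSW optimum. For runtime, each $k$-MDW allocation is computable in polynomial time via weighted matroid intersection; each inner double loop over hospital pairs makes $O(n^2)$ such calls, and the while loop terminates after polynomially many iterations because each accepted update strictly decreases the (bounded, integer-valued) leximin potential of the size vector. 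The outer loop over $k$ contributes a factor of $m$, giving an overall polynomial bound.

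The main obstacle I anticipate is the termination/potential argument for the \texttt{while} loop: I must exhibit a potential function on size profiles that is strictly decreased by \emph{every} accepted local move and is bounded below, so that the number of iterations is polynomial rather than merely finite. The natural candidate is $\sum_{h} |X_h|^2$ (or equivalently the descending-sorted vector under lex order), which strictly decreases whenever we transfer a doctor from a strictly larger bundle to a strictly smaller one; the guard conditions $|X_i| \ge |X_j| + 2$ and the tie-breaking case $|X_i| = |X_j| + 1$ with $j < i$ must be verified to always produce such a strict decrease (the tie-break case needs a secondary lexicographic potential on the index-ordered profile since the sum of squares is unchanged there). Pinning down a single potential that handles both guard cases simultaneously, and confirming it is polynomially bounded, is the delicate part of the argument.
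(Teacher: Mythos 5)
Your overall strategy (local search driven by Lemma~\ref{lem:max-nash-welfare-main}, a contradiction with termination, then a scan over $k$) matches the paper's, but the local-to-global step has a genuine gap. You apply Lemma~\ref{lem:max-nash-welfare-main} only to the terminal allocation $X^k$ and assert that the resulting transfer ``forces the required size gap.'' It does not: the lemma hands you \emph{some} hospital $j$ with $|X^k_j| > |Y_j|$, not one of your choosing, and that $j$ may satisfy $|X^k_j| = |X^k_i| + 1$ with $j < i$, or even $|X^k_j| \le |X^k_i|$. Concretely, with size profiles $X^k = (5,2,1,7)$ and $Y = (6,1,4,4)$ (both summing to $15$, with $Y$ Nash-better, $96$ vs.\ $70$), the natural choice $i=3$ can be paired by the lemma with $\ell = 2$, giving only the lateral move to $(5,1,2,7)$ --- a permutation of the profile that the algorithm's guard deliberately forbids and that yields no contradiction with termination. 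This is exactly why the paper's proof sets up a four-case analysis: it fixes $Y$ to be not merely hospital-\NSW{} optimal but also \emph{lexicographically dominant} among such allocations, chooses $i$ and $j$ by explicit tie-breaking rules (minimal $|X^k_i|$, minimal $|Y_j|$, then lowest index), and in two of the four cases applies Lemma~\ref{lem:max-nash-welfare-main} to $Y$ rather than to $X^k$, deriving a contradiction with $Y$'s Nash-optimality or its lexicographic dominance. Without the cases that operate on $Y$, the argument cannot be closed.

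Two smaller points. First, your inference ``leximin-larger, hence Nash-larger'' is false for general size vectors of fixed sum (compare $(1,6,7)$ with $(2,2,10)$: the latter is leximin-larger but has a smaller product); it only holds here because the exchange structure of Lemma~\ref{lem:max-nash-welfare-main} makes the feasible size profiles Lorenz-dominable, which is essentially what the case analysis proves --- so you cannot use it as an input to that analysis. Second, you correctly identify that $\sum_h |X_h|^2$ is unchanged by the tie-break move and that a secondary potential is needed, but you leave this unresolved; the paper uses the single potential $\sum_{i \in H} \bigl(|X_i| + \tfrac{i}{n^2}\bigr)^2$, which strictly decreases under both guard cases and bounds the number of \texttt{while}-loop iterations by $\mathcal{O}(m^2 n^2)$.
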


When the allocation that maximizes doctor-\USW is unique, then the above theorem is not meaningful. However, when there are many ties in the optimal doctor-\USW allocation, the above theorem suggests that we can find a fair one.
\subsection{Proof of Lemma \ref{lem:max-nash-welfare-main}}

The problem of maximizing doctor-\USW can be modeled as a weighted matroid intersection problem over the ground set $E = \{(h, d) | d \in D, h \in H\}$. Any assignment of doctors to hospitals can be equivalently defined as a subset of the ground set $E$. Specifically, each allocation $X$ can be denoted by the set $\bigcup_{d \in D} (X(d), d)$, which is a subset of $E$. Similarly, any subset of $E$ can be written as an allocation, although this allocation might allocate a doctor multiple times.

Given an allocation $X$ represented as a subset of the ground set $E$, we construct the \emph{matroid exchange graph} $\cal G(X)$. $\cal G(X)$ is a directed bipartite graph over the set of nodes $E$. 
The nodes of this graph are hospital-doctor pairs, and there are two types of edges. 
An edge exists from the node $(h, d) \in X$ to $(h', d') \in E \setminus X$ if swapping $(h, d)$ with $(h', d')$ in the allocation $X$ does not violate the non-redundancy of the allocation. In other words, withholding $d$ from $h$ and assigning $d'$ to $h'$ instead results in a non-redundant allocation. 
An edge exists from $(h', d') \in E \setminus X$ to $(h, d)\in X$ if swapping the two in the allocation $X$ does not result in any doctor being allocated to multiple hospitals. 
Any edge from $(h, d) \in X$ to $(h', d') \in E \setminus X$ has a weight of $c_d(h) - c_{d'}(h')$. 
All other edges have a weight of $0$.

For any $S \subseteq X$ and $S' \subseteq E \setminus X$ such that $|S'| = |S|$, a matching $M$ is a set of $|S|$ edges in $\cal G(X)$ from $S$ to $S'$ such that every node in $S \cup S'$ is connected to at least one edge in $M$. A back matching $M'$ is a set of $|S|$ edges from $S'$ to $S$ such that every node in $S \cup S'$ is connected to at least one edge. We say the set $(S, S')$ is a valid swap if $(X \setminus S) \cup S'$ is a valid allocation --- it is non-redundant and each doctor is allocated at most once. We refer to the allocation $(X \setminus S) \cup S'$ as the allocation generating by {\em performing} the valid swap $(S, S')$ on the allocation $X$.
We slightly abuse notation if the matching and back matching form a cycle $C$: we say the cycle $C$ is a valid swap if the nodes in the cycle form a valid swap. We similarly define the act of performing the valid swap $C$ on an allocation $X$.

We utilize the following results on exchange graphs from \citet{brezovec1986matroidintersection}. 

\begin{theorem}[\citet{brezovec1986matroidintersection}]\label{thm:brezovec}
Let $X$ and $Y$ be two $k$-MDW allocations
\begin{enumerate}[(a)]
    \item  There exists a matching and a back matching in the exchange graph $\cal G(X)$ between $X\setminus Y$ and $Y \setminus X$.
    \item There is no negative weight cycle in the exchange graph $\cal G(X)$. 
    \item Consider any sets $S \subseteq X$ and $S' \subseteq E \setminus X$ such that $|S| = |S'|$. Assume there is a matching and back-matching between these two sets in $\cal G(X)$. Then either $(S, S')$ is a valid swap, or there is another matching and back matching between these two sets which are not identical to the first matching and back matching respectively. 
    \item Let there be two matchings and back matchings between two sets $S \subseteq X$ and $S' \subseteq E \setminus X$ such that each matching and back matching corresponds to a single cycle. We can decompose these two cycles (say $C$ and $C'$) into a set of cycles $C_1, C_2, \dots, C_u$ such that each new cycle has length smaller than $2|S|$. Moreover, each edge appears the same number of times in $C_1, \dots, C_u$ as it appears in the two cycles $C$ and $C'$.
\end{enumerate}
\end{theorem}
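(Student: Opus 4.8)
The plan is to read $\cal G(X)$ as the standard exchange graph for the intersection of two matroids on the ground set $E$: the \emph{non-redundancy matroid} $M_1$, which is the direct sum over hospitals of the matroid induced by each $v_h$ on $\{(h,d):d\in D\}$, and the \emph{assignment matroid} $M_2$, the partition matroid with one part $\{(h,d):h\in H\}$ per doctor. With this dictionary an allocation is precisely a common independent set of $M_1$ and $M_2$; a $k$-MDW allocation is a maximum-weight common independent set of size $k$ under the weights $c_d(h)$; forward edges encode $M_1$-exchanges and backward edges encode $M_2$-exchanges. Each of (a)--(d) then specializes a single-matroid fact to $M_1$ and $M_2$ separately and combines the two. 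For (a), since $|X|=|Y|=k$ we have $|X\setminus Y|=|Y\setminus X|$, so I would invoke the classical symmetric-exchange matching theorem (Brualdi; see also \citet{oxley2011matroid}): for independent sets $I,J$ of a matroid with $|I|=|J|$, the bipartite graph on $(I\setminus J,\,J\setminus I)$ with an edge $\{x,y\}$ whenever $I-x+y$ is independent has a perfect matching. Applied to $M_1$ this yields the matching (oriented forward, $X\setminus Y\to Y\setminus X$); applied to $M_2$ it yields the back matching (oriented backward), which is exactly the claim.

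For (c) I would use Krogdahl's \emph{unique perfect matching} lemma: if $B$ is independent in a matroid, $S\subseteq B$, $S'\subseteq E\setminus B$ with $|S|=|S'|$, and the bipartite exchange graph between $S$ and $S'$ has a \emph{unique} perfect matching, then $(B\setminus S)\cup S'$ is independent. Taking the contrapositive in each matroid: if $(S,S')$ is not a valid swap, then $(X\setminus S)\cup S'$ is dependent in $M_1$ or in $M_2$, so the corresponding exchange graph admits a second perfect matching, producing a matching or back matching distinct from the given pair. The technical heart here is Krogdahl's lemma itself, whose proof peels off a degree-one vertex of the uniquely matched bipartite graph and inducts via the exchange axiom; I would carry out that induction carefully.

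For (d), each of $C,C'$ is a single alternating cycle through all $2|S|$ nodes of $S\cup S'$, so the edge-disjoint union $C\sqcup C'$ is a directed multigraph in which every node has in-degree and out-degree $2$. By the Eulerian cycle-decomposition theorem this multigraph decomposes into directed cycles reusing each edge exactly as often as it occurs in $C\sqcup C'$, the freedom lying in how the two in-edges are paired with the two out-edges at each node. Since $C\neq C'$, there is a node where they diverge, and re-pairing at that node splits the two Hamiltonian-type cycles into strictly shorter alternating cycles, each of length $<2|S|$. The point to check is that every resulting cycle still alternates forward and backward edges, so that it is a legal cycle of $\cal G(X)$, and that the strict length bound holds; a short case analysis at the divergence node, using that forward and backward edges emanate from disjoint sides, handles both.

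Finally, (b) is where optimality enters, and I expect it to be the main obstacle. I would argue by contradiction: among all negative-weight cycles choose one, $C$, with the fewest arcs, and claim $C$ is a valid swap. If it is not, (c) supplies a second matching or back matching on the node set of $C$; combining it with the complementary half of $C$ gives a second alternating cycle, and the decomposition of (d) shortens any such auxiliary cycle into alternating cycles whose edge multiset, and hence total weight, is preserved. The delicate ingredient is the length-and-weight bookkeeping that forces one of these strictly shorter cycles to again be negative, contradicting the minimality of $C$; this step must respect both the alternating structure furnished by (d) and the sign information needed for optimality, and getting it exactly right is the crux. Once $C$ is known to be a valid swap, performing it yields a common independent set $X'$ with $|X'|=k$ and $w(X')=w(X)-w(C)>w(X)$ (forward edges carry the weight change, backward edges carry $0$), contradicting the assumption that $X$ is $k$-MDW and establishing that $\cal G(X)$ has no negative cycle.
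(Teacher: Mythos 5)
First, a point of reference: the paper does not prove this theorem at all --- it imports it wholesale from \citet{brezovec1986matroidintersection}, mapping (a)--(d) to their Lemma 2, Theorem 3, Lemma 3, and Lemma 4 respectively. Your proposal therefore takes a genuinely different route by attempting a self-contained reconstruction, and your two-matroid dictionary (non-redundancy matroid $M_1$ as a direct sum over hospitals, assignment partition matroid $M_2$ over doctors, forward edges as $M_1$-exchanges, backward edges as $M_2$-exchanges) is exactly the right reduction; Brualdi's exchange-matching theorem for (a) and Krogdahl's unique-matching lemma for (c) are indeed the standard ingredients underlying the cited lemmas. One small reading issue in (c): Krogdahl's contrapositive gives a second matching \emph{or} a second back matching (whichever matroid's independence fails), so the pair differs in at least one component; this suffices for how the paper uses (c) in Lemma \ref{lem:max-nash-welfare-main}, but your write-up should say so explicitly rather than suggest both components change.

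Two steps, however, need repair. In (d), your claim that ``re-pairing at the divergence node splits the two Hamiltonian-type cycles into strictly shorter cycles'' is false as stated: swapping the in/out pairing at a \emph{single} node $v$ concatenates $C$ and $C'$ into one closed walk of length $4|S|$ rather than splitting them. You need the observation that two distinct alternating Hamiltonian cycles must diverge at \emph{at least two} nodes (divergence at exactly one node contradicts Hamiltonicity of $C'$), re-pair at two of them, and then still argue that the resulting closed walks decompose into simple cycles none of which has length exactly $2|S|$ --- this last exclusion is the actual content of Brezovec et al.'s Lemma 4 and is not handled by a generic Eulerian decomposition, which can legally return the trivial decomposition $\{C, C'\}$. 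In (b), the ``crux'' you flag open has a clean one-line resolution you should supply: because every forward edge from $(h,d)$ to $(h',d')$ has the node-separable weight $c_d(h)-c_{d'}(h')$ and back edges weigh $0$, \emph{every} matching/back-matching pair between the same node sets $S$ and $S'$ has total weight $\sum_{(h,d)\in S}c_d(h)-\sum_{(h',d')\in S'}c_{d'}(h')$, so the alternative pair produced by (c) carries weight exactly $w(C)<0$; together with the edge-multiset preservation in (d) (total weight $2w(C)<0$ across strictly shorter cycles), some shorter cycle is negative, contradicting minimality. You must also handle the case where the alternative matching/back-matching from (c) forms several cycles rather than one --- there (d) is not applicable, but each cycle is already shorter and the same weight-invariance pinpoints a negative one, so this is the easy case. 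With these two patches your outline becomes a complete proof; as written, (d) contains a step that fails and (b) rests on an unstated lemma.
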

All four statements above can be found in \citet{brezovec1986matroidintersection}: (a) is equivalent to Lemma 2, (b) is equivalent to Theorem 3, (c) is equivalent to Lemma 3, and (d) is equivalent to Lemma 4.

Consider any two $k$-MDW allocations $X$ and $Y$. 
From Theorem \ref{thm:brezovec}(a), a matching and back-matching must exist between $X \setminus Y$ and $Y \setminus X$. 
The matching and back matching form a set of disjoint cycles $\cal C$. 

We say an edge from $(h, d) \in X$ to $(h', d') \in E \setminus X$ is a {\em cross edge} if $h \ne h'$. 
This essentially means that $\Delta_{h'}(X_{h'}, d') = 1$. 
Therefore, there must exist an edge from all $(h'', d'') \in X$ to $(h', d')$ in $\cal G(X)$. We use this observation for our next operation, which we call {\em uncrossing}. 

\begin{lemma}[Uncrossing Lemma]\label{lem:uncrossing}
For some allocation, let $C$ be a cycle in the exchange graph $\cal G(X)$ over the set of nodes $S$. There exist a set of disjoint cycles $C_1, C_2, \dots, C_u$ in $\cal G(X)$ such that each node in $S$ is contained in exactly one cycle and each cycle has at most one cross edge. 
\end{lemma}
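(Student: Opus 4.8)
The plan is to prove the Uncrossing Lemma by iteratively extracting cycles with a single cross edge, using the structural fact that a cross edge into $(h',d')$ means $\Delta_{h'}(X_{h'},d')=1$, so \emph{every} node $(h'',d'')\in X$ has an edge to $(h',d')$ in $\mathcal G(X)$. First I would walk around the given cycle $C$ and classify its edges of the form $(h,d)\to(h',d')$ (from $X$ to $E\setminus X$) as cross edges (when $h\neq h'$) or non-cross edges (when $h=h'$). If $C$ has at most one cross edge we are done, so assume it has at least two, say the cross edges enter nodes $(h'_1,d'_1)$ and $(h'_2,d'_2)$ in that cyclic order. The key move is a \emph{shortcut}: since $\Delta_{h'_2}(X_{h'_2},d'_2)=1$, the tail of the first cross edge, call it $(h_1,d_1)\in X$, also has an edge $(h_1,d_1)\to(h'_2,d'_2)$ in $\mathcal G(X)$. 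This lets me reroute the cycle so that it jumps directly from $(h_1,d_1)$ to $(h'_2,d'_2)$, thereby splitting $C$ into two closed walks along the two arcs of $C$ cut at these two nodes, inserting the shortcut edge into one piece and the (existing, zero-weight) back edges/original edges appropriately into the other. Each resulting cycle visits a strict subset of the nodes of $C$, so repeating this on any piece that still has two or more cross edges terminates, yielding the decomposition $C_1,\dots,C_u$ into cycles each with at most one cross edge.

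More carefully, I would set up the induction on the number of cross edges (or equivalently on $|S|$) in the cycle. The base case is a cycle with zero or one cross edge, which is already of the desired form. For the inductive step, take a cycle with $\ge 2$ cross edges, locate two consecutive cross-edge targets as above, and perform the shortcut to break it into two shorter cycles $C^{(1)}$ and $C^{(2)}$; I must verify that each of these is a genuine directed cycle in $\mathcal G(X)$, i.e.\ that the forward ($X\to E\setminus X$) and backward ($E\setminus X\to X$) edges alternate correctly and that the newly introduced shortcut edge genuinely exists. The latter is exactly where the matroid structure enters: the shortcut $(h_1,d_1)\to(h'_2,d'_2)$ exists precisely because $d'_2$ is freely addable to $X_{h'_2}$ (the defining condition for a cross edge into $(h'_2,d'_2)$), and removing $d_1$ from $h_1$ does not disturb $h'_2$ since $h_1\neq h'_2$. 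Similarly the back edges needed to close up both sub-cycles are the zero-weight edges $(h',d')\to(h,d)$ that exist whenever swapping does not double-allocate a doctor; I would check that the partition of the back edges of $C$ between the two sub-cycles respects this. Since each sub-cycle uses strictly fewer nodes than $C$, the induction hypothesis applies and finishes the argument, with disjointness of the final cycle collection following because the shortcut strictly partitions the node set $S$ of $C$ at each step.

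I expect the main obstacle to be the bookkeeping that guarantees the two pieces produced by a shortcut are \emph{both} valid alternating cycles that together cover exactly the node set $S$, while also driving down the cross-edge count in a way that is monotone and hence terminating. The subtlety is that cutting $C$ at the two cross-edge targets and inserting a single shortcut leaves one piece potentially missing a forward edge or a back edge to close it; I would need to argue that the \emph{other} cross edge (the one I did not shortcut along) supplies the needed forward edge for the second piece, and that the zero-weight back edges present in $\mathcal G(X)$ can always be chosen to close each piece without creating a double allocation. It may be cleanest to phrase the splitting symmetrically so that each cross edge of $C$ ends up in exactly one of $C^{(1)},C^{(2)}$, giving an immediate count showing the total number of cross edges is preserved but each piece has strictly fewer, which both guarantees termination and ensures the final cycles each carry at most one cross edge. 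I would also note in passing that this uncrossing is consistent with the edge-preservation flavour of Theorem \ref{thm:brezovec}(d), which handles an analogous decomposition into short cycles, lending confidence that the shortcut operations preserve the relevant matroid-exchange validity.
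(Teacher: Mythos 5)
Your proposal is correct and takes essentially the same route as the paper: both split the cycle at two consecutive cross edges by re-targeting their heads, using the fact that the head of any cross edge is freely addable to its hospital and hence receives an edge from \emph{every} node of $X$, and then recurse on the piece that may still carry multiple cross edges. The only cosmetic difference is that the paper performs the two replacements symmetrically (swapping $e_1$ and $e_t$ for $e'_1$ and $e'_t$ simultaneously), which makes the closing forward edge of your second piece explicit rather than a point you have to argue for separately, and note that the cross-edge count need not be exactly preserved (a new edge may happen to be non-cross) --- but your termination measure, that each piece has strictly fewer cross edges, still holds.
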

\begin{proof}
Assume that the cycle $C$ has more than one cross-edge. 
Suppose that $C$ consists of the ordered set of edges $\{e_1, e_2, \dots, e_s\}$, where $e_1$ is a cross edge and the next cross edge after $e_1$ is at $e_t$. Let $e_1$ be an edge from $(h_1, d_1)$ to $(h_2, d_2)$, and let $e_t$ be an edge from $(h_3, d_3)$ to $(h_4, d_4)$. 

We replace $e_1$ with $e'_1$ which is an edge from $(h_1, d_1)$ to $(h_4, d_4)$. We replace $e_t$ with $e'_t$ which is an edge from $(h_3, d_3)$ to $(h_2, d_2)$. Both $e'_1$ and $e'_t$ must exist in $\cal G(X)$ since $e_1$ and $e_t$ are cross edges. We then output the cycles $C_1 = \{e_2, e_3, \dots, e_{t-1}, e'_t\}$ and $C_2 = \{e'_1, e_{t+1}, \dots, e_s\}$. 
$C_1$ has at most one cross edge; the only possible cross edge is $e_{t}'$. We then repeat this procedure with $C_2$ till we have a disjoint set of cycles, each with at most one cross-edge.
\end{proof}

\begin{proof}[Proof of Lemma \ref{lem:max-nash-welfare-main}]
From Theorem~\ref{thm:brezovec}(a), there is a matching and back matching between $X \setminus Y$ and $Y \setminus X$ in the exchange graph $\cal G(X)$. 
This matching and back matching can be assumed to be a set of cycles $\cal C$ with at most one cross edge in each cycle (Lemma \ref{lem:uncrossing}). 

We associate each cycle with a winning and losing hospital. Suppose that the cycle $C$ corresponds to a valid swap. A hospital is winning with respect to a cycle $C$ if its utility strictly increases if the swap is executed. A losing hospital is one whose utility strictly decreases if this happens. 
Since each cycle has at most one cross edge, there is at most one winner and loser in a cycle. Even if $C$ is not a valid swap, we define winners and losers by pretending it is a valid swap and looking at the size of each hospital's bundle before and after the swap. 

The total weight of all the cycles in $\cal C$ is $0$. This follows from the fact that both $X$ and $Y$ are $k$-MDW allocations, and the total weight of all the cycles in $C$ is equal to $\sum_{(h, d) \in X \setminus Y} c_d{(h)} - \sum_{(h', d') \in Y \setminus X} c_{d'}{(h')}$.
Since there is no negative weight cycle in the graph (Theorem \ref{thm:brezovec}(b)), each cycle in $\cal C$ has a weight of $0$. 
By the assumptions made in the lemma, $|Y_i| > |X_i|$; so there must be (at least) one cycle $C$ in $\cal C$ where hospital $i$ is a winner. We show that either $C$ is a valid swap, or we can find a smaller $0$ weight cycle where $i$ is the winner. 

If $C$ is a valid swap, then we are done. If $C$ is not a valid swap, then from Theorem \ref{thm:brezovec}(c), there is another matching and back matching between the nodes of $C$. We then apply the uncrossing operation to this new matching and back matching to create a new set of cycles $C_1, \dots, C_s$.  Since $C$ is a 0 weight cycle, each of $C_1, \dots, C_s$ must also be a $0$ weight cycle (Theorem \ref{thm:brezovec}(b)).
\begin{description}[leftmargin=0cm]
    \item[Case 1: $s > 1$.] There must be a cycle $C'$ in this new set of cycles such that $i$ is the winner. Note that $C'$ is smaller than $C$ and $C'$ has $0$ weight.
\item[Case 2: $s = 1$.] We invoke Theorem \ref{thm:brezovec}(d) to rewrite these two cycles $C$ and $C_1$ as a set of cycles $C'_1, \dots, C'_u$ where each $C'_w$ is smaller than $C$. We then apply the uncrossing lemma to each of these cycles to ensure that there is at most one cross edge in each of the cycles. Then we find the cycle $C'$ in this new set of cycles where $i$ is the winner. $C'$ is smaller than $C$. $C'$ also has $0$ weight since both $C$ and $C_1$ have $0$ weight, and there are no negative weight cycles in $\cal G(X)$ (Theorem \ref{thm:brezovec}(b)). 
\end{description}

In summary, either the cycle we find is a valid swap and has $0$ weight, Or we find a smaller cycle with $0$ weight where $i$ is the winner. We can repeat the above procedure with the smaller cycle until we find a cycle $C^*$ which is a valid swap. We perform this valid swap using $C^*$ on the allocation $X$ to create a $k$-MDW allocation $Z$ where $i$'s utility increases by $1$, some hospital $q$'s utility decreases by $1$ and all the other hospitals have the same utility. 

If $|X_q| > |Y_q|$, we are done. Otherwise, we repeat this procedure with the allocation $Z$ and $q$. This procedure is guaranteed to terminate, since with each new allocation we strictly increase $|X \cap Y|$; that is $|Z \cap Y| > |X \cap Y|$.
\end{proof}

\subsection{Hardness of Maximizing Hospital-\NSW subject to Stability}
We show that finding an approximately max hospital-\NSW allocation subject to stability is computationally intractable. 
The statement we show is in fact stronger than that: 

\begin{restatable}{theorem}{thmnashhardness} \label{thm:nash_hardness}
The problem of deciding whether there exists a stable allocation where all hospitals are allocated at least one doctor is NP-complete when hospitals have capped binary valuations.
\end{restatable}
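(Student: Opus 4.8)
The plan is to prove the two halves of NP-completeness separately, with the bulk of the work in hardness. Membership in NP is immediate: a stable allocation $X$ in which every hospital receives a doctor is a polynomial-size certificate, since we can check non-redundancy by verifying $v_h(X_h)=|X_h|$ via value queries, check $|X_h|\ge 1$ for all $h$, and, by \Cref{obs:blocking}, rule out blocking pairs by testing for every doctor $d$ and hospital $h$ whether $\Delta_h(X_h,d)=1$ and $h\succ_d X(d)$ — all polynomial. For hardness I would reduce from \textsc{3-SAT}. The structural fact I would lean on throughout is that, because hospitals have no preference over doctors, a non-redundant allocation is stable exactly when it is \emph{non-wasteful}: no doctor sits at a hospital strictly worse (in $\succ_d$) than some acceptable hospital that is below its cap. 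Consequently a hospital is left empty in a stable allocation precisely when every doctor it finds acceptable is assigned to a strictly more preferred hospital.

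The reduction encodes each variable by a \emph{switch gadget} with exactly two stable, fully-covered configurations (encoding true/false), and each clause by a hospital that can be covered stably if and only if the clause is satisfied. The prototype switch is a \emph{rotation cycle}: capacity-one hospitals $B_1,\dots,B_r$ and doctors $g_1,\dots,g_r$ with $g_\ell$ finding only $B_\ell$ and $B_{\ell+1 \bmod r}$ acceptable and preferring $B_\ell$ to $B_{\ell+1 \bmod r}$. Its only non-wasteful allocations covering all $B_\ell$ are the identity $g_\ell\mapsto B_\ell$ and the full rotation $g_\ell\mapsto B_{\ell+1}$, since a shift at a single position propagates around the whole cycle and no partial shift can be covered and non-wasteful at once. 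I would augment this skeleton with \emph{literal doctors} so that one configuration releases all positive-occurrence doctors of $x_i$ down to their clause hospitals while the negative-occurrence doctors stay home, and the other configuration does the reverse. Each clause hospital $A_j$ sits at the bottom of each of its three literal doctors' preference lists, so $A_j$ is coverable in a stable allocation exactly when at least one of its literal doctors is forced down to it, i.e.\ when the clause is satisfied. Gate capacities are chosen to move an entire block of same-polarity literal doctors together, which also forces the two polarities of a variable to behave oppositely so that no variable is used as both true and false.

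For correctness I would argue both directions. Given a satisfying assignment, set each switch to the corresponding configuration and route every clause hospital to a literal doctor of one of its satisfied literals; the resulting allocation is non-redundant, covers every hospital, and is non-wasteful by construction, hence stable. Conversely, from any stable allocation covering all hospitals I would read the configuration of each switch off the two allowed states, obtaining a Boolean assignment, and then use the fact that each clause hospital is covered by some literal doctor — which is only possible when that literal is satisfied — to conclude the assignment satisfies every clause. Since every valuation used is specified by a set of acceptable doctors together with a capacity, all hospitals have binary capped additive valuations, so the construction stays within the stated class.

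The main obstacle is the gadget design, specifically ruling out \emph{spurious} stable allocations that cover all hospitals without encoding a satisfying assignment. I must ensure that the only stable, locally-covered states of each switch are the two intended ones, that covering a clause hospital genuinely propagates back to a consistent truth value rather than being achievable by some unintended reassignment that exploits non-wastefulness, and that no clause hospital can be covered "for free" by a doctor whose literal is false. Fixing the gate capacities, the linkage between a variable's two polarities, and the exact relative rankings so that cross-clause consistency is enforced purely by stability is the delicate core of the proof; once the gadget behaves as specified, the two directions of the equivalence follow by routine case analysis.
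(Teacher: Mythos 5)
Your overall strategy is the right one --- reduce from SAT, build a bistable variable gadget whose two stable ``fully covered'' states encode true/false, and make each clause hospital coverable in a stable allocation iff some literal doctor is released to it --- and your NP-membership argument and the observation that stability here reduces to non-wastefulness are both fine. This is also broadly the shape of the paper's reduction. However, your write-up stops exactly at the point where the proof actually lives: you never construct the mechanism that prevents a variable from releasing \emph{both} its positive and its negative literal doctors to clause hospitals, and you explicitly defer ``fixing the gate capacities, the linkage between a variable's two polarities, and the exact relative rankings'' to future work. Without that, there is no argument ruling out spurious stable allocations in which a variable is used as both true and false, and the backward direction of the equivalence does not go through. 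A rotation cycle of capacity-one hospitals gives a bistable switch in isolation, but nothing in your description couples the switch's state to \emph{blocks} of literal doctors, nor explains why a partial or mixed release is unstable.

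For comparison, the paper resolves exactly this point with a counting argument rather than a cycle gadget: it reduces from 2P2N-3SAT (so each variable has exactly two positive and two negative occurrences, hence exactly four literal doctors), and per variable introduces a priority hospital of capacity $3$ and two sink hospitals of capacity $4$, which together value $12$ doctors (four literal doctors plus eight dummies) but have total capacity $11$. Stability forces a literal doctor to reach a clause hospital only if its sink is full, and a sink can be full only if the priority hospital is full; since the three hospitals can absorb only $11$ of the $12$ doctors they value, at most one literal doctor per variable can ever leave the gadget, which is precisely the consistency you need. The choice of 2P2N-3SAT is not incidental --- it is what makes the capacities small and the counting exact. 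If you want to salvage your version, you would need either an analogous capacity-counting device or an explicit argument that your rotation cycle, once wired to the literal doctors, admits no stable covered state releasing literals of both polarities; as written, that is a genuine gap, not a routine detail.
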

\begin{proof}
We reduce from the NP-complete 2P2N-3SAT problem \citep{berman2003approx-hardness-max3sat}. This class comprises of 3SAT instances with $m$ clauses over variables $(x_1, \dots, x_n)$, such that each variable appears \emph{exactly} two times in positive form and two times in negative form. An instance is a ``yes'' instance if and only if it has a valid truth assignment. 
\Cref{eq:2p2nex} is an example instance of 2P2N-3SAT with $n = 3$ variables and $m = 4$ clauses. 
\begin{align}
    \overbrace{(x_1\vee x_2 \vee \neg x_3)}^{C_1}\wedge \overbrace{(\neg x_1 \vee x_2 \vee x_3)}^{C_2} \wedge \overbrace{(\neg x_1 \vee \neg x_2 \neg x_3)}^{C_3}\wedge \overbrace{(x_1 \vee \neg x_2 \vee x_3)}^{C_4}\label{eq:2p2nex}
\end{align}

Given a 2P2N-3SAT instance, our reduction is as follows: we have $m + 3n$ hospitals and $12n$ doctors. 
We create for each variable $x_i$, two copies of a doctor corresponding to the positive literal of the variable ($x_i, x_i'$), and two copies of the doctor corresponding to the negative literal of the variable ($\neg x_i, \neg x_i'$). 
We also create eight dummy doctors for each variable $(a_i^+, a_i^-, b_i^+, b_i^-, c_i^+, c_i^-, d_i^+, d_i^-)$. 

We create one hospital $h_j$ corresponding to each clause $C_j$. 
The hospital $h_j$ has a capacity of $1$ and values each literal in the corresponding clause at $1$, all copies included. 
We also create two hospitals $s_i^+,s_i^-$ for each variable $x_i$ who we call the positive sink and the negative sink respectively. 
These two sink hospitals have a capacity of $4$ each. 
The positive sink $s_i^+$ values the doctors $x_i, x_i', a_i^+, b_i^+, c_i^+$ and $d_i^+$ at $1$; similarly, the negative sink $s_i^-$ values the doctors $\neg x_i, \neg x_i', a_i^-, b_i^-, c_i^-$ and $d_i^{-}$ at $1$. 
We also create a priority hospital $p_i$ for each variable $x_i$; the priority hospital for variable $x_i$ values the doctors $a_i^+, a_i^-, b_i^+, b_i^-, c_i^+,$ and $ c_i^-$ at $1$ and has a capacity of $3$.

Coming to the preferences of the doctors, the four positive dummy doctors of each variable $x_i$ ($a_i^+, b_i^+, c_i^+, d_i^+$) rank their priority hospital $p_i$ highest, then their positive sink $s_i^+$, and then all other hospitals. 
The four negative dummy doctors of each variable ($a_i^-, b_i^-, c_i^-, d_i^-$) rank their priority hospital $p_i$ highest, then their {\em negative} sink $s_i^-$, and then all other hospitals. 

The positive literals of each variable prefer their corresponding positive sink $s_i^+$ to any other hospital and the negative literals of each variable prefer their corresponding negative sink $s_i^-$ to any other hospital. 

Any ranking relation not mentioned in this construction can be filled up arbitrarily. 
This construction can be done in polynomial time, and its size is polynomial in the size of the original 2P2N-3SAT instance. 

The crucial property maintained by the construction is that for any stable allocation, if we restrict ourselves to the allocations to the clause hospitals $h_1,\dots,h_m$, a variable can be allocated in positive form or negative form {\em but not both}. 
This is because to allocate a positive (negative) literal of a variable $x_i$ to a clause hospital $h_j$, the positive (resp. negative) sink hospital of that variable must be filled up, i.e., the positive (resp. negative) sink must receive its capacity of four doctors to not create a blocking pair: otherwise, since the positive (resp. negative) sink hospital values the positive (resp. negative) literal doctor at $1$, they form a blocking pair.  
To fill up either sink, the priority hospital $p_i$ for that variable must be filled up; this is because all the doctors who are valued by the sink hospitals are also valued by the priority hospitals, whom these doctors strongly prefer. 
The total capacity of $s_i^+,s_i^-$ and $p_i$ is $11$ (the positive and negative sinks have a capacity of 4 and the priority hospital has a capacity of 3); however, they together value $12$ doctors at $1$ --- the four literal doctors $x_i,x_i',\neg x_i,\neg x_i'$ and the eight dummy doctors --- filling them up means there is only one doctor (either a positive or negative literal) corresponding to that variable that can be allocated to the clauses. 
Therefore, the allocation to the clauses cannot have two different kinds of literals for the same variable. 
This gives us the consistency needed to map stable allocations to SAT solutions. 

The rest of the proof is straightforward: if there is a solution to the 2P2N-3SAT instance, then there is a straightforward allocation where all hospitals get a utility of at least $1$. 
If a variable $x_i$ is positive in the original solution, we fill up the priority hospital with $a_i^-, b_i^-, c_i^-$, the positive sink with $a_i^+, b_i^+, c_i^+$ and $d_i^+$, and then allocate the doctors corresponding to the positive literals to the clauses which have these literals. We can do this for all the variables. If there are multiple literals which satisfy a clause, we can pick one arbitrarily. Note that this allocation gives all the clauses a positive utility. 
The negative and positive sinks have a positive utility in any stable allocation because of the doctors $d_i^+$ and $d_i^-$ who can only be allocated to these hospitals. 

Assume there is no solution to the 2P2N-3SAT instance. In any stable allocation, from our above discussion, the set of hospitals corresponding to clauses must receive either the positive form of a literal or the negative form of a literal but not both. This choice can be thought of as an assignment to the variables. Since the 2P2N-3SAT instance is unsatisfiable, there is at least one clause that is not satisfied by this assignment; this clause corresponds to a hospital which receives a utility of $0$ in the stable allocation. Since all assignments are unsatisfiable, all stable allocations give at least one hospital a utility of $0$.
\end{proof}

This result implies the intractability of other forms of individual fairness. For example, it implies that computing a stable egalitarian allocation, i.e., one that maximizes $\min_h v_h(X_h)$ subject to stability, is intractable.

\section{Conclusions and Future Work}
In this work we offer a comprehensive analysis of two-sided matchings where one side has MRF valuations over the other. We offer both sequential and optimization based approaches to computing stable, approximately efficient and approximately strategyproof allocations. The problem of computing stable, hospital strategyproof allocations remains open. Expanding these results to superclasses of MRF valuations is an important direction as well.

\section*{Acknowledgments}
Viswanathan and Zick are supported by an NSF grant RI-2327057. The work of Eden
was supported by the Israel Science Foundation (grant No. 533/23). Alon Eden is the Incumbent of the Harry \& Abe Sherman Senior Lectureship at the School of Computer Science and Engineering at the Hebrew University.

\bibliographystyle{plainnat}
\bibliography{abb,references}

\newpage
\appendix
\section{Missing Proofs from Section \ref{sec:preliminaries}}\label{apdx:prelims}

\obsblocking*
\begin{proof}
\begin{description}[leftmargin=0cm]
\item[$(\Longrightarrow)$]
If a blocking pair $(h, S)$ exists, then since $v_h(S) > v_h(X_h)$, there is some doctor $d \in S$ such that $\Delta_h(X_h, d) = 1$; this follows from the matroid augmentation property (\Cref{obs:augmentation}). 
For this doctor $d \in S$, from the definition of a blocking pair, $h \succ_d X(d)$. 
\item[$(\Longleftarrow)$] if there exists some doctor $d$ and hospital $h$ such that $\Delta_h(X_h, d) = 1$ and $h \succ_d X(d)$, then it is easy to see that $(h, X_h + d)$ forms a blocking pair: $v_h(X_h+d)> v_h(X_h)$, and for all $d' \in X_h+d$ $h \succeq_{d'} X(d')$. 
\end{description} 
\end{proof}

\lemgeneralwelfare*
\begin{proof}
    Since doctors have a strict and complete preference ordering over hospitals, for an allocation to be stable, it has to be maximal --- no unmatched doctor can have a non-zero marginal value for any hospital given the current assignment.  It is well known that when hospitals have submodular valuations, maximal allocations give a 2-approximation to the optimal hospital-USW~\cite{lehmann2006auctions}.
\end{proof}

\section{Missing Proofs for the HWSD Mechanism}\label{apdx:hwsd}

\begin{theorem}\label{thm:red-jacket-apdx}
The high welfare Serial Dictatorship (HWSD) mechanism has the following properties:
\begin{enumerate}[(i)]
    \item The mechanism is doctor-SP.
    \item The algorithm runs in polynomial time.
    \item The output allocation maximizes hospital-\USW.
    \item The output allocation is stable.
\end{enumerate}
\end{theorem}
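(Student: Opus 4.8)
The plan is to prove the four items in the order (iii), (iv), (i), (ii), since optimality of the hospital-\USW is the engine that drives stability, whereas doctor strategyproofness and the running-time bound are comparatively routine once the structure of the mechanism is understood. The backbone is a single invariant: \emph{after every iteration of \Cref{algo:red-jacket} the current partial allocation $X$ extends to a hospital-\USW maximizing allocation}, i.e.\ there is a max-\USW allocation $Y$, chosen non-redundant without loss of generality, with $X_h\subseteq Y_h$ for all $h$. This holds at initialization (where each $X_h=\emptyset$) and is preserved at each step: if doctor $d_i$ is assigned to $h$, the inner test guarantees it verbatim, and if $d_i$ is skipped the partial allocation does not change.

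For (iii) I would upgrade ``extends to max-\USW'' into ``is max-\USW'' at termination. Suppose the returned $X$ is not max-\USW. Because every allocation in play is non-redundant, $\USW$ equals the number of assigned doctors, so the completion $Y\supseteq X$ assigns strictly more doctors; fix a doctor $d_k$ assigned in $Y$, say to $h$, but left unassigned by the algorithm. Replaying iteration $k$, the partial allocation $X^{(k-1)}$ at that moment satisfies $X^{(k-1)}_{h'}\subseteq X_{h'}\subseteq Y_{h'}$ for all $h'$; submodularity (MRF property (c)) together with non-redundancy of $Y$ gives $\Delta_h(X^{(k-1)}_h,d_k)\ge \Delta_h(Y_h-d_k,d_k)=1$, so condition (a) held, and $Y$ itself witnesses condition (b) for placing $d_k$ at $h$. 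Thus $h$ was an admissible choice and $d_k$ would not have been skipped---a contradiction.

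For (iv) I would invoke \Cref{obs:blocking}, reducing stability to ruling out a pair $d,h$ with $\Delta_h(X_h,d)=1$ and $h\succ_d X(d)$. If $d$ is unassigned this contradicts (iii) immediately, since moving $d$ to $h$ raises \USW. If $d$ is assigned, let $\hat Y$ be $X$ with $d$ moved from $X(d)$ to $h$; it loses $\Delta_{X(d)}(X_{X(d)}-d,d)=1$ and gains $\Delta_h(X_h,d)=1$, so $\hat Y$ is again max-\USW and non-redundant. Now, because $h\succ_d X(d)$, hospital $h$ was examined strictly before $X(d)$ while $d$ was still unassigned; writing $X'$ for the partial allocation then, $X'_h\subseteq X_h$ and submodularity yield $\Delta_h(X'_h,d)=1$ (condition (a)), while $\hat Y$ witnesses condition (b) for $X'$ (using $d\notin X'$ to get $X'_{X(d)}\subseteq X_{X(d)}-d=\hat Y_{X(d)}$). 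Hence $d$ should have been placed at $h$ or better, contradicting $X(d)\prec_d h$.

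Finally, (i) is the standard serial-dictatorship argument: when $d_i$ is processed the partial allocation built from $d_1,\dots,d_{i-1}$ is independent of $d_i$'s report, and both admissibility conditions are evaluated against this fixed allocation---crucially, condition (b) depends only on hospital valuations because \USW ignores doctor preferences---so the set of hospitals admissible for $d_i$ is report-independent and $d_i$ is matched to the first admissible hospital in its reported order; truthfulness therefore secures the most preferred admissible hospital. For (ii), maximizing hospital-\USW is a matroid-union computation (one matroid per hospital) and hence polynomial, and the constrained variant underlying condition (b)---fixing the committed doctors and $d_i\mapsto h$---is the same computation after contraction; with $O(mn)$ iterations, each doing $O(1)$ value queries and matroid computations, the bound follows. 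The main obstacle is the correctness claim (iii): the delicate point is that the greedy, myopic admissibility test never traps the mechanism below the optimum, which I handle through the skipped-doctor exchange above, and this same test must simultaneously serve as the maintained invariant and be polynomially checkable for (ii).
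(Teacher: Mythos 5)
Your proposal is correct and follows essentially the same route as the paper: the maintained extendability invariant, the exchange/replay argument showing a blocking pair (or skipped doctor) would have made some hospital admissible earlier, the report-independence of the admissible set for doctor-SP, and the reduction of the admissibility test to a polynomial matroid computation on contracted rank functions. Your treatment of (iii) is in fact slightly more explicit than the paper's (which asserts that no further doctors can be assigned at termination without spelling out the skipped-doctor exchange), but the underlying argument is the same.
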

\begin{proof}
We divide this proof into a series of claims. 
\begin{claim}
\Cref{algo:red-jacket} outputs an allocation maximizes hospital-\USW. 
\end{claim}
\begin{proof}
At every round, we maintain the invariant that there exists a socially optimal allocation $Y$ so that each hospital $h$'s assigned bundle of doctors $X_h$ is contained in $Y_h$. Thus, when the algorithm terminates, no additional doctors can be assigned, and the resulting allocation maximizes hospital-\USW.
\end{proof}

\begin{claim}
The mechanism is doctor-SP.
\end{claim}
\begin{proof}
The mechanism is doctor-SP since each doctor $d$'s preferences are only used to assign them to their highest ranked hospital that maintains non-redundancy and satisfies the max \USW invariant. Since the max \USW invariant and non-redundancy do not depend on the doctor $d$'s preferences, misreporting preferences can only lead to a worse assignment for $d$ (and any doctor by extension).
\end{proof}

\begin{claim}
\Cref{algo:red-jacket} computes a stable allocation.
\end{claim}
\begin{proof}
A new doctor only gets added to a hospital's allocation if it does not violate non-redundancy. Therefore, the final allocation must be non-redundant. 

Assume for contradiction that the output allocation (say $X$) has a blocking pair. By \Cref{obs:blocking}, there must be some doctor $d$ and some hospital $h$ such that $\Delta_h(X_h, d) = 1$ and $h \succ_d X(d)$. If $d$ is not allocated in $X$, then we get that $d$ can be added to $X_h$, and the social welfare strictly increases, contradicting the fact that the algorithm finds a hospital-\USW maximizing allocation. Therefore, assume that $h_d= X(d)$ is the hospital $d$ was assigned by \Cref{algo:red-jacket}. 
Suppose that we move the doctor $d$ from $X_{h_d}$ to $X_h$. 
This creates a new non-redundant allocation $X'$ where $X'_{h_d} = X_{h_d}-d$ and $X_{h}' = X_{h}+d$, and the remaining bundles are the same. Since the output of \Cref{algo:red-jacket} is non-redundant and $\Delta_{h}(X_{h},d)=1$, we have that $X'$ is also hospital-\USW maximizing: hospital $h_d$ lost a utility point and hospital $h$ gained one. 
    
Let $X^t$ be the interim allocation at the beginning of the $t$-th iteration of \Cref{algo:red-jacket}. Since doctors are never unassigned from hospitals, we have that for every hospital $h \in H$, $X_h^t \subseteq X_h^{t+1}$. 
Consider the iteration $t$ where $d$ was allocated and let $X^t$ be the allocation at the start of this iteration.  
Since $X_h^t$ is a subset of the final allocation $X_h$ to $h$ and $\Delta_h(X_h,d)= 1$, we must have $\Delta_h(X^t_h, d) = 1$ by submodularity. 
By construction there exists a max hospital-\USW allocation $X'$ with $X_{h'} \subseteq X'_{h'}$ for all $h' \ne h$, and $X_h^t + d \subseteq X'_h$. 
This implies that the hospital $h_d$ allocated by \Cref{algo:red-jacket} must satisfy $h_d \succ_d h$ contradicting the definition of a blocking pair. 
So, no such blocking pair exists, and the allocation is stable.
\end{proof}

\begin{claim}
\Cref{algo:red-jacket} runs in polynomial time.
\end{claim}
\begin{proof}
\Cref{algo:red-jacket} runs a \texttt{for} loop over the doctors and then examines hospitals in decreasing order of doctor prefernece in another \texttt{for} loop, for a maximum of $nm$ iterations. 
In each iteration, we test whether there exists some hospital-\USW maximizing allocation $Y$ such that $X_h \subseteq Y_h$ for all $h \in H$.   
Thus, to prove our claim we need to show that it is possible to compute the maximum hospital-\USW possible given a partial allocation $X$. 
Computing the maximum hospital-\USW allocation can be done in polynomial time using value queries, see e.g., \citet{benabbou2020finding}, who reduce the problem to the matroid intersection problem \cite{edmonds1979matroidintersection}. 

We also note that for any $T \subseteq D$, if $v_h$ is a matroid rank function, $v'_h$ defined as $v'_h(S) = v_h(T \cup S) - v_h(T)$ is also a matroid rank function. 
Therefore, we can compute the maximum hospital-\USW possible with hospital valuations $v'_h$ defined as $v'_h(S) = v_h(X_h \cup S) - v_h(X_h)$ for each $h \in H$. We can therefore efficiently determine if there exists a max hospital-\USW $Y$ such that for each $h$, $X_h \subseteq Y_h$. 
We conclude that \Cref{algo:red-jacket} runs in polynomial time.
\end{proof}
\end{proof}


\corredjacketadditive*
\begin{proof}
To prove this, we prove the following improved version of Lemma \ref{lem:manip-half}.

\begin{lemma}
Consider some hospital $h$. Fixing the reports of all other hospitals, let $X$ and $Y$ be the resulting allocation when $h$ reports $f_T$ and $v_h$, respectively. When agents report capped additive valuations, if $v_h(T)=|T|$ and $X_h=T$, then $|Y_h| \ge |T|$.
\end{lemma}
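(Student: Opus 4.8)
The plan is to reprove \Cref{lem:manip-half} with the sharper conclusion $|Y_h|\ge|T|$, reusing its reversible-path-augmentation machinery but exploiting the structure of capped additive valuations to eliminate the lost factor of two. As before, call the profile in which $h$ reports $f_T$ (and receives $X_h=T$) the \emph{misreported} profile and the profile in which $h$ reports $v_h$ (and receives $Y_h$) the honest profile; both $X$ and $Y$ are non-redundant, and $v_h(T)=|T|$ forces $T$ to consist of value-one doctors with $|T|\le b_h$. We may assume $|Y_h|<b_h$, since otherwise $|Y_h|=b_h\ge|T|$ and we are done. The target is the injection $X_h\setminus Y_h\hookrightarrow Y_h\setminus X_h$, which yields $|Y_h|\ge|X_h|=|T|$ at once.

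The first and crucial step is to use the defining feature of capped additivity: at every hospital $h'$, all value-one doctors are interchangeable across its $b_{h'}$ slots. This lets me fix the slot matchings of $X$ and $Y$ so that every doctor in $X_{h'}\cap Y_{h'}$ occupies the \emph{same} slot under both matchings. Consequently no common doctor ever sits in a slot that is filled under $X_h$ but empty under $Y_h$, so (generically) the slots filled in $X_h$ yet empty in $Y_h$ are occupied exactly by the doctors of $X_h\setminus Y_h$. This is precisely the step that fails for general binary OXS valuations, where a shared doctor can be forced into different slots under $X$ and $Y$ and thus spuriously start a transfer path — the origin of the factor of two in \Cref{lem:manip-half}.

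Next I would build, for each $d\in X_h\setminus Y_h$, the same transfer path as in \Cref{lem:manip-half}, pushing doctors toward their $Y$-positions; interchangeability keeps the paths vertex-disjoint. Then, exactly as in \Cref{lem:manip-half} via \Cref{obs:hwsd-ordering}, I argue that every path must return to $h$: if some path failed to deposit a doctor of $Y_h\setminus X_h$ back at $h$, the transferred allocation $X'$ would be non-redundant for the misreported profile (its bundle at $h$ is a subset of $T$), and reversing that path on $Y$ would produce an allocation $Y'$ that is non-redundant for the honest profile and strictly higher in the HWSD ordering, contradicting \Cref{obs:hwsd-ordering}; the re-addition of $d$ to $h$ in $Y'$ is legitimate precisely because $|Y_h|<b_h$, and capped additivity makes every intermediate swap of one value-one doctor for another automatically preserve non-redundancy. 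Since the paths are disjoint and each returns a distinct doctor of $Y_h\setminus X_h$, this gives the required injection and hence $|Y_h|\ge|T|$.

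The main obstacle I anticipate is the case $|X_h\cup Y_h|>b_h$, where $h$ simply lacks enough slots to place all of $X_h\setminus Y_h$ in slots left empty by $Y_h$, so some $d\in X_h\setminus Y_h$ is forced to share a slot with a doctor $d'\in Y_h\setminus X_h$. Such a $d$ starts no genuine transfer path, but the shared slot pairs $d$ with $d'$ directly. The crux is then to show that these direct, shared-slot pairings together with the transfer-path pairings still constitute a \emph{single} injection into $Y_h\setminus X_h$ — that no doctor of $Y_h\setminus X_h$ is counted twice — again leaning on interchangeability and on $|Y_h|<b_h$. Verifying that the two kinds of pairings never collide is the heart of upgrading the guarantee from $|T|/2$ to $|T|$, and thereby from $2$-approximate to exact hospital strategyproofness of HWSD under capped additive valuations.
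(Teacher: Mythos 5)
Your proposal is a genuinely different route from the paper's, and it contains an acknowledged but unresolved gap at exactly the step that carries the whole improvement. You try to upgrade the counting argument of \Cref{lem:manip-half} to a full injection $X_h\setminus Y_h\hookrightarrow Y_h\setminus X_h$ by combining transfer-path pairings with direct shared-slot pairings, and you yourself note that "verifying that the two kinds of pairings never collide is the heart" of the argument --- but you never verify it. The collision is not hypothetical: take $d\in X_h\setminus Y_h$ sitting in a slot $q$ that $Y_h$ fills with $d'\in Y_h\setminus X_h$, so you direct-pair $d\mapsto d'$. A transfer path started from some other doctor of $X_h\setminus Y_h$ can reach $d'$, push $d'$ into slot $q$ (its $Y$-position), thereby evict $d$, and the doctor witnessing that path's failure of non-redundancy is again $d'$. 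Your two pairing schemes then both consume $d'$, and the injection breaks unless you re-do the bookkeeping (e.g., argue that when this happens $d$ is absorbed into the path and should not be direct-paired). As written, the proposal is a plan with its central step missing, so it does not yet establish $|Y_h|\ge|T|$.

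For comparison, the paper avoids counting altogether. It assumes $|Y_h|<|T|$ for contradiction, observes via \Cref{obs:hwsd-ordering} that $\USW(Y)\ge\USW(X)$ so some hospital $h'$ has $|Y_{h'}|>|X_{h'}|$, and builds a \emph{single} path from a doctor in $Y_{h'}\setminus X_{h'}$ that moves doctors to their $X$-hospitals, continuing only while a capacity is exceeded. Capped additivity enters twice: paths need no slot bookkeeping (any value-one doctor fits any free slot), and since $|Y_h|<|T|\le b_h$ the hospital $h$ has slack under $Y$, so it can appear only as the terminal node of the path. Hence reversing the path on $X$ never adds a doctor outside $T$ to $h$, the reversed allocation is non-redundant for the misreported profile, and the HWSD-ordering comparison on the path doctors yields an allocation $Y'\succ Y$ under the true preferences --- a contradiction. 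That localization of $h$ to the end of one path is what replaces your collision analysis; if you want to salvage your injection-based argument you would need to supply the analogous structural fact for your family of paths.
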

\begin{proof}
Similar to Lemma \ref{lem:manip-half}, we call the preference profile where $h$ reports $f_T$ the {\em misreported preferences}, and the profile where $h$ reports $f_T$ the {\em true preferences}.
Assume for contradiction that $|Y_h| < |T|$. Since the allocation $X$ is non-redundant with respect to the true preferences, it must be worse according to the HWSD ordering than $Y$. This implies $\USW(Y) \ge \USW(X)$. 

Therefore, since $|Y_h| < |T| = |X_h|$, there must be some hospital $h'$ such that $|Y_{h'}| > |X_{h'}|$. Let $d$ be some doctor in $Y_{h'} \setminus X_{h'}$. We create a path starting from the doctor $d$ just like we have done in the previous lemmas. However, since all agents have capped additive valuations, we can be less rigorous with how we define these paths. We move the doctor $d$ to the hospital $h_1$ it is allocated to in the allocation $X$. If this transfer results in $h_1$'s bundle exceeding its capacity, then there must be some doctor $d_1$ in $Y_{h_1}$ which is allocated to some other hospital (say $h_2$) in $X$. We repeat this process with $d_1$ till we make a transfer that respects the capacity constraints of each hospital. This creates a path $P = (h_0 = h', d_1 = d, h_1, d_2, \dots, d_k, h_{k+1})$. We define a transfer along this path as moving the doctor $d_i$ from hospital $h_{i-1}$ to $h_i$. The transfer along the path $P$ when applied to the allocation $Y$ gives us an allocation $Y'$ which is non-redundant with respect to both the true preferences. 

Since $|Y_h| < |X_h|$, the only place that hospital $h$ could appear in this path is right at the end (at $h_{k+1}$), since an addition of a single doctor to hospital $h$ (under the allocation $Y$) will not violate its capacity constraints. This means if we reverse the path and apply it to the allocation $X$, the new allocation (say $X'$) will be non-redundant with respect to the misreported preferences.  

Since $X$ and $X'$ are non-redundant with respect to the misreported preferences $X$ must be better than $X'$ according to the HWSD ordering (\Cref{obs:hwsd-ordering}). Since the only difference between the two allocations are the agents on the path, the allocation $X' \cap D_P$ is worse than the allocation $X \cap D_P$ according to the HWSD ordering. We now observe that $X' \cap D_P$ is equal to $Y \cap D_P$ and $X \cap D_P$ is equal to $Y' \cap D_P$. This implies $Y \cap D_P$ is worse than $Y' \cap D_P$ according to the HWSD ordering. Since the allocations $Y$ and $Y'$ only differ in the assignments of the doctors in $D_P$, this implies $Y$ is worse than $Y'$ is worse according to the HWSD ordering. This contradicts the fact that Algorithm \ref{algo:red-jacket} outputs $Y$ under the true preferences.
\end{proof}

Plugging this improvement into the proof of Theorem \ref{thm:red-jacket-strategyproofness} gives us the required result. 
\end{proof}

\section{Missing Proofs for the SD Algorithm}\label{apdx:sd}
\begin{theorem}\label{thm:serial-dictatorship-apdx}
The doctor serial dictatorship mechanism has the following properties:
\begin{enumerate}[(i)]
    \item The mechanism is doctor strategyproof
    \item The output allocation is stable
    \item The algorithm runs in polynomial time
    \item The output allocation is $2$-approximately max hospital-\USW.
\end{enumerate}
\end{theorem}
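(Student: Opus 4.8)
The plan is to establish each of the four properties in turn, noting that none requires the heavy machinery used for the hospital-strategyproofness analysis: properties (i)--(iii) follow directly from the structure of the proposal process, and (iv) is an immediate corollary of the stability guarantee combined with \Cref{lem:stable_welfare}.

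For doctor strategyproofness (i) and running time (iii), I would argue as follows. When the mechanism reaches doctor $d_i$, the bundles of all hospitals are completely determined by the (already finalized) allocations of doctors $d_1, \dots, d_{i-1}$ and are independent of $d_i$'s report. Hence the set of hospitals willing to accept $d_i$ --- those $h$ with $\Delta_h(X_h, d_i) = 1$ --- is fixed regardless of what $d_i$ reports, and by proposing in truthful decreasing order $d_i$ is matched to their most preferred hospital in this set. Any misreport can only move $d_i$ to a less preferred hospital or leave them unmatched, so truthful reporting is a dominant strategy. Polynomial running time is then immediate: the two nested loops run for at most $nm$ iterations, and each iteration uses a constant number of value queries to evaluate $\Delta_h(X_h, d_i)$, which we assume are answerable in polynomial time.

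For stability (ii), non-redundancy holds by construction, since a doctor is added to $h$ only when $\Delta_h(X_h, d_i) = 1$. To rule out blocking pairs I would invoke \Cref{obs:blocking}: it suffices to show there is no doctor $d$ and hospital $h$ with $\Delta_h(X_h, d) = 1$ and $h \succ_d X(d)$. The key structural fact is that hospital bundles only grow during the run, so at the moment $d$ proposed to $h$ the bundle was some $X_h' \subseteq X_h$. Since $h \succ_d X(d)$, doctor $d$ must have proposed to $h$ before being accepted by the less-preferred hospital $X(d)$. By submodularity (property (c) of MRFs) we have $\Delta_h(X_h', d) \ge \Delta_h(X_h, d) = 1$, so $h$ would have accepted $d$ at that earlier point, contradicting the fact that $d$ ended up at $X(d)$. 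Hence no blocking pair exists and the allocation is stable.

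Finally, (iv) follows with no additional work: since the output is stable by (ii) and hospitals have matroid rank valuations, \Cref{lem:stable_welfare} guarantees it is a $2$-approximation to the maximum hospital-\USW. The main (and rather mild) obstacle is the stability argument, which hinges on combining the monotone growth of hospital bundles with the diminishing-returns property to conclude that any profitably-deviating doctor would already have been accepted by the blocking hospital earlier in the run; the remaining parts are essentially bookkeeping.
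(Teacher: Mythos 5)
Your proposal is correct and follows essentially the same route as the paper's proof: non-redundancy by construction, blocking pairs ruled out by combining the monotone growth of hospital bundles with submodularity (so the blocking hospital would have accepted the doctor at the earlier proposal), doctor-strategyproofness from the fact that prior allocations are independent of a doctor's report, and the welfare bound as a direct corollary of \Cref{lem:stable_welfare}. No substantive differences.
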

\begin{proof}
Again, we prove this using a series of claims. Note that (iv) follows from \Cref{lem:stable_welfare}. 

\begin{claim}
Algorithm \ref{algo:doctor-round-robin-revisited} runs in polynomial time.  
\end{claim}
\begin{proof}
\Cref{algo:doctor-round-robin-revisited} runs in polynomial time since we simply go down the doctors' preference lists in decreasing order, and run a poly-time procedure to check whether each hospital is willing to accept the doctor. 
\end{proof}

\begin{claim}
\Cref{algo:doctor-round-robin-revisited} outputs a stable allocation.   
\end{claim}
\begin{proof}
A new doctor $d$ only gets added to a bundle $X_h$ if $\Delta_h(X_h,d) = 1$. 
Therefore the final allocation is non-redundant. 

If there is a blocking pair, there must be some $d$ and $h$ such that $\Delta_h(X_h, d) = 1$ and $h \succ_d X(d)$. Assuming for contradiction that such a pair exists. 
Consider the iteration where $d$ was allocated and let $Y$ be the allocation at the start of this iteration. By submodularity, we must have $\Delta_h(Y_h, d) = 1$ as well which implies that $X(d) \succ_d h$ contradicting the definition of a blocking pair. So, no such blocking pair exists, and the allocation is stable.
\end{proof}

\begin{claim}
\Cref{algo:doctor-round-robin-revisited} is doctor-SP.
\end{claim}
\begin{proof}
The mechanism is  doctor-SP since each doctor $d$ has a fixed iteration where they are assigned, and at this iteration they are assigned to their most preferred hospital who can accept them. 
As the hospitals' allocations at the iteration where the doctor $d$ is allocated is independent of the doctor's preferences, the doctor $d$ cannot obtain a strictly better hospital by misreporting their preferences.    
\end{proof}

\end{proof}


\section{Missing Proofs from Section \ref{sec:cardinal-utilities}}\label{apdx:cardinal-utilities}

\thmmaxnashwelfare*
\begin{proof}
We first show that the algorithm terminates in polynomial time. To do this, we show that the \textbf{while} loop terminates in polynomial time. This is sufficient since we run the \textbf{while} loop for $m$ different values of $k$. 

We use a potential function argument to bound the number of \textbf{while} loop iterations. Every time we change the allocation, we strictly reduce $\sum_{i \in H} \left (|X_i| + \frac{i}{n^2}\right)^2$. Therefore, the while loop can only run $\cal O(m^2n^2)$ times. We note that this exact potential function argument has been used before in \citet{babaioff2021EF} and \citet{cousins2023mixedmanna}

Next, we prove the correctness of our algorithm.
Each allocation $X^k$ computed by the algorithm maximizes doctor-\USW subject to the constraint that total hospital-\USW is $k$. Since we pick an allocation $X^k$ that maximizes doctor-\USW, the allocation $X$ output by \Cref{algo:max-nash-welfare} maximizes doctor-\USW.
Next, we show that $X$ maximizes hospital Nash welfare. 
Let $Y$ be an allocation that 
\begin{inparaenum}[(a)]
    \item maximizes doctor-\USW, 
    \item hospital-\NSW  subject to (a), and 
    \item lexicographically dominates all other allocations that satisfy (a) and (b). 
\end{inparaenum}
Furthermore, assume that $Y$ has a hospital social welfare of $k$. 

Let us consider the allocation $X^k$ that Algorithm \ref{algo:max-nash-welfare} computes. 
Assume for contradiction that $X^k \ne Y$. Since $X^k \ne Y$, there is some hospital $i$ such that $|X_i^k| < |Y_i|$. If there are multiple such hospitals, we break ties by choosing the $i$ with lower $|X_i|$, and breaking further ties by choosing the lower $i$. 
Since both $X^k$ and $Y$ have a total hospital-\USW of $k$, there is some hospital $j$ such that $|Y_j| < |X_j^k|$; again, if there are multiple such hospitals, we break ties by choosing the $j$ with lower $|Y_j|$, and breaking further ties by choosing the lower $j$. . 

We have four possible cases, each leading to a contradiction proving that $X^k = Y$.

\begin{description}[leftmargin=0cm]
    \item[Case 1: $|X_i^k| = |Y_j|$ and $i <j$.] We apply \Cref{lem:max-nash-welfare-main} with the allocation $X^k$ and hospital $i$ to find a hospital $\ell$ and a $k$-MDW allocation $Z$ such that the hospital $i$ gets one more doctor, the hospital $\ell$ gets one less doctor and all other hospitals have the same bundle size. We have $|X^k_{\ell}| - 1 \ge |Y_{\ell}| \ge |Y_j|  = |X^k_i|$. 
    If any of these inequalities is strict, then $|X^{k}_{\ell}| \ge |X^k_i| + 2$; in this case, \Cref{algo:max-nash-welfare} reduces the capacity of $\ell$ by 1 and increases the capacity of $i$ by $1$, and moves from the allocation $X^k$ to $Z$, which contradicts $X^k$ being the resulting allocation. 
    If equality holds throughout, then $|X^k_{\ell}| = |X^k_i| + 1$ and $\ell >j > i$, which again contradicts the logic of \Cref{algo:max-nash-welfare}.
    \item[Case 2: $|X^k_i| < |Y_j|$.] This case follows a similar argument to that of Case 1. 
    \item[Case 3: $|X^k_i| = |Y_j|$ and $j < i$.] In this case, we plug in Lemma \ref{lem:max-nash-welfare-main} with allocation $Y$ and hospital $j$ to find a hospital $\ell$ and a $k$-MDW allocation $Z$ such that $j$ gets one more doctor, $\ell$ gets one less doctor and all other hospitals have the same bundle size. We have $|Y_{\ell}| - 1 \ge |X^k_{\ell}| \ge |X^k_i|  = |Y_j|$. 
    If any of these inequalities is strict, then $|Y_{\ell}| \ge |Y_j| + 2$ which means $Z$ has a higher hospital Nash welfare than $Y$, a contradiction to $Y$ being hospital-\NSW optimal. 
    If equality holds throughout, then $|Y_{\ell}| = |Y_j| + 1$ and $\ell > i> j$, which implies that $Z$ is hospital-\NSW optimal as well but lexicographically dominates $Y$, which again violates our assumption on $Y$.
    \item[Case 4: $|X^k_i| > |Y_j|$.] The proof of this case follows a similar argument to that of Case 3. 
\end{description}
Therefore, $X^k = Y$, and the Nash welfare of our output allocation $X$ is at least that of $Y$.
\end{proof}

\end{document}